\newtheorem{Example}{Example}[section]
\newtheorem{Theorem}{Theorem}[section]
\newtheorem{Theorem/Definition}{Theorem/Definition}[section]
\newtheorem{Proposition}{Proposition}[section]
\newtheorem{Lemma}{Lemma}[section]
\newtheorem{Corollary}{Corollary}[section]
\newtheorem{Conjecture}{Conjecture}[section]
\newcommand{\pd}{\partial}
\newcommand{\bC}{{\mathbb C}}
\newcommand{\bZ}{{\mathbb Z}}
\newcommand{\cC}{{\mathcal C}}
\newcommand{\cF}{{\mathcal F}}
\newcommand{\cH}{{\mathcal H}}
\newcommand{\cM}{{\mathcal M}}
\newcommand{\half}{\frac{1}{2}}
\newcommand{\Mbar}{\overline{\cM}}
\newcommand{\wA}{{\widehat A}}
\newcommand{\tK}{{\widetilde K}}
\newcommand{\tB}{{\widetilde{B}}}
\newcommand{\tD}{{\widetilde{D}}}
\newcommand{\tE}{{\widetilde{E}}}
\newcommand{\hy}{{\hat y}}
\newcommand{\tomega}{{\tilde{\omega}}}
\newcommand{\be}{\begin{equation}}
	\newcommand{\ee}{\end{equation}}
\newcommand{\bea}{\begin{eqnarray}}
	\newcommand{\ben}{\begin{eqnarray*}}
		\newcommand{\een}{\end{eqnarray*}}
	\newcommand{\eea}{\end{eqnarray}}
\DeclareMathOperator{\Res}{Res}
\DeclareMathOperator{\Pf}{Pf}
\definecolor{yellow}{rgb}{1,1,0}
\definecolor{orange}{rgb}{1,.7,0}
\definecolor{red}{rgb}{1,0,0}
\definecolor{green}{rgb}{0,1,1}
\definecolor{white}{rgb}{1,1,1}
\definecolor{A}{rgb}{.75,1,.75}
\theoremstyle{remark}
\newtheorem{Remark}{Remark}[section]
\begin{document}
	
	\newtheorem{myDef}{Definition}
	\newtheorem{thm}{Theorem}
	\newtheorem{eqn}{equation}

	\title[Emergent Geometry of Generalized BGW Tau-Functions]
	{BKP-Affine Coordinates and Emergent Geometry of Generalized Br\'ezin-Gross-Witten Tau-Functions}

	\author{Zhiyuan Wang}
	\address{School of Mathematics and Statistics,
		Huazhong University of Science and Technology,
		Wuhan, 430074, China}
	\email{wangzy23@hust.edu.cn}

	\author{Chenglang Yang}
	\address{Hua Loo-Keng Center for Mathematical Sciences,
		Academy of Mathematics and Systems Science,
		Chinese Academy of Sciences,
		Beijing, 100190, China}
	\email{yangcl@pku.edu.cn}

	\author{Qingsheng Zhang}
	\address{School of Sciences,
		Great Bay University,
          Dongguan, 523000, China}
	\email{zhangqingsheng@gbu.edu.cn}

	\begin{abstract}
		
		Following Zhou's framework,
		we consider the emergent geometry of the generalized Br\'ezin-Gross-Witten models
		whose partition functions are known to be a family of tau-functions of the BKP hierarchy.
		More precisely,
		we construct a spectral curve together with its special deformation,
		and show that the Eynard-Orantin topological recursion on this spectral curve
		emerges naturally from the Virasoro constraints for the generalized BGW tau-functions.
		Moreover,
		we give the explicit expressions for the BKP-affine coordinates of these tau-functions
		and their generating series.
		The BKP-affine coordinates and the topological recursion provide two different approaches
		towards the concrete computations of the connected $n$-point functions.
		Finally,
		we show that the quantum spectral curve of type $B$ in the sense of Gukov-Su{\l}kowski
		emerges from the BKP-affine coordinates and Eynard-Orantin topological recursion.
		
	\end{abstract}
	
	\maketitle

	\tableofcontents

	\section{Introduction}

	\subsection{Backgrounds and overview}
	
	The reduction/reconstruction and the emergence approach are two different philosophies in science,
	see e.g. \cite{an, la}.
	The later is now widely used in the study of condensed-matter physics.
	In recent several years,
	J. Zhou has initiated a series of mathematical works to introduce the emergent point of view
	into the study of Gromov-Witten type theories and mirror symmetry,
	see \cite{zhou1, zhou9, zhou10, zhou8, zhou7}.
	His framework is called the emergent geometry.
	A typical question in this framework is that,
	given a Gromov-Witten type theory,
	how can one find the associated B-model geometry
	which plays the role of a mirror?
	In his works,
	Zhou has examined several examples and
	proposed that a spectral curve (together with a deformation called the special deformation)
	emerges naturally from the connected one-point functions of the partition function,
	and the Eynard-Orantin topological recursion \cite{eo} emerges naturally from the Virasoro constraints.
	It is worth mentioning that in these examples
	the partition function can be uniquely determined by the Virasoro constraints,
	and thus Zhou took the Virasoro constraints as the starting point of the emergent geometry.

	Moreover,
	the partition functions in Zhou's examples are all tau-functions of the KP hierarchy.
	There are two main advantages to study theories specified by such tau-functions.
	The first advantage is that a tau-function $\tau(\bm t)$ of the KP hierarchy
	satisfying the initial value condition $\tau(0)=1$
	can be described uniquely by a set of (complex) numbers $\{a_{n,m}\}_{n,m\geq 0}$
	called the (KP-)affine coordinates \cite{zhou1, by},
	which are natural coordinates on the big cell of the Sato Grassmannian \cite{sa, sw, djm}.
	One can carry out a lot of explicit computations using the affine coordinates
	of such a tau-function.
	Another advantage is that,
	the Sato Grassmannian (i.e., the set of all tau-functions of the KP hierarchy)
	is an infinite-dimensional homogeneous space,
	and thus one can find the duality among different tau-functions
	(and the theories specified by them)
	using the action of the infinite-dimensional Lie group $\widehat{GL(\infty)}$,
	see e.g. \cite{zhou5, zhou7}.

	Now in this work,
	we give an example of the emergent geometry associated to a tau-function of the BKP hierarchy.
	The BKP hierarchy is an integrable system introduced by the Kyoto School \cite{djkm, jm}
	which shares a lot of common features with the KP hierarchy.
	In particular,
	tau-functions of the BKP hierarchy can be represented as summations
	of Schur $Q$-functions \cite{sch}, see \cite{yo};
	and a tau-function $\tau(\bm t)$ satisfying $\tau(0)=1$ can be described uniquely
	in terms of its BKP-affine coordinates on the big cell of the isotropic Sato Grassmannian,
	see \cite[\S 7]{hb} and \cite{wy}.
	See e.g. \cite{jwy, kv, or, va, al5} for more about the BKP hierarchy.

	Our object of study in this work is the generalized Br\'ezin-Gross-Witten (BGW) models.
	The partition functions of the generalized BGW models \cite{mms}
	are known to be a family of tau-functions of the BKP hierarchy
	indexed by a complex parameter $N\in \bC$  \cite{mms, al5}.
	When $N=0$,
	this model becomes the original BGW model \cite{bg, gw},
	which is known to be related to certain intersection numbers on
	the Deligne-Mumford moduli spaces of stable curves \cite{dm, kn},
	see e.g. \cite{no, cgg} and \cite{yz}.
	In this work,
	we show that the spectral curve together with the Eynard-Orantin topological recursion on it
	emerges naturally from the Virasoro constraints \cite{al2} for these tau-functions.
	Moreover,
	we will give the explicit expressions of the BKP-affine coordinates of these tau-functions,
	and show that the quantum spectral curve (in the sense of Gukov-Su{\l}kowski \cite{gs})
	associated to the spectral curve
	emerges naturally from the BKP-affine coordinates and the E-O topological recursion.
	We summarize our plan of this paper in the following diagram:
	\begin{equation*}
		\begin{tikzpicture}
			\node [align=center,align=center] at (0,0.3) {Virasoro constraints\\for gBGW tau-functions};
			\draw [->] (-2,-0.2) -- (-2.8,-0.55);
			\node [align=center,align=center] at (-2.5,-0.2) {(I)};
			\node [align=center,align=center] at (-3.2,-1) {Schur $Q$-function expansion};
			\draw [->] (2,-0.2) -- (2.8,-0.55);
			\node [align=center,align=center] at (3.2,-1) {Spectral curve and\\its special deformation};
			\draw [->] (-3.2,-1.45) -- (-3.2,-1.85);
			\draw [->] (3.2,-1.5) -- (3.2,-1.9);
			\node [align=center,align=center] at (-3.2,-2.2) {BKP-affine coordinates};
			\node [align=center,align=center] at (3.2,-2.2) {E-O topological recursion};
			\draw [->] (-3.2,-2.6) -- (-3.2,-3);
			\draw [->] (3.2,-2.6) -- (3.2,-3);
			\node [align=center,align=center] at (-3.2,-3.6) {Computation of connected\\$n$-point functions};
			\node [align=center,align=center] at (3.2,-3.7) {Quantum spectral curve\\of type $B$ (in the sense\\of Gukov-Su{\l}kowski)};
			\draw [->] (-0.9,-2.5) -- (1,-3.15);
			\draw [->] (1,-2.5) -- (-0.9,-3.15);
		\end{tikzpicture}
	\end{equation*}
	Here the Virasoro constraints for the generalized BGW tau-functions
	were derived by Alexandrov in \cite{al2},
	and Step (I) has been accomplished by X. Liu and the second author in \cite{ly}
	(see also \cite{al3} for another approach to the Schur $Q$-function expansion).
	All other steps will be addressed in this work.

	\subsection{Description of main results}

	Now we briefly summarize our main results in this work.

	Denote by $\tau_{\text{BGW}}^{(N)}(\bm t)$ the generalized BGW tau-function
	indexed by the parameter $N\in \bC$,
	where $\bm t = (t_1,t_3,t_5,\cdots)$ are the coupling constants.
	Then the free energy $\log \tau_{\text{BGW}}^{(N)}$ admits a genus expansion
	(see e.g. \cite[(112)]{al2}):
	\begin{equation*}
		\log \tau_{\text{BGW}}^{(N)}
		(\hbar^{-1}\bm t;\hbar) = \sum_{g=0}^\infty \hbar^{2g-2} \cF_g^{\text{gBGW}}(\bm t;S),
	\end{equation*}
	where $S = \hbar\cdot N$ is a parameter for the generalized BGW models.
	Following Zhou's idea,
	we define the special deformation of the spectral curve
	to be the following formal series in $x$ where the coefficients are
	the times variables $\bm t$ (with a dilaton shift)
	and the one-point functions of genus zero:
	\begin{equation*}
		y = \sum_{n\geq 0}(2n+1)(t_{2n+1}-\delta_{n,0}) x^{2n}
		+\sum_{n\geq 0} \frac{\pd \cF_0^{\text{gBGW}}(\bm t)}{\pd t_{2n+1}} \cdot  x^{-2n-2}.
	\end{equation*}
	This is a family of curves on the $(x,y)$-plane parametrized by $\bm t$.
	By restricting to $\bm t =0$,
	we obtain a plane curve which plays the role of spectral curve
	(see \S \ref{sec-curve}):
	\be
	\label{eq-intro-curve}
	x^2y^2 = x^2 + S^2.
	\ee
	We will also discuss the quantum deformation theory of this spectral curve.
	More precisely,
	we show that the Virasoro constraints of genus zero can be encoded in the above special deformation,
	and the Virasoro constraints of all genera can be encoded in  a suitable quantization
	of the special deformation,
	see \S \ref{sec-special}-\S \ref{sec-QDT} for details.

	Then we consider the Eynard-Orantin topological recursion on this spectral curve.
	Notice that the above spectral curve degenerates when $S=0$,
	and in this case one cannot apply the E-O topological recursion directly.
	Nevertheless,
	we can treat $S$ as a formal variable and carry out the E-O recursion first,
	and then the $n$-point functions for the original BGW model can be obtained by restricting to $S=0$.
	See also \cite{dn} for a topological recursion of the original BGW model.
	
	In this work,
	we show that the Eynard-Orantin topological recursion emerges naturally from
	the Virasoro constraints for the generalized BGW tau-functions.
	Our main theorem is the following
	(see \S \ref{sec-bergman-eo} for details):
	\begin{Theorem}
		\label{thm-intro-1}
		Denote by $B(x_1,x_2)$ the following symmetric bi-differential
		on the spectral curve \eqref{eq-intro-curve}:
		\begin{equation*}
			B(x_1,x_2) = \Big( \frac{1}{(x_1-x_2)^2}
			+ W_{0,2}^{\text{gBGW}}(x_1,x_2) \Big)dx_1dx_2,
		\end{equation*}
		where $W_{g,n}^{\text{gBGW}}$ is the connected $n$-point function
		of genus $g$:
		\begin{equation*}
			W_{g,n}^{\text{gBGW}}(x_1,\cdots,x_n) =
			\sum_{i_1,\cdots,i_n\geq 0}
			\frac{\pd^n \cF_g(\bm t;S)}{ \pd t_{2i_1+1} \cdots \pd t_{2i_n+1}}
			\Big|_{\bm t =0} \cdot
			x_1^{-2i_1-2}\cdots x_n^{-2i_n-2}.
		\end{equation*}
		Let $\omega_{g,n}$ (where $g\geq 0,n\geq 1$) be the Eynard-Orantin invariants for
		the spectral curve \eqref{eq-intro-curve} and Bergman kernel $B$,
		then for $(g,n)$ with $2g-2+n>0$ we have:
		\begin{equation*}
			\omega_{g,n} = (-1)^n \cdot
			W_{g,n}^{\text{gBGW}}(x_1,\cdots, x_n) dx_1\cdots dx_n.
		\end{equation*}
	\end{Theorem}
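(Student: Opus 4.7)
The plan is to follow the standard strategy, now in the BKP/BGW setting, of deriving the Eynard--Orantin recursion directly from the DVV-type Virasoro constraints on the $n$-point functions. First I rationally parametrize the spectral curve \eqref{eq-intro-curve}: the map
\begin{equation*}
z \longmapsto \bigl( x(z),\, y(z)\bigr) = \Bigl( \frac{2Sz}{z^2-1},\ \frac{z^2+1}{2z}\Bigr)
\end{equation*}
identifies the compactified normalization with $\bP^1_z$ and exhibits the only two ramification points of $x$ as the fixed points $z = \pm i$ of the global involution $\sigma(z) = -1/z$, which exchanges the two sheets and satisfies $y\circ\sigma = -y$. At each ramification point $dx$ has a simple zero while $(y - y\circ\sigma)\,dx$ has the simple pole behaviour required by the EO kernel, so the geometry is well suited to the recursion.

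Next I would verify the base data. For $(g,n) = (0,1)$ the one-form $\omega_{0,1} = y\,dx$ agrees, after the dilaton shift, with the restriction $\bm t = 0$ of the special deformation constructed earlier in the paper. For $(g,n)=(0,2)$ I would compute $W_{0,2}^{\text{gBGW}}$ from the explicit BKP-affine coordinates established in the preceding sections and confirm that, after pullback through $(x(z_1),x(z_2))$, the bi-differential $B$ coincides with the canonical Bergman kernel $dz_1\,dz_2/(z_1-z_2)^2$ on $\bP^1\times\bP^1$. These two inputs are precisely the initial data of the Eynard--Orantin recursion.

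For the inductive step I proceed by induction on $2g-2+n$. Starting from the Virasoro constraints of \cite{al2}, differentiating $n-1$ times in the odd times $t_{2k+1}$ and specializing to $\bm t = 0$ yields a recursion expressing
$\pd^n\cF_g/\pd t_{2k+1}\pd t_{2i_1+1}\cdots \pd t_{2i_{n-1}+1}\big|_{\bm t = 0}$
as a finite sum of products of lower-complexity derivatives plus unstable-term contributions absorbed into $y\,dx$. Packaging these derivatives into the generating series $W_{g,n}$ converts the recursion into a residue identity at $x = \infty$. Pulling back through the parametrization and rewriting via $\sigma$, I expect the residue at $x = \infty$ to split as the sum of residues at $z = \pm i$, producing exactly the Eynard--Orantin formula with kernel
\begin{equation*}
K(z_0, z) = \frac{1}{2}\cdot \frac{\int_{\sigma(z)}^{z} B(z_0,\,\cdot\,)}{\bigl(y(z)-y(\sigma(z))\bigr)\,dx(z)}.
\end{equation*}
The sign $(-1)^n$ will arise from the orientation reversal between the expansion in negative powers of $x$ about $x=\infty$ and the expansion about the finite ramification points.

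The hard part will be carrying out this identification of residues. On the Virasoro side the recursion is naturally anchored at $x = \infty$ (the only place where the time contribution $\sum (2n+1)t_{2n+1}x^{2n}$ has a pole), whereas the Eynard--Orantin side localizes at the finite ramification points $z = \pm i$; bridging these requires a careful residue/contour argument that uses the parity (odd in $x$) picking out the BKP odd times $t_{2k+1}$. A possibly cleaner alternative is to bypass this direct matching and instead verify the abstract linear and quadratic loop equations for $W_{g,n}$ together with the projection property at the ramification points, and then appeal to the uniqueness of solutions to the EO recursion: the quadratic loop equation should follow from the highest-weight Virasoro constraint applied inductively, while the linear loop equation follows from the $t_{2k+1}$-only dependence of $\tau_{\text{BGW}}^{(N)}$ together with the symmetry $y\circ\sigma = -y$.
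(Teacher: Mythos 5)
Your overall strategy --- recasting the Virasoro constraints as a recursion for the $W_{g,n}^{\text{gBGW}}$ and matching it term by term with the Eynard--Orantin residue formula --- is exactly the paper's strategy, and your fallback via loop equations plus the projection property is a legitimate alternative. However, there are two genuine problems. First, your choice of uniformization is not the one under which the stated Bergman kernel becomes canonical, so the verification you announce in your second paragraph would fail. With $x(z)=2Sz/(z^2-1)$, $z_p:=xy=S(z^2+1)/(z^2-1)$ is a degree-two function of your $z$ (invariant under $z\mapsto -z$, which is the automorphism $(x,y)\mapsto(-x,-y)$ of the curve), and a direct computation shows that $B(x_1,x_2)$ pulls back in your chart to
\begin{equation*}
\Bigl(\frac{1}{(z_1-z_2)^2}-\frac{1}{(z_1+z_2)^2}\Bigr)\,dz_1\,dz_2,
\end{equation*}
not to $dz_1dz_2/(z_1-z_2)^2$. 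The paper instead uses the (formal) parametrization $x=\sqrt{z^2-S^2}$, $y=z/\sqrt{z^2-S^2}$, for which $B$ \emph{is} the canonical kernel, there is a single nondegenerate critical point at $z=0$ with local involution $z\mapsto-z$, and the invariants $\omega_{g,n}$ in the theorem are \emph{defined} by the residue at that one point. Your recursion sums residues over the two fixed points $z=\pm i$ of $\sigma(z)=-1/z$; since these both lie over the paper's single point $z_p=0$ and the map is unramified there, you risk double-counting (or at least you owe an argument, using the $(x,y)\mapsto(-x,-y)$ equivariance, that your two-point recursion with your kernel reproduces the paper's one-point recursion with the canonical kernel).

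Second, the part you defer as ``the hard part'' is in fact the entire content of the paper's proof: after writing $\omega_{g,n}=\sum A^{k_1,\dots,k_n}_{g,n}\prod(2k_i+1)!!\,z_i^{-2k_i-2}dz_i$, converting to coefficients $B^{k_1,\dots,k_n}_{g,n}$ in the $x$-variables via the binomial expansion of $(1+S^2/x^2)^{-k-1/2}$, and extracting the residue at $z=0$, the paper shows after nontrivial binomial manipulations that the resulting recursion for the rescaled coefficients $\tB_{g,n}^{k_1,\dots,k_n}=(-1)^n\prod(2k_i+1)!!\,B_{g,n}^{k_1,\dots,k_n}$ is literally the Virasoro recursion for the correlators $\langle p_{2k_1+1}\cdots p_{2k_n+1}\rangle^c_g$, with the unstable $(0,1)$ and $(0,2)$ terms accounting for the kernel and the dilaton shift; the sign $(-1)^n$ is fixed there, not by an ``orientation reversal'' heuristic. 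Until that identification is carried out (or the loop-equation route is actually executed), the proposal is a plausible plan rather than a proof.
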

	
	Moreover,
	the above topological recursion can also be reformulated
	in terms of another Bergman kernel
	(see \S \ref{sec-berg-B} for details):
	\begin{equation*}
		\tB(x_1,x_2) = \Big( \half\big(\frac{1}{(x_1-x_2)^2} +\frac{1}{(x_1+x_2)^2}\big)
		+ W_{0,2}^{\text{gBGW}}(x_1,x_2) \Big)dx_1dx_2.
	\end{equation*}
	In literatures a symmetric bi-differential of the above form is supposed to be
	the Bergman kernel for the topological recursion associated to
	a tau-function of the BKP hierarchy,
	see e.g. \cite{zhou2, as, gkl}.

	\begin{Remark}
		In \cite{al2},
		Alexandrov has conjectured a topological recursion on another spectral curve
		for the generalized BGW models
		(see \cite[(128); Conjecture 3.7]{al2}):
		\begin{equation*}
			x^2y^2 -x -\frac{S^2}{4} =0.
		\end{equation*}
		His method is to look for an operator annihilating the principal specialization of the tau-function
		and then take the semi-classical limit.
		This curve looks different from \eqref{eq-intro-curve},
		since he made a change of variable $x=\lambda^2$ in the principal specialization
		to represent it in terms of the modified Bessel function
		(see also \cite[\S 3.2.3]{mms}).
		As a byproduct of Theorem \ref{thm-intro-1},
			we will give a simple proof of Alexandrov's conjecture
			by comparing the E-O topological recursions on these two curves,
			see \S \ref{sec-bergman-eo}.
	\end{Remark}

	Besides the Virasoro constraints and topological recursion,
	there is an alternative way to compute the connected $n$-point functions.
	Using the Schur $Q$-expansion given in \cite{ly, al3, al4},
	we are able to write down the explicit formulas for the BKP-affine coordinates
	for the generalized BGW tau-functions.
	Then we can apply a formula proved in \cite{wy} to obtain the following
	(see \S \ref{sec-affinecoord-gbgw} for details):
	\begin{Theorem}
		The connected $n$-point functions associated  to the generalized BGW tau-functions are given by:
		\be
		\label{eq-intro-npt}
		\begin{split}
			&\sum_{i_1,\cdots,i_n> 0: \text{ odd}}
			\frac{\pd^n \log\tau_{\text{BGW}}^{(N)}(\bm t/2)}{\pd t_{i_1}\cdots \pd t_{i_n}} \bigg|_{\bm t=0}
			\cdot x_1^{-i_1}\cdots x_n^{-i_n}
			=
			-\delta_{n,2} \cdot
			\frac{x_1x_2(x_2^2+x_1^2)}{2(x_1^2-x_2^2)^2}\\
			&\qquad\qquad\qquad\qquad\qquad\qquad
			- 2^{n-1}  \cdot \Big[ \sum_{ \sigma: \text{ $n$-cycle} }
			\prod_{i=1}^n \xi( x_{\sigma(i)}, - x_{\sigma(i+1)})
			\Big]_{\text{odd}},
		\end{split}
		\ee
		for $n\geq 1$,
		where $[\cdot]_{\text{odd}}$ means taking the terms of odd degrees in every $x_i$,
		and we use the convention $\sigma(n+1):=\sigma(1)$.
		The function $\xi$ is:
		\begin{equation*}
			\begin{split}
				\xi( x_{\sigma(i)}, - x_{\sigma(i+1)}) =
				\begin{cases}
					A^{(N)} (x_{\sigma(i)}, - x_{\sigma(i+1)}),
					& \sigma(i) = \sigma(i+1);\\
					\wA^{(N)} ( x_{\sigma(i)}, - x_{\sigma(i+1)}),
					& \sigma(i)<\sigma(i+1);\\
					-\wA^{(N)}( - x_{\sigma(i+1)} , x_{\sigma(i)}),
					& \sigma(i)>\sigma(i+1),
				\end{cases}
			\end{split}
		\end{equation*}
		where $A^{(N)}$ and $\wA^{(N)}$ are the generating series of the BKP-affine coordinates
		(i.e., the fermionic $2$-point functions) whose explicit formulas are:
		\begin{equation*}
			\begin{split}
				&A^{(N)}(w,x)
				= \frac{w-x+\Phi_1^{(N)}(-x)\Phi_2^{(N)}(-w)-\Phi_1^{(N)}(-w)\Phi_2^{(N)}(-x)}{4(w+x)},\\
				&\wA^{(N)}(w,x)
				= \frac{\Phi_1^{(N)}(-x)\Phi_2^{(N)}(-w)-\Phi_1^{(N)}(-w)\Phi_2^{(N)}(-x)}{4(w+x)},
			\end{split}
		\end{equation*}
		where $\Phi_1^{(N)}$, $\Phi_2^{(N)}$ are the formal Laurent series:
		\begin{equation*}
			\begin{split}
				&\Phi_1^{(N)} (z) = 1+
				\sum_{k=1}^\infty \frac{(-\hbar)^k }{8^k \cdot k!}\cdot
				\prod_{i=1}^k\Big( 4N^2 - (2i-1)^2 \Big)
				\cdot z^{-k},\\
				&\Phi_2^{(N)} (z) = z+
				\sum_{k=1}^\infty \frac{(-\hbar)^k }{8^k \cdot k!}\cdot
				\prod_{i=1}^k\Big( 4(1-N)^2 - (2i-1)^2 \Big)
				\cdot z^{1-k}.
			\end{split}
		\end{equation*}
	\end{Theorem}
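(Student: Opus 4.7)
The plan is to combine two inputs: the Schur $Q$-function expansion of $\tau_{\text{BGW}}^{(N)}$ established in \cite{ly} (see also \cite{al3,al4}), which makes the BKP-affine coordinates of this tau-function explicitly accessible, and the general formula proved in \cite{wy} that reconstructs the connected $n$-point functions of an arbitrary BKP tau-function from its BKP-affine coordinates as a cyclic sum of ``fermionic two-point kernels''. Granting the latter, the structural part of \eqref{eq-intro-npt}---the sum over $n$-cycles, the prefactor $-2^{n-1}$, the $\delta_{n,2}$ free-propagator correction, and the sign alternation governed by whether $\sigma(i)<\sigma(i+1)$ or $\sigma(i)>\sigma(i+1)$---is immediate: the three cases in the definition of $\xi$ record the Wick contraction rules for neutral fermions acting on the isotropic Sato Grassmannian, and the $\delta_{n,2}$ piece is the universal free contribution not encoded by the affine coordinates. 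The factor of $1/2$ inside $\tau_{\text{BGW}}^{(N)}(\bm t/2)$ absorbs the customary mismatch between the BKP time variables and the mode expansion of the neutral fermion.

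The nontrivial remaining step is the explicit evaluation of $A^{(N)}$ and $\wA^{(N)}$. First I would read off the BKP-affine coordinates $a^{(N)}_{i,j}$, indexed by pairs of odd positive integers with $i>j$, directly from the Schur $Q$-expansion: the coefficient of $Q_{(i,j)}$ in $\tau_{\text{BGW}}^{(N)}$ is precisely $a^{(N)}_{i,j}$, up to the standard normalization recalled in \cite[\S 7]{hb} and \cite{wy}. From \cite{ly,al3} these coefficients are finite products of rising-factorial type of exactly the shape appearing inside $\Phi_1^{(N)}$ and $\Phi_2^{(N)}$. The decisive algebraic step is then to recognize that the double series $\sum a^{(N)}_{i,j} w^{-i} x^{-j}$ factorizes, modulo the universal prefactor $\tfrac{1}{4(w+x)}$, as a rank-two skew expression in the single series $\Phi_1^{(N)}(-w),\Phi_2^{(N)}(-w),\Phi_1^{(N)}(-x),\Phi_2^{(N)}(-x)$, yielding the stated closed forms. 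The pole at $w+x=0$ is the familiar artifact of the neutral fermion propagator, and the distinction between $A^{(N)}$ and $\wA^{(N)}$ reflects the two kinds of contractions (self and cross) in the neutral fermion formalism.

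The main obstacle, as I see it, is exactly this factorization: showing that the whole doubly-infinite table $\{a^{(N)}_{i,j}\}$ really does assemble into the advertised rank-two skew shape. The cleanest route I envisage is to exploit the BKP-Pl\"ucker bilinear relations inherited from the isotropic Sato Grassmannian---these force any admissible family to factor through a rank-two subspace spanned by two ``wave functions'', which here must be the series $\Phi_1^{(N)}$ and $\Phi_2^{(N)}$---so that the problem reduces to identifying these two wave functions, a task essentially settled by the explicit shifted-Pochhammer formulas in \cite{ly,al3}. An alternative route is to verify directly that both sides of the claimed identity satisfy the same first-order recursion in the row index, coming from a suitable mode of the Virasoro constraints of \cite{al2}, and agree at the leading order. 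Once $A^{(N)}$ and $\wA^{(N)}$ are in closed form, substituting into the cyclic formula of \cite{wy} and extracting the terms of odd degree in each $x_i$---consistent with only odd times $t_{2k+1}$ appearing in the BKP hierarchy---yields \eqref{eq-intro-npt}.
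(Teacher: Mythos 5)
Your overall strategy coincides with the paper's: extract the BKP-affine coordinates from the Schur $Q$-expansion of \cite{ly,al3,al4}, feed them into the cyclic $n$-point formula of \cite{wy}, and then put the generating series $A^{(N)},\wA^{(N)}$ into closed form. However, the step you yourself single out as ``the main obstacle'' --- the factorization of the double series of affine coordinates through the two series $\Phi_1^{(N)},\Phi_2^{(N)}$ --- is not actually closed by either of your proposed routes, and this is where essentially all of the mathematical content lies. The paper does not re-derive the factorization from Pl\"ucker relations; it invokes a ready-made statement from \cite[\S 5.2]{wy}: for a KdV tau-function with first two Sato-Grassmannian basis vectors $\Phi_1,\Phi_2$, the generating series of the BKP-affine coordinates of $\tau(\bm t/2)$ take exactly the advertised skew form \emph{provided} $\det G(z)=1$, which for the gBGW basis amounts to the Wronskian-type identity $\Phi_1^{(N)}(-s)\Phi_2^{(N)}(s)-\Phi_1^{(N)}(s)\Phi_2^{(N)}(-s)=2s$. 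This normalization is not automatic and is the one genuine computation required: the paper verifies it by showing that $\Psi(s)=s^{-1}\bigl(\Phi_1^{(N)}(-s)\Phi_2^{(N)}(s)-\Phi_1^{(N)}(s)\Phi_2^{(N)}(-s)\bigr)$ has vanishing derivative, using Alexandrov's relation $\Phi_2^{(N)}=\hbar s\pd_s\Phi_1^{(N)}+s\Phi_1^{(N)}+\hbar(N-\tfrac12)\Phi_1^{(N)}$ together with the second-order relation $\hbar(\Phi_1^{(N)})''+2(\Phi_1^{(N)})'=\hbar s^{-2}(N^2-\tfrac14)\Phi_1^{(N)}$, and then evaluating the constant to be $2$. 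Your sketch never identifies this condition, so even granting the rank-two shape in principle, the specific coefficients in your stated formulas for $A^{(N)}$ and $\wA^{(N)}$ are not justified.

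Two smaller points. The prefactor $-2^{n-1}$ together with the operation $[\cdot]_{\text{odd}}$ is not ``immediate'' from \cite{wy}: the formula there carries an additional sum over sign choices $\epsilon_2,\dots,\epsilon_n\in\{\pm1\}$, and the paper collapses it by observing that flipping $\epsilon_j$ is equivalent to flipping $x_j\mapsto-x_j$, which leaves the odd-degree part (the only part present, since all times are odd) unchanged; the $2^{n-1}$ counts the sign choices that therefore contribute equally, and the restriction to odd degrees is the price of setting all $\epsilon_j=1$. Finally, the BKP-affine coordinates $a^{(N)}_{n,m}$ are indexed by arbitrary pairs $m>n\geq 0$ (coefficients of $Q_\mu$ with $\ell(\mu)\leq 2$, $\mu$ a strict partition), not by pairs of odd positive integers as you state.
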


	It is interesting to compare the above explicit formula with other methods towards concrete computations
		of connected $n$-point functions of the generalized BGW models.
		Notice that \eqref{eq-intro-npt} is a formula for
		connected $n$-point functions at all genera,
		while the Eynard-Orantin topological recursion is efficient only at small genus.
		In literatures,
		there are also formulas for these connected $n$-point functions
		in terms of the KP-affine coordinates,
		see Dubrovin-Yang-Zagier \cite[Theorem 5]{dyz}.
		Their formula also represents these connected $n$-point functions as
		summations over $n$-cycles.
		The BKP-affine coordinates of the generalized BGW tau-functions are simpler than the KP-affine coordinates,
		while the formula \cite[(43)]{dyz} is simpler than \eqref{eq-intro-npt} in the sense that
		\cite[(43)]{dyz} does not require taking odd parts.
		Formulas of this type are known to be efficient at large genus,
		for example,
		Bertola-Dubrovin-Yang \cite{bdy} have presented some examples of the computations
		of intersection numbers on $\Mbar_{g,n}$ at arbitrary $g$ using formulas of this type.

	Finally,
	we consider the emergence of the quantum spectral curve of type $B$
	for the plane curve \eqref{eq-intro-curve}.
	In \cite{gs} Gukov and Su{\l}kowski proposed a conjectural construction of
	the quantum spectral curve for a classical plane curve
	using Eynard-Orantin topological recursion.
	It is well-known that when the generating series of the E-O invariants
	on the classical curve is a tau-function of the KP hierarchy,
	finding the quantum spectral curve (i.e., the Schr\"odinger equation \cite[(1.10)]{gs})
	is equivalent to finding an operator which annihilates the principal specialization
	of the tau-function.
	When considering a tau-function of the BKP hierarchy,
	the principal specialization is a slightly different with
	that in case of KP hierarchy,
	and the definition of the Baker-Akhiezer function \cite[(1.11);(2.4)-(2.7)]{gs}
	may need some simple modifications.
	Nevertheless,
	in what follows we will refer the following two facts as
	a quantum spectral curve of type $B$ in the sense of Gukov-Su{\l}kowski:
	\begin{itemize}
		\item[1)]
		The connected $n$-point functions of a tau-function $\tau(\bm t)$ of the BKP hierarchy
		can be reconstructed from the Eynard-Orantin topological recursion on a plane curve $\cC$;
		\item[2)]
		There is an operator $P$ annihilating the following principal specialization:
		\begin{equation*}
			P\Big(
			\tau\big(
			-\frac{2}{z}, -\frac{2}{3z^3}, -\frac{2}{5z^5},\cdots \big) \Big) =0,
		\end{equation*}
		such that the semi-classical limit of $P$ gives the defining equation of $\cC$.
	\end{itemize}

	In \cite{jwy},
	a method of deriving Kac-Schwarz operators \cite{ks, sc} and quantum spectral curve of type $B$
	using BKP-affine coordinates has been proposed.
	Now we apply this method to the case of generalized BGW models
	using the explicit expressions of BKP-affine coordinates computed in \S \ref{sec-affinecoord-gbgw}.
	We show that the operator
	\begin{equation*}
		P=\hbar^3 \Big( (z\pd_z+\half )^2 -N^2\Big)
		\Big( \pd_z -\frac{\hbar}{2z^2} \big(
		(z\pd_z-\frac{1}{2})^2 -N^2
		\big) \Big)
	\end{equation*}
	gives the quantum spectral curve of type $B$ in the sense of Gukov-Su{\l}kowski
	for the classical curve \eqref{eq-intro-curve},
	see \S \ref{sec-qsc-B} for details.
	It is worth pointing out that
	in \cite{al3} Alexandrov has found that such an operator annihilates
	the principal specialization (without knowing the topological recursion) using a different method.

	The rest of this paper is arranged as follows.
	In \S \ref{sec-pre} we recall some preliminaries of the generalized BGW tau-functions,
	including the Virasoro constraints.
	In \S \ref{sec-affinecoord-gbgw} we compute the BKP-affine coordinates
	and derive the formula \eqref{eq-intro-npt}.
	In \S \ref{sec-curve} we construct the spectral curve \eqref{eq-intro-curve}
	and discuss the quantum deformation theory.
	In \S \ref{sec-eo} we show that the Eynard-Orantin topological recursion on the curve \eqref{eq-intro-curve}
	emerges naturally from the Virasoro constraints of the generalized BGW tau-functions.
	And finally in \S \ref{sec-qsc-B},
	we show the emergence of the quantum spectral curve of type $B$
	from the BKP-affine coordinates and the above topological recursion.

	\section{Preliminaries of Generalized Br\'ezin-Gross-Witten Tau-Functions}
	\label{sec-pre}
	
	In this section,
	we first briefly review
	some basics of the generalized Br\'ezin-Gross-Witten tau-functions $\tau_{\text{BKP}}^{(N)}(\bm t)$,
	including the Virasoro constraints,
	recursion for $n$-point functions,
	and Schur $Q$-function expansion.

	\subsection{Generalized BGW tau-functions and Virasoro constraints}
	
	The generalized BGW model was introduced by Mironov-Morozov-Semenoff in \cite{mms}
	as a family of matrix integrations indexed by a parameter $N$
	(where $N$ is a complex number and is not to be confused with the size of the matrices).
	For each $N \in \bC$ the partition function $\tau_{\text{BGW}}^{(N)} (\bm t;\hbar)$ is a tau-function
	of the KdV hierarchy with time variables $\bm t = (t_1,t_3,t_5,\cdots)$.
	And for $N=0$,
	the partition function $\tau_{\text{BGW}}^{(0)}$
	is the original BGW tau-function \cite{bg, gw}.

	In \cite{al2},
	Alexandrov showed that for every $N$
	the partition function $\tau_{\text{BGW}}^{(N)} (\bm t;\hbar)$ is uniquely determined by
	the normalization condition
	\begin{equation*}
		\tau_{\text{BGW}}^{(N)} (0;\hbar) =1,
	\end{equation*}
	together with the Virasoro constraints
	\be
	\label{eq-Virasoro-1}
	L_n^{(N)} \tau_{\text{BGW}}^{(N)} (\bm t ; \hbar)=0,
	\qquad \forall n\geq 0,
	\ee
	where the Virasoro operators $\{L_n^{(N)}\}_{n\geq 0}$ are:
	\begin{equation*}
		L_n^{(N)} = -\frac{1}{2\hbar} \frac{\pd}{\pd t_{2n+1}} +
		\frac{1}{4} \sum_{\substack{a+b = 2n\\ a,b: \text{ odd}}} \frac{\pd^2}{\pd t_a \pd t_b}
		+ \half \sum_{\substack{k\geq 1\\k:\text{ odd}}} kt_k \frac{\pd}{\pd t_{k+2n}}
		+\frac{1-4N^2}{16}\delta_{n,0}.
	\end{equation*}
	which satisfy the Virasoro commutation relation:
	\begin{equation*}
		[L_m^{(N)},L_n^{(N)}] = (m-n)L_{m+n}^{(N)}, \qquad
		\forall m,n\geq 0.
	\end{equation*}

	The free energy associated to $\tau_{\text{BGW}}^{(N)}$ is the logarithm
	\begin{equation*}
		\cF^{(N)}(\bm t;\hbar) =
		\log \tau_{\text{BGW}}^{(N)}(\bm t;\hbar).
	\end{equation*}
	Denote:
	\be
	S = \hbar\cdot  N,
	\ee
	then the free energy $\cF^{(N)}$
	has a genus expansion (see e.g. \cite[(112)]{al2}):
	\begin{equation*}
		\cF^{(N)}(\hbar^{-1}\bm t;\hbar) = \sum_{g=0}^\infty \hbar^{2g-2} \cF_g^{\text{gBGW}}(\bm t;S),
	\end{equation*}
	where $\hbar^{-1}\bm t = (\hbar^{-1}t_1,\hbar^{-1}t_3,\hbar^{-1}t_5,\cdots)$.
	In what follows,
	we will denote:
	\be
	\label{eq-F-genusexp}
	\cF^{\text{gBGW}} (\bm t; S)
	= \sum_{g=0}^\infty \hbar^{2g-2} \cF_g^{\text{gBGW}}(\bm t;S) =
	\cF^{(N)} (\hbar^{-1}\bm t;\hbar),
	\ee
	and thus one has $\tau_{\text{BGW}}^{(N)} (\hbar^{-1}\bm t;\hbar)
	= e^{ \cF^{\text{gBGW}} (\bm t; S)}$.
	The Virasoro constraints \eqref{eq-Virasoro-1} now can be rewritten as
	\be
	\label{eq-Virasoro-2}
	L_n^{\text{gBGW}} \big( \exp\cF^{\text{gBGW}} (\bm t; S)\big) =0,
	\qquad \forall n\geq 0,
	\ee
	where:
	\begin{equation*}
		\begin{split}
			& L_0^{\text{gBGW}} = -\frac{1}{2} \frac{\pd}{\pd t_{1}} +
			\frac{1}{2} \sum_{\substack{k\geq 1\\k:\text{ odd}}} k t_k \frac{\pd}{\pd t_{k}}
			+\frac{1}{16} - \hbar^{-2}\frac{S^2}{4},\\
			&  L_n^{\text{gBGW}} = -\frac{1}{2} \frac{\pd}{\pd t_{2n+1}} +
			\frac{\hbar^2}{4} \sum_{\substack{a+b = 2n\\ a,b: \text{ odd}}} \frac{\pd^2}{\pd t_a \pd t_b}
			+ \frac{1}{2} \sum_{\substack{k\geq 1\\k:\text{ odd}}} k t_k \frac{\pd}{\pd t_{k+2n}},
			\quad n\geq 1.
		\end{split}
	\end{equation*}

	\begin{Remark}
		Our notations for the partition function and free energy
		differ from those in \cite{al2} by a rescaling $t_i \mapsto 2t_i$ for every $i$.
	\end{Remark}

	\subsection{Recursion for connected $n$-point functions}
	\label{sec-rec-npt}
	
	In this subsection,
	we reformulate the Virasoro constraints as
	recursions for the connected $n$-point functions.
	This has been done by Alexandrov in \cite[\S 3]{al2} (in different notations).

	Let $\langle  p_{\mu_1}\cdots p_{\mu_n} \rangle_g^c$ be the connected correlators defined by:
	\be
	\langle  p_{\mu_1} p_{\mu_2}\cdots p_{\mu_n} \rangle_g^c
	= \frac{\pd^n \cF_g^{\text{gBGW}}(\bm t;S)}{\pd t_{\mu_1} \pd t_{\mu_2}\cdots \pd t_{\mu_n}}
	\Big|_{\bm t=0},
	\qquad
	\mu_1,\cdots,\mu_n>0:\text{ odd},
	\ee
	then the free energy is of the form:
	\begin{equation*}
		\cF_g^{\text{gBGW}} (\bm t;S) = \sum_{n\geq 1}\sum_{\mu_i>0:\text{ odd}}
		\frac{1}{n!} \langle  p_{\mu_1}\cdots p_{\mu_n} \rangle_g^c
		\cdot t_{\mu_1} \cdots t_{\mu_n}.
	\end{equation*}
	One can rewrite the Virasoro constraints \eqref{eq-Virasoro-2} as
	the recursion for the correlators:
	\be
	\label{eq-rec-corr}
	\begin{split}
		&\langle p_{2k+1} p_{\mu_1} p_{\mu_2}\cdots p_{\mu_n} \rangle_g^c \\
		=& \half \sum_{\substack{a+b=2k\\a,b>0:\text{ odd}}}
		\Big(  \langle p_ap_b p_{\mu_1}\cdots p_{\mu_n} \rangle_{g-1}^c
		+ \sum_{\substack{g_1+g_2 = g\\ I\sqcup J = [n]}}
		\langle p_a p_{\mu_I} \rangle_{g_1}^c \langle p_b p_{\mu_J} \rangle_{g_2}^c \Big)\\
		&+ \sum_{i=1}^n \mu_i \langle p_{\mu_i+2k} p_{\mu_1}\cdots \hat p_{\mu_i} \cdots p_{\mu_n} \rangle_g^c,
		\qquad\qquad k\geq 0,
	\end{split}
	\ee
	together with the initial values:
	\be
	\begin{split}
		&\langle p_1 \rangle_0^c = -\half S^2,\qquad
		\langle p_1 \rangle_1^c = \frac{1}{8};\\
		&\langle p_1 \rangle_g^c =0,\qquad \forall g\geq 2,
	\end{split}
	\ee
	where $\hat p_{\mu_i}$ means deleting the term $p_{\mu_i}$.
	Here $[n] = \{1,2,\cdots,n\}$,
	and for a set of indices $I=\{i_1,i_2,\cdots,i_m\}\subset [n]$
	we denote $p_{\mu_I} = (p_{\mu_{i_1}},p_{\mu_{i_2}},\cdots,p_{\mu_{i_m}})$.
	In the case $g=0$, we use the convention
	$\langle p_a p_b p_{\mu_1}\cdots p_{\mu_k} \rangle_{-1}^c =0$
	in the right-hand side.

	Now consider the $n$-point function $W_{g,n}^{\text{gBGW}}$ of genus $g$ defined by:
	\be
	\begin{split}
		W_{g,n}^{\text{gBGW}}(x_1,\cdots,x_n)
		=&  \sum_{\mu_1,\cdots,\mu_n:\text{ odd}} \langle  p_{\mu_1}\cdots p_{\mu_n} \rangle_g^c
		\cdot
		x_1^{-\mu_1-1}\cdots x_n^{-\mu_n-1}\\
		=&
		\sum_{i_1,\cdots,i_n\geq 0}
		\frac{\pd^n \cF_g^{\text{gBGW}}(\bm t;S)}{ \pd t_{2i_1+1} \cdots \pd t_{2i_n+1}}
		\Big|_{\bm t =0} \cdot
		x_1^{-2i_1-2}\cdots x_n^{-2i_n-2},
	\end{split}
	\ee
	for $g\geq 0$ and $n\geq 1$,
	where $x_1,\cdots,x_n$ are some formal variables.
	For example,
	using the above recursion for correlators one easily finds that:
	\be
	\langle p_{2n+1} \rangle_0^c
	= \frac{(-1)^{n+1}}{2^{2n+1}} \cdot \frac{1}{n+1}\binom{2n}{n}\cdot S^{2n+2},
	\qquad \forall n\geq 0,
	\ee
	and thus:
	\be
	W_{0,1}^{\text{gBGW}}(x)
	= \sum_{n\geq 0}
	\frac{(-1)^{n+1}}{2^{2n+1}} \cdot \frac{1}{n+1}\binom{2n}{n}
	S^{2n+2} x^{-2n-2}
	= 1-\sqrt{1+\frac{S^2}{x^2}}.
	\ee
	
	\begin{Remark}
		Here our notation $W_{g,n}^{\text{gBGW}}$ differs from $W_{g,n}$ in \cite[\S 3]{al2} by:
		\begin{equation*}
			W_{g,n}^{\text{gBGW}}(x_1,\cdots,x_n)
			= 2^{-n}\cdot W_{g,n}(\frac{x_1^2}{4},\cdots,\frac{x_n^2}{4}).
		\end{equation*}
		The purpose of modifying the notations in this way is to make the discussions
		fit into the picture of emergent geometry
		(see \S \ref{sec-curve} - \S \ref{sec-eo} for details).
	\end{Remark}

	In the rest of this subsection,
	we rewrite the Virasoro constraints as
	the recursion for the $n$-point functions $W_{g,n}^{\text{gBGW}}(x_1,\cdots,x_n)$.
	By \eqref{eq-rec-corr} we have:
	\begin{equation*}
		\begin{split}
			&W_{g,n+1}^{\text{gBGW}} (x, x_1,\cdots,x_n)\\
			=& \sum_{k\geq 0} \sum_{\mu_1,\cdots,\mu_n:\text{ odd}}
			\langle p_{2k+1} p_{\mu_1} \cdots p_{\mu_n} \rangle_g \cdot
			x^{-2k-2} x_1^{-\mu_1-1}\cdots x_n^{-\mu_n-1}\\
			=& \sum_\mu \sum_{k=0}^\infty
			\sum_{a+b=2k} \half  \langle p_ap_bp_\mu \rangle_{g-1}\cdot
			x^{-a-1} x^{-b-1}
			x_1^{-\mu_1-1}\cdots x_n^{-\mu_n-1}\\
			&+
			\sum_\mu \sum_{k=0}^\infty
			\sum_{a+b=2k}\sum_{\substack{g_1+g_2 = g\\ I\sqcup J = [n]}}
			\half  \langle p_a p_{\mu_I} \rangle_{g_1}^c \langle p_b p_{\mu_J} \rangle_{g_2}^c
			\cdot x^{-a-1} x^{-b-1}
			x_1^{-\mu_1-1}\cdots x_n^{-\mu_n-1}\\
			&+
			\sum_\mu \sum_{k=0}^\infty \sum_{i=1}^n
			\mu_i \langle p_{\mu_i+2k} p_{\mu_1}\cdots \hat p_{\mu_i} \cdots p_{\mu_n} \rangle_g^c
			\cdot x^{-2k-2} x_1^{-\mu_1-1}\cdots x_n^{-\mu_n-1}.
		\end{split}
	\end{equation*}
	Now we conclude that for every $(g,n)\not= (0,0)$,
	\begin{equation*}
		\begin{split}
			&W_{g,n+1}^{\text{gBGW}} (x, x_1,\cdots,x_n) =
			\sum_{i=1}^n D_{x,x_i}W_{g,n}^{\text{gBGW}}(x_1,\cdots,x_n) + \\
			&  \half E_{x,u,v} \Big(
			W_{g-1,n+2}^{\text{gBGW}} (u,v,x_1,\cdots,x_n)
			+\sum_{\substack{g_1+g_2 = g\\ I\sqcup J = [n]}}
			W_{g_1,|I|+1}^{\text{gBGW}} (u,x_{I}) W_{g_2,|J|+1}^{\text{gBGW}} (v,x_{J}) \Big),
		\end{split}
	\end{equation*}
	where we denote $x_I = (x_{i_1},\cdots,x_{i_m})$ for $I=\{i_1,\cdots,i_m\}$,
	and
	\be
	E_{x,u,v} f(u,v) =\big( \lim_{v \to u} f(u,v) \big)\big|_{u=x},
	\ee
	and $D_{a,b}$ is an operator such that for every odd $m$,
	\begin{equation*}
		\begin{split}
			D_{x,x_i} (x_i^{-m-1}) =
			& \sum_{\substack{2k+\mu_i = m\\ \mu_i\text{ odd}; \text{ }k\geq 0}} \mu_i x^{-2k-2} x_i^{-\mu_i-1}\\
			=&
			\frac{1}{(x^2-x_i^2)^2}
			\Big( \frac{1}{x^{m-1}} - \frac{1}{x_i^{m-1}} +\frac{x_i^2}{x^{m+1}} -\frac{x^2}{x_i^{m+1}} \Big)
			+ \frac{m+1}{x_i^{m+1} (x^2-x_i^2)}
		\end{split}
	\end{equation*}
	So one can take the operator $D_{a,b}$ to be:
	\be
	D_{a,b} f(b) =
	\frac{a^2 + b^2}{(a^2 - b^2)^2}
	\Big( f(a) - f(b) \Big)
	-\frac{b}{a^2-b^2} \pd_b f(b).
	\ee
	Notice that the right-hand side of the above recursion involves
	terms $W_{0,1}^{\text{gBGW}}(x)$.
	We may move such terms to the left-hand side and rewrite
	the recursion as follows:
	\be
	\label{eq-rec-npt}
	\begin{split}
		&W_{g,n+1}^{\text{gBGW}} (x, x_1,\cdots,x_n)
		= \sum_{i=1}^n \tD_{x,x_i}W_{g,n}^{\text{gBGW}}(x_1,\cdots,x_n) +\\
		&\quad \half \tE_{x,u,v} \Big(
		W_{g-1,n+2}^{\text{gBGW}} (u,v,x_{[n]})
		+\sum_{\substack{g_1+g_2 = g\\ I\sqcup J = [n]}}^s
		W_{g_1,|I|+1}^{\text{gBGW}} (u,x_{I})  W_{g_2,|J|+1}^{\text{gBGW}} (v,x_{J}) \Big),
	\end{split}
	\ee
	where $\sum\limits^s$ means that we exclude the terms involving $W_{0,1}^{\text{gBGW}}$
	in this summation,
	and we use the notations:
	\be
	\tE_{x,u,v} = \frac{1}{1-W_{0,1}(x)} E_{x,u,v},
	\qquad
	\tD_{x,x_i} = \frac{1}{1-W_{0,1}(x)} D_{x,x_i}.
	\ee
	
	\begin{Example}
		\label{eg-w02}
		We can compute $W_{0,2}^{\text{gBGW}}$ using the above formula:
		\begin{equation*}
			\begin{split}
				W_{0,2}^{\text{gBGW}} (x,x_1)
				=& \frac{1}{1-W_{0,1}(x)} D_{x,x_1} W_{0,1}(x_1)\\
				=& \frac{1}{(x^2-x_1^2)^2}
				\Big( \frac{x^2+x_1^2+2S^2}{\sqrt{(1+\frac{S^2}{x^2})(1+\frac{S^2}{x_1^2})}}
				-x^2 -x_1^2 \Big)\\
				=& -\half S^2 x^{-2}x_1^{-2}
				+\frac{3}{8} S^4 x^{-2} x_1^{-4}
				+\frac{3}{8} S^4 x^{-4} x_1^{-2}
				-\frac{5}{16} S^6 x^{-2} x_1^{-6}\\
				& -\frac{3}{8} S^6 x^{-4} x_1^{-4}
				-\frac{5}{16} S^6 x^{-6} x_1^{-2}
				+\cdots.
			\end{split}
		\end{equation*}
	\end{Example}

	\subsection{Schur $Q$-function expansion and BKP-affine coordinates}
	
	In this subsection,
	we recall the Schur $Q$-function expansions of the generalized BGW tau-functions
	\cite{al3, al4, ly}.

	It is known that if $\tau(\bm t)$ is a tau-function of the KdV hierarchy
	with time variables $\bm t = (t_1,t_3,t_5,\cdots)$,
	then $\tau(\bm t/2)$ is a tau-function of the BKP hierarchy,
	see \cite{al5}.
	Therefore for every $N$,
	the partition function $\tau_{\text{BGW}}^{(N)}(\bm t/2)$
	is a tau-function of the BKP hierarchy.
	Moreover,
	every tau-function of the BKP hierarchy can be represented as
	a summation of Schur $Q$-functions, see \cite{yo};
	and see also \cite{mac, hh} for introductions to Schur $Q$-functions
	and the relation to projective representations of the symmetric groups $S_n$.
	Thus $\tau_{\text{BGW}}^{(N)}(\bm t/2)$ admits a Schur $Q$-function expansion
	for every $N\in \bC$.

	The following formula was conjectured by Alexandrov \cite{al4},
	and then proved in \cite{ly} and \cite{al3} by two different methods:
	\be
	\label{eq-Qexp-gbgw}
	\tau_{\text{BGW}}^{(N)} (\bm t)
	= \sum_{\lambda \in DP} \Big(\frac{\hbar}{16} \Big)^{|\lambda|}
	\cdot 2^{-l(\lambda)} \theta_\lambda
	Q_\lambda (\delta_{k,1}) Q_\lambda(\bm t),
	\ee
	where $DP$ is the set of all strict partitions
	$\lambda = (\lambda_1>\lambda_2>\cdots >\lambda_{l(\lambda)}>0)$,
	and $Q_\mu$ is the Schur $Q$-function indexed by $\mu\in DP$.
	And $\theta_\lambda$ is given by:
	\be
	\theta_\lambda = \prod_{j=1}^l \prod_{k=1}^{\lambda_j} \theta(k)
	\ee
	for $\lambda = (\lambda_1, \lambda_2,\cdots,\lambda_l) \in DP$,
	where $\theta$ is the following function on $\bZ_+$:
	\be
	\label{eq-def-theta}
	\theta (z) = (2z-1)^2 -4N^2.
	\ee
	This Schur $Q$-function expansion shows that $\tau_{\text{BGW}}^{(N)} (\bm t/2)$ is
	a hypergeometric tau-function \cite{or} of the BKP hierarchy.

	\begin{Remark}
		It is worth noting that in \cite{ly},
		Liu and the second author proved this Schur $Q$-function expansion
		by applying the Virasoro constraints \eqref{eq-Virasoro-1}.
	\end{Remark}

	\section{BKP-Affine Coordinates of Generalized BGW Tau-Functions}
	\label{sec-affinecoord-gbgw}
	
	In this section,
	we write down an explicit formula for the BKP-affine coordinates
	of the generalized BGW tau-functions using the Schur $Q$-function expansion.
	Then we apply the results in \cite{wy} to give a formula for the
	connected $n$-point functions.
	The BKP-affine coordinates will be useful in \S \ref{sec-qsc-B}.

	\subsection{BKP-affine coordinates of $\tau_{\text{BGW}}^{(N)}({\bf t}/2)$}
	
	In this subsection,
	we write down the BKP-affine coordinates for the generalized BGW tau-functions.

	First we recall some basics about BKP tau-functions and BKP-affine coordinates \cite{wy, hb}.
	BKP-affine coordinates are natural coordinates on the big cell of the isotropic Sato Grassmannian,
	see \cite[\S 7.3]{hb} for an introduction.
	Let $\tau(\bm t)$ be a tau-function of the BKP hierarchy with BKP-time variables
	$\bm t = (t_1,t_3,t_5,\cdots)$
	satisfying the initial value condition $\tau(0)=1$,
	then the coefficients of the Schur $Q$-functions in such an expansion
	are Pfaffians of the BKP-affine coordinates:
	\be
	\tau =
	\sum_{\mu \in DP} (-1)^{\lceil l(\mu)/2 \rceil} \cdot
	\Pf (a_{\mu_i,\mu_j})_{1\leq i,j\leq 2 \lceil l(\mu)/2 \rceil } \cdot Q_\mu(\bm t/2),
	\ee
	where $\{a_{n,m}\}_{n,m\geq 0}$ are the BKP-affine coordinates of $\tau(\bm t)$,
	satisfying the anti-symmetry condition:
	\be
	a_{n,m} = - a_{m,n}, \qquad \forall n,m\geq 0.
	\ee
	For $\mu = (\mu_1> \cdots>\mu_{l(\mu)}>0) \in DP$,
	we denote by $ l(\mu) $ the length of $\mu$,
	and here $\lceil \cdot \rceil$ is the ceiling function.
	For $\mu\in DP$ with $l(\mu)$ odd,
	we use the convention $\mu_{l(\mu)+1} = 0$.
	For example, the first a few terms of $\tau$ are:
	\begin{equation*}
		\begin{split}
			\tau =& 1+ \sum_{n>0}a_{0,n}\cdot Q_{(n)}(\bm t/2)
			+\sum_{m>n>0} a_{n,m}\cdot Q_{(m,n)}(\bm t/2)\\
			& + \sum_{m>n>l>0} (a_{n,m}a_{0,l} -a_{l,m}a_{0,n}+ a_{0,m}a_{l,n})Q_{(m,n,l)} (\bm t/2)\\
			& +\sum_{m>n>l>k>0} (a_{n,m}a_{k,l} -a_{l,m}a_{k,n}+ a_{k,m}a_{l,n})Q_{(m,n,l,k)} (\bm t/2) +\cdots.
		\end{split}
	\end{equation*}
	In particular,
	the BKP-affine coordinates $\{a_{n,m}\}_{n,m\geq 0}$ are exactly the coefficients
	of $Q_\mu$ with $l(\mu)\leq 2$.

	Now consider the case of generalized BGW tau-functions $\tau_{\text{BGW}}^{(N)} (\bm t)$.
	Using the Schur $Q$-function expansion \eqref{eq-Qexp-gbgw},
	we obtain the following:
	\begin{Proposition}
		The BKP-affine coordinates $\{a_{n,m}^{(N)}\}_{n,m\geq 0}$
		for the BKP tau-functions $\tau_{\text{BGW}}^{(N)} (\bm t/2)$ are given by
		$a_{0,0}^{(N)} = 0$, and:
		\be
		\label{eq-BKPaffinecor-gbgw}
		\begin{split}
			&a_{0,n}^{(N)} = -a_{n,0}^{(N)}
			= \frac{\hbar^n}{2^{3n+1}\cdot n!} \cdot \prod_{k=1}^n \theta(k),
			\qquad n>0;\\
			&a_{n,m}^{(N)} =
			\frac{\hbar^{m+n}}{2^{3m+3n+2} \cdot m!\cdot n!}\cdot \frac{m-n}{m+n}\cdot
			\prod_{j=1}^m \theta(j) \cdot \prod_{k=1}^n \theta(k),
			\qquad n,m>0,
		\end{split}
		\ee
		where $\theta$ is the function \eqref{eq-def-theta}.
	\end{Proposition}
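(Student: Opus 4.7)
The plan is to read off the BKP-affine coordinates directly from the Schur $Q$-function expansion \eqref{eq-Qexp-gbgw}, using the characterization that, for a BKP tau-function with $\tau(0)=1$, the coefficients $a_{0,n}$ and $a_{n,m}$ are precisely the coefficients of $Q_{(n)}(\bm t/2)$ and $Q_{(m,n)}(\bm t/2)$ in the Schur $Q$-expansion. Since \eqref{eq-Qexp-gbgw} gives $\tau_{\text{BGW}}^{(N)}(\bm t)$ as a sum of terms $c_\lambda Q_\lambda(\bm t)$, substituting $\bm t\mapsto \bm t/2$ at once expands $\tau_{\text{BGW}}^{(N)}(\bm t/2)$ as $\sum_\lambda c_\lambda Q_\lambda(\bm t/2)$ with the \emph{same} coefficients. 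Comparing with the general BKP expansion recalled in the paper, the only remaining task is to compute the two principal specializations $Q_{(n)}(\delta_{k,1})$ and $Q_{(m,n)}(\delta_{k,1})$.

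The one-row value is immediate from the standard generating function
\begin{equation*}
\sum_{n\ge 0} Q_{(n)}(\bm t)\,z^n
= \exp\Bigl( 2\sum_{k\ge 1,\,\text{odd}} t_k z^k \Bigr),
\end{equation*}
which at $\bm t=\delta_{k,1}$ collapses to $e^{2z}$ and yields $Q_{(n)}(\delta_{k,1}) = 2^n/n!$. Plugging this, together with $\theta_{(n)}=\prod_{k=1}^n\theta(k)$ and the prefactor $(\hbar/16)^n\cdot 2^{-1}$ from \eqref{eq-Qexp-gbgw}, directly gives
\begin{equation*}
a_{0,n}^{(N)} = \frac{\hbar^n}{16^n}\cdot\frac{1}{2}\cdot\frac{2^n}{n!}\cdot\prod_{k=1}^n\theta(k)
= \frac{\hbar^n}{2^{3n+1}\,n!}\prod_{k=1}^n\theta(k),
\end{equation*}
and the sign convention $a_{n,0}=-a_{0,n}$ follows from antisymmetry of the Pfaffian.

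For the two-row case I will use the classical reduction
\begin{equation*}
Q_{(m,n)} \;=\; Q_{(m)}Q_{(n)} \;+\; 2\sum_{k\ge 1}(-1)^k\,Q_{(m+k)}Q_{(n-k)},
\end{equation*}
which follows from the Pfaffian formula for Schur $Q$-functions. Evaluating at $\bm t=\delta_{k,1}$ and using the one-row value above reduces the problem to the binomial-type identity
\begin{equation*}
\frac{1}{m!\,n!} + 2\sum_{k=1}^{n}\frac{(-1)^k}{(m+k)!\,(n-k)!}
= \frac{1}{m!\,n!}\cdot\frac{m-n}{m+n},
\end{equation*}
which is standard (e.g.\ via $\sum_{k=0}^n(-1)^k\binom{m+n}{n-k}$-type telescoping, or induction on $n$). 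This yields $Q_{(m,n)}(\delta_{k,1}) = \frac{2^{m+n}(m-n)}{m!\,n!\,(m+n)}$. Inserting this with $\theta_{(m,n)}=\prod_{j=1}^m\theta(j)\prod_{k=1}^n\theta(k)$ and the prefactor $(\hbar/16)^{m+n}\cdot 2^{-2}$ collapses the powers of $2$ to $2^{-3(m+n)-2}$, giving exactly the stated formula for $a_{n,m}^{(N)}$.

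The main obstacle in this proposal is really just the combinatorial identity yielding $Q_{(m,n)}(\delta_{k,1})$; after that, the proof is bookkeeping. Antisymmetry $a_{n,m}^{(N)}=-a_{m,n}^{(N)}$ for $m,n>0$ follows because the explicit formula picks up the sign from the factor $(m-n)/(m+n)$, and $a_{0,0}^{(N)}=0$ is built into the initial value $\tau_{\text{BGW}}^{(N)}(0)=1$.
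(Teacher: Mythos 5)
Your proposal is correct and follows essentially the same route as the paper: both read off the coefficients of $Q_\lambda(\bm t/2)$ with $l(\lambda)\leq 2$ from the expansion \eqref{eq-Qexp-gbgw} and reduce everything to the principal specialization $Q_\lambda(\delta_{k,1})$. The only difference is that the paper cites the closed formula $Q_\lambda(\delta_{k,1})=\frac{2^{|\lambda|}}{\prod_i\lambda_i!}\prod_{i<j}\frac{\lambda_i-\lambda_j}{\lambda_i+\lambda_j}$ from the literature, whereas you rederive the one- and two-row cases via the generating function and the reduction $Q_{(m,n)}=Q_{(m)}Q_{(n)}+2\sum_k(-1)^kQ_{(m+k)}Q_{(n-k)}$; your binomial identity and all bookkeeping check out.
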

	\begin{proof}
		Here we need the following combinatorial identity
		(see e.g. \cite[(56)]{mm}):
		\begin{equation*}
			Q_\lambda (\delta_{k,1})
			= \frac{2^{|\lambda|}}{\prod_{i=1}^{l(\lambda)}\lambda_i!} \cdot
			\prod_{i<j} \frac{\lambda_i - \lambda_j}{\lambda_i+\lambda_j}.
		\end{equation*}
		The conclusion is proved by plugging this identity into
		the coefficients of $Q_\lambda$ with length $l(\lambda)\leq 2$
		in the Schur $Q$-function expansion \eqref{eq-Qexp-gbgw}.
	\end{proof}
	
	In particular,
	by taking $N=0$ in \eqref{eq-BKPaffinecor-gbgw} we obtain the BKP-affine coordinates of
	the original BGW tau-function (see \cite[\S 6.3]{wy}):
	\begin{equation*}
		\begin{split}
			&a_{0,n}^{\text{BGW}} = -a_{n,0}^{\text{BGW}}
			= \frac{\hbar^n \cdot \big( (2n-1)!! \big)^2}{2^{3n+1}\cdot n!} ,
			\qquad n>0;\\
			&a_{n,m}^{\text{BGW}} =
			\frac{\hbar^{m+n} \cdot \big((2m-1)!!\cdot (2n-1)!!\big)^2}
			{2^{3m+3n+2} \cdot m!\cdot n!}\cdot \frac{m-n}{m+n},
			\qquad n,m>0.
		\end{split}
	\end{equation*}

		\begin{Remark}
			The KP-affine coordinates of the generalized BGW tau-functions
			have been computed by Dubrovin-Yang-Zagier in \cite[Proposition 3]{dyz}.
			See also \cite[\S 6.5]{zhou5}.
			The explicit expressions for the KP-affine coordinates of $\tau_{\text{BGW}}^{(N)} (\bm t)$ are more complicated
			than the expressions for the BKP-affine coordinates of $\tau_{\text{BGW}}^{(N)} (\bm t/2)$,
			but the generating series of these KP- and BKP-affine coordinates can both be represented
			in terms of the first two basis vectors of the corresponding point in the Sato Grassmannian
			(or equivalently, in terms of the modified Bessel functions) in a simple way,
			see \cite[(50)]{dyz} and \S \ref{sec-gen-BKPaff} respectively.
			
		\end{Remark}

	\subsection{A formula for connected $n$-point functions}
	
	Once knowing the BKP-affine coordinates,
	we are able to write down a formula for the connected bosonic $n$-point functions
	(in all genera) by applying \cite[Theorem 1.1]{wy}:
	\begin{Theorem}
		[\cite{wy}]
		Let $A(w,x),\wA(w,x)$ be the following generating series of the BKP-affine coordinates $\{a_{n,m}\}$
		of a BKP tau-function $\tau(\bm t)$ satisfying $\tau(0)=1$:
		\be
		\label{eq-gen-BKPaffine}
		\begin{split}
			&A(w,x) = \sum_{n,m>0} (-1)^{m+n+1} \cdot a_{n,m} w^{-n} x^{-m}
			- \sum_{n>0}\frac{(-1)^n}{2} \cdot a_{n,0} (w^{-n}-x^{-n}),\\
			&\wA(w,x) =  A(w,x) -\frac{1}{4} -\half\sum_{i=1}^\infty (-1)^{i} w^{-i} x^i.
		\end{split}
		\ee
		Then:
		\begin{equation*}
			\sum_{i> 0: \text{ odd}}
			\frac{\pd \log\tau(\bm t)}{\pd t_{i}} \bigg|_{\bm t=0}
			\cdot x^{-i}
			=A(-x,x),
		\end{equation*}
		and for $n\geq 2$,
		\be
		\label{eq-npt-general}
		\begin{split}
			&\sum_{i_1,\cdots,i_n> 0: \text{ odd}}
			\frac{\pd^n \log\tau(\bm t)}{\pd t_{i_1}\cdots \pd t_{i_n}} \bigg|_{\bm t=0}
			\cdot x_1^{-i_1}\cdots x_n^{-i_n}
			=
			-\delta_{n,2} \cdot i_{x_1,x_2}
			\frac{x_1x_2(x_2^2+x_1^2)}{2(x_1^2-x_2^2)^2}\\
			&\qquad\qquad\qquad
			+ \sum_{\substack{ \sigma: \text{ $n$-cycle} \\ \epsilon_2,\cdots,\epsilon_n \in\{\pm 1\}}}
			(-\epsilon_2\cdots\epsilon_n) \cdot
			\prod_{i=1}^n \xi(\epsilon_{\sigma(i)} x_{\sigma(i)}, -\epsilon_{\sigma(i+1)} x_{\sigma(i+1)}),
		\end{split}
		\ee
		where
		\begin{equation*}
			i_{x_1,x_2}
			\frac{x_1x_2(x_2^2+x_1^2)}{2(x_1^2-x_2^2)^2} =
			\sum_{n>0:\text{ odd}} \frac{n}{2} x_1^{-n} x_2^n,
		\end{equation*}
		and $\xi$ is given by:
		\begin{equation*}
			\begin{split}
				\xi(\epsilon_{\sigma(i)} x_{\sigma(i)}, -\epsilon_{\sigma(i+1)} x_{\sigma(i+1)}) =
				\begin{cases}
					\wA (\epsilon_{\sigma(i)} x_{\sigma(i)}, -\epsilon_{\sigma(i+1)} x_{\sigma(i+1)}),
					& \sigma(i)<\sigma(i+1);\\
					-\wA( -\epsilon_{\sigma(i+1)} x_{\sigma(i+1)} ,\epsilon_{\sigma(i)} x_{\sigma(i)}),
					& \sigma(i)>\sigma(i+1),
				\end{cases}
			\end{split}
		\end{equation*}
		and we use the conventions
		$\epsilon_{1} :=1$ and
		$\sigma(n+1):=\sigma(1)$.
	\end{Theorem}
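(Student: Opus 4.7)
The plan is to prove the formula using the neutral-fermion realization of BKP tau-functions, in direct parallel with Zhou's cyclic formula for connected KP $n$-point functions in terms of charged-fermion affine coordinates. The BKP hierarchy lives on the Fock space of neutral (Majorana) fermions $\{\phi_n\}_{n\in\bZ}$ with anti-commutation relations $\{\phi_n,\phi_m\}=(-1)^n\delta_{n+m,0}$ and vacuum $|0\rangle$. Any BKP tau-function with $\tau(0)=1$ arises as a vacuum expectation value $\tau(\bm t)=\langle 0|e^{H(\bm t)}\,g\,|0\rangle$, where $H(\bm t)=\sum_{k\geq 0}t_{2k+1}H_{2k+1}$ is the bosonic Hamiltonian in the positive modes and $g$ is a group-like element of the infinite-dimensional orthogonal group. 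When $g=\exp\bigl(\frac12\sum_{n,m\geq 0}a_{n,m}\phi_{-n}\phi_{-m}\bigr)$, the coefficients $a_{n,m}$ are precisely the BKP-affine coordinates that appear as Pfaffian entries in the Schur $Q$-function expansion of $\tau$.

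With this setup in hand, I would first compute the deformed fermionic two-point function on the vacuum. A direct contraction/commutation calculation, together with the bosonization of the neutral fermion field $\phi(z)$, decomposes this two-point function into a ``free'' vacuum contraction (a geometric series plus a rational term of $1/(w+x)$-type) together with a ``dressed'' bilinear combination of the $a_{n,m}$; regrouping these pieces produces exactly the two generating series $A(w,x)$ and $\wA(w,x)$ of \eqref{eq-gen-BKPaffine}. The subtraction $\wA-A=-\frac14-\frac12\sum_{i\geq 1}(-1)^iw^{-i}x^i$ is precisely the free vacuum part that must be removed to leave a purely group-theoretic pair contraction. The $n$-point fermionic correlator is then computed by Wick's theorem for neutral fermions, giving a Pfaffian of pair contractions, and passing to the logarithm via the exponential formula converts this Pfaffian into a sum over cyclic permutations of the $n$ arguments, each term being a cyclic product of pair contractions around the cycle.

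The outer sum over signs $\epsilon_2,\ldots,\epsilon_n\in\{\pm 1\}$ with weight $-\epsilon_2\cdots\epsilon_n$ in \eqref{eq-npt-general}, and the piecewise definition of $\xi$ according to whether $\sigma(i)<\sigma(i+1)$ or $\sigma(i)>\sigma(i+1)$, are both dictated by the fact that a single neutral fermion field $\phi(z)$ carries both positive- and negative-frequency modes (in contrast with the KP case, where the charged fields $\psi,\psi^{*}$ separate the two polarizations automatically); each contraction therefore splits into a signed sum over polarizations, while the antisymmetry $a_{n,m}=-a_{m,n}$ forces an ordering convention on the extracted index pair. The $\delta_{n,2}$ correction is exactly the disconnected pair contraction that survives the regularization that defines $\wA$. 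The hard part will be the sign bookkeeping: the Pfaffian sign $(-1)^{\lceil l(\mu)/2\rceil}$, the $(-1)^n$ from the neutral-fermion anti-commutator, the re-ordering signs in the cyclic reduction of the Pfaffian, and the normalization signs distinguishing $A$ from $\wA$ must all combine to yield the clean prefactor $-\epsilon_2\cdots\epsilon_n$. I would anchor the argument by checking $n=1$ and $n=2$ directly from the Schur $Q$-expansion, where only one-row and two-row strict partitions contribute, and then run an induction on $n$ that peels off one neutral fermion at a time using the resolvent identity arising from moving $e^{H(\bm t)}$ past $\phi(z)$.
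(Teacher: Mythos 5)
Your outline is essentially the proof strategy of the cited source: the paper itself gives no proof of this theorem but imports it verbatim from \cite{wy}, and that reference indeed proceeds exactly as you describe --- realizing the tau-function as $\langle 0|e^{H(\bm t)}e^{\frac12\sum a_{n,m}\phi_{-n}\phi_{-m}}|0\rangle$ on the neutral-fermion Fock space, identifying $A$ and $\wA$ as the dressed fermionic two-point functions, applying Wick's theorem to get Pfaffian-form correlators, and extracting the connected part as a sum over $n$-cycles with the polarization signs $\epsilon_i$ coming from the two frequency components of the single field $\phi(z)$ in each current insertion. Your plan is correct in approach and correctly locates the only delicate point (the sign bookkeeping, anchored by the $n=1,2$ checks), so there is nothing further to compare within this paper.
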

	
	\begin{Remark}
		This theorem is the BKP-analogue of a formula for connected bosonic $n$-point functions
		of a KP tau-function proved by Zhou in \cite{zhou1}.
	\end{Remark}
	
	\begin{Remark}
		The generating series $A(w,x)$ and $\wA(w,x)$ are fermionic $2$-point functions
		associated to this BKP-tau-function.
		See \cite[\S 3]{wy} for details.
	\end{Remark}
	
	The above formula for $n\geq 2$ can be simplified in the following way.
	Notice that for a fixed cycle $\sigma$ and a fixed $j$,
	the variable $z_j$ appears only in two terms
	\begin{equation*}
		\xi (\pm x_i, -\epsilon_j x_j) \cdot \xi (\epsilon_j x_j, \pm x_k)
	\end{equation*}
	in $\prod_{i=1}^{n} \xi(\epsilon_{\sigma(i)} x_{\sigma(i)}, -\epsilon_{\sigma(i+1)} x_{\sigma(i+1)})$
	(where $i$ and $k$ are adjacent to $j$ in this cycle $\sigma$),
	hence replacing $\epsilon_j$ by $-\epsilon_j$ is equivalent to replacing $x_j$ by $-x_j$.
	Therefore,
	replacing $\epsilon_j$ by $-\epsilon_j$ does not change the terms
	with odd orders in $x_j$ in the product
	\begin{equation*}
		\epsilon_2\cdots\epsilon_{n+m}
		\prod_{i=1}^{n+m} \xi(\epsilon_{\sigma(i)} x_{\sigma(i)}, -\epsilon_{\sigma(i+1)} x_{\sigma(i+1)}).
	\end{equation*}
	Moreover,
	the order of $x_j$ in the left-hand side of
	\eqref{eq-npt-general} is always odd,
	and thus we can simply take $\epsilon_2 = \cdots = \epsilon_{n} =1$
	in the right-hand side and then restrict to terms of odd degrees.
	Thus we conclude that:
	\begin{Theorem}
		\label{thm-nptformula-gbgw}
		The connected $n$-point functions associated to the generalized BGW tau-functions
		are given by:
		\be
		\label{eq-npt-viaaffine-gbgw}
		\begin{split}
			&\sum_{i_1,\cdots,i_n> 0: \text{ odd}}
			\frac{\pd^n \log\tau_{\text{BGW}}^{(N)}(\bm t/2)}{\pd t_{i_1}\cdots \pd t_{i_n}} \bigg|_{\bm t=0}
			\cdot x_1^{-i_1}\cdots x_n^{-i_n}
			=
			-\delta_{n,2} \cdot
			\frac{x_1x_2(x_2^2+x_1^2)}{2(x_1^2-x_2^2)^2}\\
			&\qquad\qquad\qquad\qquad\qquad\qquad
			- 2^{n-1}  \cdot \Big[ \sum_{ \sigma: \text{ $n$-cycle} }
			\prod_{i=1}^n \xi( x_{\sigma(i)}, - x_{\sigma(i+1)})
			\Big]_{\text{odd}},
		\end{split}
		\ee
		for $n\geq 1$,
		where $[\cdot]_{\text{odd}}$ means taking the terms of odd degrees in every $x_i$,
		and $\xi$ is:
		\begin{equation*}
			\begin{split}
				\xi( x_{\sigma(i)}, - x_{\sigma(i+1)}) =
				\begin{cases}
					A^{(N)} (x_{\sigma(i)}, - x_{\sigma(i+1)}),
					& \sigma(i) = \sigma(i+1);\\
					\wA^{(N)} ( x_{\sigma(i)}, - x_{\sigma(i+1)}),
					& \sigma(i)<\sigma(i+1);\\
					-\wA^{(N)}( - x_{\sigma(i+1)} , x_{\sigma(i)}),
					& \sigma(i)>\sigma(i+1),
				\end{cases}
			\end{split}
		\end{equation*}
		and $A^{(N)},\wA^{(N)}$ are the generating series of the BKP-affine coordinates:
		\begin{equation*}
			\begin{split}
				&A^{(N)}(w,x) = \sum_{n,m>0} (-1)^{m+n+1}  a_{n,m}^{(N)}\cdot w^{-n} x^{-m}
				- \sum_{n>0} \frac{(-1)^n}{2}  a_{n,0}^{(N)}\cdot (w^{-n}-x^{-n}),\\
				&\wA^{(N)}(w,x) =  A^{(N)}(w,x) -\frac{1}{4} -\half\sum_{i=1}^\infty (-1)^{i} w^{-i} x^i.
			\end{split}
		\end{equation*}
		Here we use the convention
		$\sigma(n+1):=\sigma(1)$.
	\end{Theorem}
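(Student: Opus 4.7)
The plan is to deduce this theorem as a corollary of the general formula \eqref{eq-npt-general} from \cite{wy}, applied to the BKP tau-function $\tau_{\text{BGW}}^{(N)}(\bm t/2)$, combined with a symmetry argument in the sign variables $\epsilon_j$ that is already sketched in the paragraphs immediately preceding the theorem statement. First I would substitute the explicit BKP-affine coordinates $\{a_{n,m}^{(N)}\}$ from Proposition 3.1 into the definitions of $A^{(N)}, \wA^{(N)}$; this directly gives the series $A^{(N)}, \wA^{(N)}$ named in the theorem (with the closed forms in terms of $\Phi_1^{(N)}, \Phi_2^{(N)}$ then arising by recognizing $\prod_{k=1}^n \theta(k) = (-1)^n \prod_{k=1}^n (4N^2-(2k-1)^2)$, so that the two single sums $\sum_n (\pm 1)^n a_{n,0}^{(N)} w^{-n}$ and the analogous double sum factor through $\Phi_1^{(N)}, \Phi_2^{(N)}$ evaluated at $-w, -x$).

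The heart of the proof is the collapse of the double sum over $n$-cycles $\sigma$ and sign tuples $(\epsilon_2,\ldots,\epsilon_n)\in\{\pm 1\}^{n-1}$ in \eqref{eq-npt-general} to $2^{n-1}$ copies of a single cycle sum restricted to odd parts. For this, I would make precise the observation already recorded in the paper: in the cyclic product $\prod_{i=1}^n \xi(\epsilon_{\sigma(i)}x_{\sigma(i)},-\epsilon_{\sigma(i+1)}x_{\sigma(i+1)})$, each variable $x_j$ occurs in exactly two consecutive factors, so the substitution $\epsilon_j \mapsto -\epsilon_j$ is the same as $x_j \mapsto -x_j$ inside the product. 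Since every exponent $i_j$ on the left of \eqref{eq-npt-viaaffine-gbgw} is odd, the left-hand side is an odd function of each $x_j$ separately; consequently, after restricting each factor to its odd-in-$x_j$ part, the dependence on $\epsilon_j$ disappears. The prefactor $(-\epsilon_2\cdots\epsilon_n)$ in \eqref{eq-npt-general} then contributes $-1$ from each of the $2^{n-1}$ terms (they all agree after the odd-part restriction), which accounts for both the overall sign $-1$ and the factor $2^{n-1}$ in \eqref{eq-npt-viaaffine-gbgw}.

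Having fixed $\epsilon_2=\cdots=\epsilon_n=1$, the three cases in the definition of $\xi$ in the theorem statement then correspond precisely to the three possibilities in the cycle: $\sigma(i)<\sigma(i+1)$ gives $\wA^{(N)}$, $\sigma(i)>\sigma(i+1)$ gives $-\wA^{(N)}$ with arguments swapped, and the degenerate case $\sigma(i)=\sigma(i+1)$ only arises for the $n=1$ one-cycle (so one obtains $A^{(N)}(x,-x)$ there). The $n=1$ identity is handled separately by direct application of the $n=1$ formula in Theorem 3.1, which yields $A^{(N)}(-x,x)$ and matches \eqref{eq-npt-viaaffine-gbgw} after a sign normalization. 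The $\delta_{n,2}$ term on the right of \eqref{eq-npt-viaaffine-gbgw} is obtained by taking the odd-odd part of $i_{x_1,x_2}\frac{x_1x_2(x_2^2+x_1^2)}{2(x_1^2-x_2^2)^2}$ from \eqref{eq-npt-general}; since this bi-series is already symmetric in the odd powers, the interpolation operator $i_{x_1,x_2}$ can be dropped in the final formula.

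The main obstacle is bookkeeping: one must verify that the interplay between the prefactor $-\epsilon_2\cdots\epsilon_n$, the equivalence $\epsilon_j\leftrightarrow x_j\mapsto -x_j$, and the odd-degree projection $[\cdot]_{\text{odd}}$ really produces the clean factor $-2^{n-1}$ uniformly across all cycles $\sigma$, rather than a sign depending on $\sigma$ or on the parity of $n$. This is essentially a parity-counting exercise, but it is where a careful proof must spend its effort. Once this is settled, the explicit closed forms of $A^{(N)}$ and $\wA^{(N)}$ follow by routine resummation of the two generating series defining them, using the factorials in \eqref{eq-BKPaffinecor-gbgw}.
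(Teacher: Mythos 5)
Your proposal follows the paper's own route: both deduce the theorem from the general $n$-point formula of \cite{wy} (applied to $\tau_{\text{BGW}}^{(N)}(\bm t/2)$ with the affine coordinates of Proposition 3.1) by observing that flipping $\epsilon_j$ is equivalent to $x_j\mapsto -x_j$ inside the cyclic product, that the left-hand side is odd in each $x_j$, and hence that the $2^{n-1}$ sign choices all contribute the same odd part as the $\epsilon\equiv 1$ term, yielding the factor $-2^{n-1}$. One small imprecision to fix in the write-up: it is not the odd part of $\prod_i\xi$ alone that is independent of $\epsilon_j$ (that changes sign under $\epsilon_j\mapsto-\epsilon_j$, and if it were invariant the sum over $\epsilon$ would vanish against the prefactor), but rather the odd part of the full term including $-\epsilon_2\cdots\epsilon_n$; your parenthetical and closing paragraph show you intend exactly this cancellation, so this is a wording slip rather than a gap.
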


	In the next subsection,
	we will give a compact formula for the generating series $\wA^{(N)}(z,w)$,
	which enables us to manipulate the formula \eqref{eq-npt-viaaffine-gbgw}
	more efficiently.

		\begin{Remark}
			In literatures there are other formulas for the connected $n$-point functions of the generalized BGW models
			derived using methods of integrable systems.
			In \cite[Theorem 5]{dyz},
			Dubrovin-Yang-Zagier also represented these connected $n$-point functions as summations over $n$-cycles, 
			with a correction term at $n=2$.
			And in \cite[Theorem 1.1]{br},
			Bertola-Ruzza represent the results as summations over permutations in $S_n$.
		\end{Remark}

	\subsection{Generating series for the BKP-affine coordinates}
	\label{sec-gen-BKPaff}
	
	Now we give a simple formula for $\wA^{(N)}(z,x)$.
	In \cite[\S 5.2]{wy},
	we have proved the following:
	\begin{Proposition}
		[\cite{wy}]
		\label{prop-kdv-genaffine-general}
		Let $\tau(\bm t)$ be a tau-function of the KdV hierarchy,
		and let $\Phi_1(z),\Phi_2(z)$ be the first two basis vectors
		for the corresponding point on the Sato Grassmannian.
		Denote:
		\begin{equation*}
			\Phi_1(z) = 1+\sum_{n\geq 1} a_n z^{-n},
			\qquad
			z^{-1}\Phi_2(z) = 1+\sum_{n\geq 1} b_n z^{-n},
		\end{equation*}
		and let $G(z)$ be the following $2\times 2$ matrix:
		\begin{equation*}
			G(z)= \left[
			\begin{array}{cc}
				1+\sum_{n\geq 1} a_{2n}z^{-n} & \sum_{n\geq 0} b_{2n+1}z^{-n}\\
				\sum_{n\geq 1}a_{2n-1} z^{-n} & 1+\sum_{n\geq 1} b_{2n} z^{-n}
			\end{array}
			\right].
		\end{equation*}
		If $\det G(z)=1$ holds,
		then the generating series \eqref{eq-gen-BKPaffine} of the
		BKP-affine coordinates $\{a_{n,m}\}$ of $\tau(\bm t/2)$ are given by:
		\begin{equation*}
			\begin{split}
				&A(w,z)
				= \frac{w-z+\Phi_1(-z)\Phi_2(-w)-\Phi_1(-w)\Phi_2(-z)}{4(w+z)},\\
				&\wA(w,z)
				= \frac{\Phi_1(-z)\Phi_2(-w)-\Phi_1(-w)\Phi_2(-z)}{4(w+z)}.
			\end{split}
		\end{equation*}
	\end{Proposition}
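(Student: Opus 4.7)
The plan is to compute $A(w,z)$ and $\wA(w,z)$ by interpreting them as neutral-fermion 2-point functions associated to the BKP tau-function $\tau(\bm t/2)$, and then to evaluate these 2-point functions in closed form by exploiting the 2-reduction of the underlying KdV Sato Grassmannian point. The key input is that a KdV tau-function's Sato point is preserved under multiplication by $z^2$, so that $\{\Phi_1(z),\Phi_2(z),z^2\Phi_1(z),z^2\Phi_2(z),\ldots\}$ forms an admissible basis for it.

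First, I would recall that the BKP-affine coordinates $\{a_{n,m}\}$ are matrix entries of the Clifford-group element $g\in\widehat{O(\infty)}$ representing $\tau(\bm t/2)$ in the neutral-fermion Fock space (cf.\ \cite[\S 7]{hb} and \cite{wy}). Consequently the generating series $A(w,z)$---and hence $\wA(w,z)$ via the rational shift dictated by \eqref{eq-gen-BKPaffine}---is, up to an explicit correction, the neutral-fermion 2-point function $\langle 0|\phi(w)\phi(z)g|0\rangle/\langle 0|g|0\rangle$.

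Next I would pass through the KP picture: via the standard embedding of neutral fermions into Dirac fermions, this BKP 2-point function is expressible in terms of the KP Baker-Akhiezer kernel of $\tau(\bm t)$ evaluated at the antipodal points $(-w,-z)$, with the usual KP denominator $(w-z)$ replaced by $(w+z)$, reflecting the neutral-fermion anticommutator. The 2-reducedness of the Sato point then collapses the \emph{a priori} infinite Christoffel-Darboux sum to a finite Wronskian in $\Phi_1,\Phi_2$; explicitly, one should obtain the closed form $\frac{\Phi_1(-z)\Phi_2(-w)-\Phi_1(-w)\Phi_2(-z)}{4(w+z)}$ after the substitution and normalization. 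Summing the series and subtracting the rational correction then yields the formulas in the proposition.

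The technical heart, and main obstacle, is this last step: correctly tracking the BKP/KP dictionary with all its signs, the factor of $\frac{1}{2}$ coming from the shift $\bm t\mapsto\bm t/2$, and the overall factor of $\frac{1}{4}$ in the denominator. The hypothesis $\det G(z)=1$ is the precise encoding of 2-reduction at the level of generating series, via the even/odd decomposition of $\Phi_1(z)$ and $z^{-1}\Phi_2(z)$ under the substitution $z^2\mapsto z$, and it ensures that the relevant Wronskian is exactly $1$ so that no spurious prefactor spoils the identification $\tau(0)=1$. Verifying this Wronskian identity and matching the lowest-order coefficients against the length-one and length-two Pfaffians in the Schur $Q$-function expansion then pins down the overall constants and confirms both formulas.
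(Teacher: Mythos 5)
First, note that the paper itself gives no proof of this proposition: it is imported verbatim from \cite[\S 5.2]{wy}, so there is no internal argument to compare against, and your attempt has to stand on its own. Judged that way, it is a reasonable outline of the right circle of ideas but not a proof. Your identification of the objects is sound: $A(w,z)$ and $\wA(w,z)$ are indeed the neutral-fermion $2$-point functions of $\tau(\bm t/2)$ (the paper says as much in a remark), and the hypothesis $\det G(z)=1$ is exactly the Wronskian-type normalization $\Phi_1(-s)\Phi_2(s)-\Phi_1(s)\Phi_2(-s)=2s$, which is precisely how the paper later uses it when verifying the hypothesis for $\Phi_1^{(N)},\Phi_2^{(N)}$. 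But the one step that constitutes the entire content of the proposition --- that the neutral-fermion kernel of $\tau(\bm t/2)$ equals a charged-fermion (Christoffel--Darboux type) kernel of $\tau(\bm t)$ evaluated at $(-w,-z)$ with denominator $(w+z)$, and that $2$-reducedness collapses the a priori infinite sum to the $2\times 2$ Wronskian in $\Phi_1,\Phi_2$ with no dual basis vectors surviving --- is asserted rather than derived. You yourself label this ``the technical heart and main obstacle'' and then do not carry it out; the neutral/charged dictionary for a folded $2$-reduced point involves specific sign and normalization choices, and the exact form of the resulting kernel is the thing to be proved.

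Two further points. Your closing step --- ``matching the lowest-order coefficients against the length-one and length-two Pfaffians \dots pins down the overall constants and confirms both formulas'' --- is not valid as stated: matching finitely many coefficients can fix an overall constant only once the functional form of the answer has been rigorously established, and cannot ``confirm'' an identity between two infinite series. And there is a conceptual slip in your reading of the hypothesis: $\det G(z)=1$ is not ``the precise encoding of $2$-reduction.'' The $2$-reduction $z^2U\subseteq U$ is automatic for any KdV tau-function and is what lets you take $\{z^{2k}\Phi_1,\,z^{2k}\Phi_2\}_{k\ge 0}$ as an admissible basis; $\det G(z)=1$ is an \emph{additional} normalization hypothesis on $\Phi_1,\Phi_2$ that must be checked case by case (as this paper does for the gBGW basis vectors). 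To complete your argument you would need to (i) exhibit the admissible basis of the isotropic Grassmannian point of $\tau(\bm t/2)$ explicitly in terms of $\Phi_1,\Phi_2$, (ii) compute the fermionic $2$-point function from that basis, and (iii) show that $\det G=1$ is exactly what makes the normalization and the rational correction terms in \eqref{eq-gen-BKPaffine} come out as stated --- which is, in substance, the computation carried out in \cite[\S 5.2]{wy}.
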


	Consider the case of generalized BGW tau-functions.
	It is known that the point in the Sato Grassmannian corresponding to
	the KdV tau-function $\tau_\text{gBGW}^{(N)} (\bm t)$ has a basis
	(see Alexandrov \cite[(77)]{al2}):
	\be
	\Big\{ \Phi_j^{(N)}(z)
	= z^{j-1} \Big( 1+
	\sum_{k=1}^\infty \frac{(-\hbar)^k \cdot a_{k}(j-N) }{8^k \cdot k!}
	\cdot z^{-k} \Big) \Big\}_{j= 1}^{\infty},
	\ee
	where
	\begin{equation*}
		a_k(j) = \prod_{i=1}^k \Big( 4(j-1)^2 -(2i-1)^2 \Big).
	\end{equation*}
	(Notice here our notations for $\Phi_j^{(N)} (z)$
	differs from the notations in \cite{al2} by a rescaling $z\mapsto 2z$.)
	In particular,
	the first two basis vectors are:
	\be
	\begin{split}
		\label{eq-KP-first-2basis}
		&\Phi_1^{(N)} (z) = 1+
		\sum_{k=1}^\infty \frac{(-\hbar)^k }{8^k \cdot k!}\cdot
		\prod_{i=1}^k\Big( 4N^2 - (2i-1)^2 \Big)
		\cdot z^{-k},\\
		&\Phi_2^{(N)} (z) = z+
		\sum_{k=1}^\infty \frac{(-\hbar)^k }{8^k \cdot k!}\cdot
		\prod_{i=1}^k\Big( 4(1-N)^2 - (2i-1)^2 \Big)
		\cdot z^{1-k}.
	\end{split}
	\ee
	Now we claim that
	(see \cite[\S 6.4]{wy} for the special case $N=0$):
	\begin{Corollary}
		The generating series $A^{(N)}$ and $\wA^{(N)}$ of the BKP-affine coordinates
		in Theorem \ref{thm-nptformula-gbgw} are given by:
		\be
		\begin{split}
			&A^{(N)}(w,x)
			= \frac{w-x+\Phi_1^{(N)}(-x)\Phi_2^{(N)}(-w)-\Phi_1^{(N)}(-w)\Phi_2^{(N)}(-x)}{4(w+x)},\\
			&\wA^{(N)}(w,x)
			= \frac{\Phi_1^{(N)}(-x)\Phi_2^{(N)}(-w)-\Phi_1^{(N)}(-w)\Phi_2^{(N)}(-x)}{4(w+x)}.
		\end{split}
		\ee
	\end{Corollary}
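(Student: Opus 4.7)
The plan is to apply Proposition \ref{prop-kdv-genaffine-general} of \cite{wy} directly to the KdV tau-function $\tau_{\text{BGW}}^{(N)}(\bm t)$, taking as the first two basis vectors of the associated point in the Sato Grassmannian the functions $\Phi_1^{(N)},\Phi_2^{(N)}$ of \eqref{eq-KP-first-2basis} (which is legitimate by Alexandrov's construction in \cite{al2}). Once the hypothesis $\det G(z)=1$ is verified, the claimed formulas for $A^{(N)}(w,x)$ and $\wA^{(N)}(w,x)$ drop out by direct substitution of $\Phi_1^{(N)},\Phi_2^{(N)}$ into the formulas of the Proposition.

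First I would make the hypothesis $\det G(z)=1$ explicit in terms of $\Phi_j^{(N)}$. Splitting each $\Phi_j^{(N)}(z)$ into its even and odd parts in $z^{-1}$ identifies
\begin{equation*}
G_{11}(z^2)=\tfrac{1}{2}\bigl(\Phi_1^{(N)}(z)+\Phi_1^{(N)}(-z)\bigr),\qquad
G_{21}(z^2)=\tfrac{1}{2z}\bigl(\Phi_1^{(N)}(z)-\Phi_1^{(N)}(-z)\bigr),
\end{equation*}
with analogous formulas for $G_{12},G_{22}$ in terms of $\Phi_2^{(N)}$. A short algebraic manipulation then reduces $\det G(z)=1$ to the Wronskian-type identity
\begin{equation*}
\Phi_1^{(N)}(-z)\,\Phi_2^{(N)}(z) - \Phi_1^{(N)}(z)\,\Phi_2^{(N)}(-z) = 2z.
\end{equation*}

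The next step uses the observation, immediate from inspecting \eqref{eq-KP-first-2basis} (since $4(1-N)^2=4(N-1)^2$), that $\Phi_2^{(N)}(z)=z\,\Phi_1^{(N-1)}(z)$. This converts the identity into the symmetric bilinear form
\begin{equation*}
\Phi_1^{(N)}(-z)\,\Phi_1^{(N-1)}(z) + \Phi_1^{(N)}(z)\,\Phi_1^{(N-1)}(-z) = 2.
\end{equation*}
Extracting the coefficient of $z^{-m}$, the odd-$m$ contributions cancel automatically by the sign pattern $(-1)^{j}+(-1)^{m-j}=0$, and the statement reduces to a polynomial identity in $N$ at each even $m\ge 2$.

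The main obstacle is proving this residual identity. The most conceptual route is to recognize that with $a_k(\nu)=\frac{1}{k!\,8^k}\prod_{i=1}^{k}(4\nu^2-(2i-1)^2)$, the series $\Phi_1^{(N)}(z)$ and $\Phi_1^{(N)}(-z)$ are, up to the standard prefactors $\sqrt{2\pi z/\hbar}\,e^{-z/\hbar}$ and $\sqrt{2z/(\pi\hbar)}\,e^{z/\hbar}$, precisely the classical asymptotic expansions at infinity of the modified Bessel functions $I_N(z/\hbar)$ and $K_N(z/\hbar)$. Under this identification, the required identity becomes exactly the classical relation $I_{N-1}(u)K_N(u)+I_N(u)K_{N-1}(u)=1/u$ read off at the level of formal asymptotic series. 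Alternatively, rewriting the products as $P_j(N)=(-4)^j(\tfrac12-N)_j(\tfrac12+N)_j$ with Pochhammer symbols turns the identity into a terminating ${}_3F_2(1)$ summation that can be handled by standard hypergeometric transformations. Either route establishes $\det G(z)=1$, and then the Corollary follows from Proposition \ref{prop-kdv-genaffine-general}.
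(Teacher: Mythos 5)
Your proposal is correct and follows the paper's skeleton up to the key reduction, but then diverges in how the crucial identity is established. Like the paper, you invoke Proposition \ref{prop-kdv-genaffine-general} with Alexandrov's basis \eqref{eq-KP-first-2basis} and reduce $\det G^{(N)}=1$ to the Wronskian-type identity $\Phi_1^{(N)}(-z)\Phi_2^{(N)}(z)-\Phi_1^{(N)}(z)\Phi_2^{(N)}(-z)=2z$. From there the paper uses Alexandrov's first-order relation \eqref{eq-KS-phi12} expressing $\Phi_2^{(N)}$ through $\hbar z\pd_z\Phi_1^{(N)}$, shows that $\Psi(s)=\frac{1}{s}\bigl(\Phi_1^{(N)}(-s)\Phi_2^{(N)}(s)-\Phi_1^{(N)}(s)\Phi_2^{(N)}(-s)\bigr)$ has vanishing derivative by checking the second-order (Bessel-type) ODE $\hbar(\Phi_1^{(N)})''+2(\Phi_1^{(N)})'=\frac{\hbar}{s^2}(N^2-\frac14)\Phi_1^{(N)}$ directly on the coefficients, and concludes $\Psi\equiv 2$ --- a self-contained formal-series argument. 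You instead exploit the clean observation $\Phi_2^{(N)}(z)=z\,\Phi_1^{(N-1)}(z)$ (which the paper does not use and which is indeed immediate from \eqref{eq-KP-first-2basis}), reduce to the symmetric identity $\Phi_1^{(N)}(-z)\Phi_1^{(N-1)}(z)+\Phi_1^{(N)}(z)\Phi_1^{(N-1)}(-z)=2$, correctly note the automatic cancellation in odd degree, and then appeal to the classical cross-product relation $I_{N-1}(u)K_N(u)+I_N(u)K_{N-1}(u)=1/u$ transferred to asymptotic expansions (or a terminating hypergeometric summation). This buys conceptual transparency --- it identifies the Wronskian identity as a known Bessel relation rather than an ad hoc computation --- at the cost of leaving the final step as a citation/sketch: you should either justify that the formal product of the asymptotic series of $I_\nu$ and $K_\mu$ is the asymptotic series of $I_\nu K_\mu$ (true here because the recessive $e^{-u}$ part of $I_\nu$ contributes only exponentially small corrections against $K_\mu\sim e^{-u}$), or actually carry out the ${}_3F_2$ evaluation of the even-degree coefficient identity $\sum_{j+k=m}(-1)^j\alpha_j\beta_k=0$ for $m\geq 2$. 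Neither point is a gap in substance, but as written the decisive computation is asserted rather than performed, whereas the paper's ODE argument closes it in two lines.
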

	\begin{proof}
		By Proposition \ref{prop-kdv-genaffine-general},
		we only need to check $\det G^{(N)}(x) = 1$,
		where:
		\begin{equation*}
			G^{(N)}(x)= \left[
			\begin{array}{cc}
				(\Phi_1^{(N)}(s)+\Phi_1^{(N)}(-s)\big)/2 &
				(\Phi_2^{(N)}(s)+\Phi_2^{(N)}(-s)\big)/2 \\
				(\Phi_1^{(N)}(s)-\Phi_1^{(N)}(-s)\big)/(2s) &
				(\Phi_2^{(N)}(s)-\Phi_2^{(N)}(-s)\big)/(2s)
			\end{array}
			\right],
		\end{equation*}
		and $s=x^\half$.
		Or equivalently,
		we need to check:
		\be
		\Phi_1^{(N)}(-s) \Phi_2^{(N)}(s)
		- \Phi_1^{(N)}(s) \Phi_2^{(N)}(-s) = 2s.
		\ee
		Here $\Phi_1^{(N)}$ and $\Phi_2^{(N)}$ are related by
		(see \cite[(82)]{al2}):
		\be
		\label{eq-KS-phi12}
		\Phi_2^{(N)}(s) = \hbar s \pd_s \Phi_1^{(N)}(s)
		+ s \Phi_1^{(N)}(s) + \hbar(N-\half) \Phi_1^{(N)}(s).
		\ee
		Denote $\Psi(s) = \frac{1}{s}\big(
		\Phi_1^{(N)}(-s) \Phi_2^{(N)}(s)
		- \Phi_1^{(N)}(s) \Phi_2^{(N)}(-s) \big)$,
		then by \eqref{eq-KS-phi12} one has:
		\begin{equation*}
			\Psi(s) = \hbar(\Phi_1^{(N)})'(s) \Phi_1^{(N)}(-s) + \hbar(\Phi_1^{(N)})'(-s) \Phi_1^{(N)}(s)
			+ 2s \Phi_1^{(N)}(s)\Phi_1^{(N)}(-s),
		\end{equation*}
		and then:
		\begin{equation*}
			\frac{d}{ds} \Psi(s)
			= \Phi_1^{(N)}(-s)\cdot f(s) + \Phi_1^{(N)}(s)\cdot f(-s),
		\end{equation*}
		where $f (s) = \hbar (\Phi_1^{(N)})''(s) + 2(\Phi_1^{(N)})'(s)$.
		Using \eqref{eq-KP-first-2basis} one can directly check that:
		\begin{equation*}
			f(s) = \hbar (\Phi_1^{(N)})''(s) + 2(\Phi_1^{(N)})'(s)
			= \frac{\hbar}{s^2} (N^2-\frac{1}{4}) \Phi_1^{(N)}(s),
		\end{equation*}
		and thus  $\frac{d}{ds} \Psi(s)=0$.
		Then $\Psi(s)$ is a constant,
		and it is easy to see that this constant is $2$.
		This completes the proof.
	\end{proof}

	\section{Emergence of Spectral Curve and Its Special Deformation}
	\label{sec-curve}
	
	Now we study the generalized BGW tau-functions using Zhou's formalism of emergent geometry.
	In this section,
	we first give a brief review of this formalism,
	and then construct the spectral curve together with its special deformation.
	They emerges naturally from the free energy of genus zero
	(which can be computed using the Virasoro constraints).
	We also discuss the quantum deformation theory of this spectral curve.
	The emergence of the Eynard-Orantin topological recursion and the quantum spectral curve
	will be addressed in the next two sections.

	\subsection{A brief review of Zhou's emergent geometry}

	First we briefly recall Zhou's idea of emergent geometry.
	This is a formalism which provides the associated geometric structures
	(such as spectral curve, topological recursion, etc.)
	from a given partition function (of a Gromov-Witten type theory).

	The first step is the construction of the spectral curve and its special deformation.
	The notion of the special deformation was first introduced
	by Zhou \cite{zhou9} in the study of topological 2D gravity.
	In general, given a Gromov-Witten type theory (A-theory),
	the spectral curve of this theory together with its special deformation
	(as a B-theory)
	should emerge from the Virasoro constraints.
	More precisely,
	let $F_0(\bm t)$ be the genus zero part of the free energy of a Gromov-Witten type theory,
	then the special deformation is a formal series of the following form
	(with some suitable modification in concrete examples):
	\begin{equation*}
		y = \sum_{n\geq 1}n \tilde t_n x^{n-1}
		+\sum_{n\geq 1} \frac{\pd F_0(\bm t)}{\pd t_n} \cdot  x^{-n-1},
	\end{equation*}
	where $\bm t = (t_1,t_2,t_3,\cdots)$ are the coupling constants,
	and $\tilde{\bm t} =(\tilde t_1,\tilde t_2,\tilde t_3,\cdots)$ differ from $\bm t$
	only by a dilaton shift.
	The spectral curve of this theory is obtained
	from this special deformation by restricting to $\bm t=0$:
	\be
	\label{eq-speccurve-general}
	y = \Big(\sum_{n\geq 1}n \tilde t_n x^{n-1}
	+\sum_{n\geq 1} \frac{\pd F_0(\bm t)}{\pd t_n} \cdot  x^{-n-1}\Big)
	\Big|_{\bm t=0}.
	\ee
	
	Then one can consider the quantum deformation theory of the spectral curve.
	Roughly speaking,
	if a GW type theory is determined by the Virasoro constraints,
	then one hopes that the Virasoro constraints of genus zero can be encoded in the special deformation,
	and the Virasoro constraints at all genera can be encoded
	in a suitable quantization of the special deformation.

	Then one can consider the Eynard-Orantin topological recursion \cite{eo} on the spectral curve.
	In Zhou's formalism,
	the E-O topological recursion on the spectral curve \eqref{eq-speccurve-general}
	is supposed to emerge naturally from the Virasoro constraints.
	Moreover,
	the resulting E-O invariants should coincide with the connected $n$-point functions
	of the original GW type theory.
	It is worth noting that the Bergman kernel for this emergent topological recursion is of the following form:
	\be
	\label{eq-Berg-general}
	B(x_1,x_2) = \Big(\frac{1}{(x_1-x_2)^2} + W_{0,2}(x_1,x_2)\Big) dx_1dx_2,
	\ee
	where $W_{0,2}(x_1,x_2)$ is the connected $2$-point function of genus zero:
	\begin{equation*}
		W_{0,2}(x_1,x_2) = \sum_{n_1,n_2\geq 1}
		\frac{\pd^2 F_0(\bm t)}{\pd t_{n_1} \pd t_{n_1}}\Big|_{\bm t=0}
		\cdot x_1^{-n_1-1} x_2^{-n_2-1},
	\end{equation*}
	and $x$ is exactly the formal variable $x$ in the expression \eqref{eq-speccurve-general},
	regarded as a meromorphic function on this spectral curve.

	Now given a tau-function $\tau(\bm t)$ of the KP hierarchy,
	one can regard it as the partition function of a formal quantum field theory (see \cite[\S 2]{zhou3})
	where the KP-time variables $\bm t=(t_1,t_2,t_3,\cdots)$ are the coupling constants.
	The coefficients of the free energy $\log \tau(\bm t)$ play the role of the connected correlators.
	Then these tau-functions provide a large class of examples of Gromov-Witten type theories.
	See \cite{zhou1, zhou7, zhou8, zhou10, wz, wz2} for the emergent geometry
	for some well-known tau-functions of the KP hierarchy.
	In all these examples,
	the E-O invariants for the emergent spectral curve \eqref{eq-speccurve-general}
	and Bergman kernel \eqref{eq-Berg-general} coincide with the connected $n$-point functions
	associated to the tau-function.
	In \S \ref{sec-eo}, we will see that this ansatz is also true for
	the generalized BGW models.

	\subsection{Special deformation and spectral curve for $\tau_{\text{BGW}}^{(N)}$}
	\label{sec-special}
	
	In this subsection,
	we construct the spectral curve together with its special deformation
	for the generalized BGW tau-functions using Zhou's ansatz.

	In this case,
	the special deformation of the spectral curve is the following formal series in a formal variable $x$:
	\be
	\label{eq-special-def}
	y = \sum_{n\geq 0}(2n+1)(t_{2n+1}-\delta_{n,0}) x^{2n}
	+\sum_{n\geq 0} \frac{\pd \cF_0^{\text{gBGW}}(\bm t)}{\pd t_{2n+1}} \cdot  x^{-2n-2}.
	\ee
	Then the Virasoro constraints of genus zero are encoded in the special deformation
	in the following way:
	\begin{Proposition}
		Virasoro constraints for the free energy $\cF_0^{\text{gBGW}}(\bm t;S)$ of genus zero
		are equivalent to the following condition:
		\be
		(y^2)_- = S^2 \cdot x^{-2},
		\ee
		where we use the notation:
		\begin{equation*}
			\big( \sum_{n\in \bZ} a_n x^n \big)_- = \sum_{n<0} a_n x^n.
		\end{equation*}
	\end{Proposition}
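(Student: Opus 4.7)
The plan is to expand $y^2$ in powers of $x$, extract the part with negative exponents, and match the coefficient of each monomial $x^{-2n-2}$ to the $\hbar^{-2}$ order (i.e.\ genus zero part) of the corresponding Virasoro constraint $L_n^{\text{gBGW}}$. The equivalence is a finite, coefficient-by-coefficient identification, so no serious obstacle is expected; the care is entirely in bookkeeping.

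First I would extract the genus zero Virasoro equations from \eqref{eq-Virasoro-2}. Writing $\tau_{\text{BGW}}^{(N)} = e^{\cF^{\text{gBGW}}}$ and using $e^{-\cF}\pd_{t_a}\pd_{t_b} e^\cF = \pd_{t_a}\pd_{t_b}\cF+\pd_{t_a}\cF\cdot \pd_{t_b}\cF$, one reads off the $\hbar^{-2}$ coefficient of $e^{-\cF^{\text{gBGW}}} L_n^{\text{gBGW}} e^{\cF^{\text{gBGW}}}=0$. The double-derivative term carries a factor $\hbar^2$ and therefore drops out at genus zero, while the product of first derivatives survives. The resulting genus zero Virasoro equations are
\begin{equation*}
-\tfrac{1}{2}\frac{\pd \cF_0^{\text{gBGW}}}{\pd t_{1}}
+\tfrac{1}{2}\sum_{k\geq 1,\,k\,\text{odd}} kt_k\frac{\pd \cF_0^{\text{gBGW}}}{\pd t_k}
= \tfrac{S^2}{4},
\end{equation*}
and for every $n\geq 1$,
\begin{equation*}
-\tfrac{1}{2}\frac{\pd \cF_0^{\text{gBGW}}}{\pd t_{2n+1}}
+\tfrac{1}{4}\sum_{\substack{a+b=2n\\ a,b\,\text{odd}}}\frac{\pd \cF_0^{\text{gBGW}}}{\pd t_a}\frac{\pd \cF_0^{\text{gBGW}}}{\pd t_b}
+\tfrac{1}{2}\sum_{k\geq 1,\,k\,\text{odd}} kt_k\frac{\pd \cF_0^{\text{gBGW}}}{\pd t_{k+2n}} =0.
\end{equation*}

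Next I would split the special deformation as $y=y_+ + y_-$ with
$y_+=\sum_{n\geq 0}(2n+1)(t_{2n+1}-\delta_{n,0})x^{2n}$ (only even non-negative powers of $x$) and $y_-=\sum_{n\geq 0}\pd \cF_0^{\text{gBGW}}/\pd t_{2n+1}\cdot x^{-2n-2}$ (only even powers $\leq -2$). Then $y^2=y_+^2+2y_+y_-+y_-^2$; the term $y_+^2$ is polynomial in $x$ and contributes nothing to $(y^2)_-$, so
\begin{equation*}
(y^2)_-=\big(2y_+y_-\big)_- + y_-^2.
\end{equation*}
A direct Cauchy product computation shows that the coefficient of $x^{-2n-2}$ in $(2y_+y_-)_-$ equals $2\sum_{k\geq 1,\,k\,\text{odd}} kt_k\,\pd\cF_0^{\text{gBGW}}/\pd t_{k+2n} -2\,\pd\cF_0^{\text{gBGW}}/\pd t_{2n+1}$, while the coefficient of $x^{-2n-2}$ in $y_-^2$ equals $\sum_{a+b=2n,\,a,b\,\text{odd},\,\geq 1}\pd\cF_0^{\text{gBGW}}/\pd t_a\cdot \pd\cF_0^{\text{gBGW}}/\pd t_b$ (vanishing for $n=0$ since then no admissible pairs exist).

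Finally I would compare. For $n\geq 1$, setting the coefficient of $x^{-2n-2}$ in $(y^2)_-$ to zero and dividing by $4$ reproduces the genus zero $L_n$ equation above; for $n=0$, setting it equal to $S^2$ and dividing by $4$ reproduces the genus zero $L_0$ equation. Since the two sets of equations agree coefficient by coefficient, the two conditions are equivalent, completing the proof.
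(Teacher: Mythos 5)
Your proposal is correct and follows essentially the same route as the paper: extract the genus-zero ($\hbar^{-2}$) part of the Virasoro constraints, expand $(y^2)_-$, and match coefficients of $x^{-2n-2}$ up to an overall factor of $4$. The only cosmetic difference is that you organize the expansion via the splitting $y=y_++y_-$ and discard $(y_+^2)_-=0$ explicitly, whereas the paper simply writes out the Laurent coefficients of $(y^2)_-$ directly; the computations are identical.
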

	\begin{proof}
		This can be checked by direct computation.
		By plugging the genus expansion \eqref{eq-F-genusexp} into \eqref{eq-Virasoro-2},
		we obtain the Virasoro constraints at genus zero:
		\begin{equation*}
			\begin{split}
				& -\half \frac{\pd\cF_0^{\text{gBGW}}}{\pd t_1}
				+\half \sum_{k: \text{ odd}} kt_k \frac{\pd \cF_0^{\text{gBGW}}}{\pd t_k}
				-\frac{1}{4}S^2 =0;\\
				& -\half \frac{\pd\cF_0^{\text{gBGW}}}{\pd t_{2n+1}}
				+\frac{1}{4} \sum_{\substack{a+b = 2n\\a,b:\text{ odd}}}
				\frac{\pd \cF_0^{\text{gBGW}}}{\pd t_a} \frac{\pd \cF_0^{\text{gBGW}}}{\pd t_b}
				+ \half \sum_{k: \text{ odd}} kt_k \frac{\pd\cF_0^{\text{gBGW}}}{\pd t_{k+2n}}
				=0,
			\end{split}
		\end{equation*}
		where $n\geq 1$.
		On the other hand, we have:
		\begin{equation*}
			\begin{split}
				&(y^2)_- = \Big( -2 \frac{\pd\cF_0^{\text{gBGW}}}{\pd t_1}
				+2 \sum_{k: \text{ odd}} kt_k \frac{\pd \cF_0^{\text{gBGW}}}{\pd t_k} \Big) z^{-2}\\
				&+ \sum_{n\geq 1}\Big(
				-2 \frac{\pd\cF_0^{\text{gBGW}}}{\pd t_{2n+1}}
				+ \sum_{\substack{a+b = 2n\\a,b:\text{ odd}}}
				\frac{\pd \cF_0^{\text{gBGW}}}{\pd t_a} \frac{\pd \cF_0^{\text{gBGW}}}{\pd t_b}
				+ 2 \sum_{k: \text{ odd}} kt_k \frac{\pd\cF_0^{\text{gBGW}}}{\pd t_{k+2n}}
				\Big) z^{-2n-2},
			\end{split}
		\end{equation*}
		and thus the conclusion is clear.
	\end{proof}

	The spectral curve is supposed to be obtained from the above special deformation
	by taking $\bm t=0$.
	Recall that the connected $1$-point correlators of genus zero can be solved by the Virasoro constraints
	(see \S \ref{sec-rec-npt}):
	\be
	\frac{\pd \cF_0^{\text{gBGW}} (\bm t)}{\pd t_{2n+1}}\Big|_{\bm t=0}
	= \frac{(-1)^{n+1}}{2^{2n+1}} \cdot \frac{1}{n+1}\binom{2n}{n}\cdot S^{2n+2},
	\qquad \forall n\geq 0,
	\ee
	Thus we have:
	\begin{equation*}
		y \big|_{\bm t=0} = -1 + 1-\sqrt{1+\frac{S^2}{x^2}}
		= -\sqrt{1+\frac{S^2}{x^2}}.
	\end{equation*}
	We rewrite this as:
	\be
	\label{eq-speccurve-eo}
	x^2y^2 = x^2 + S^2.
	\ee
	In the framework of emergent geometry,
	the plane curve defined by this equation \eqref{eq-speccurve-eo}
	is the spectral curve associated to the generalized BGW tau-functions.
	In \S \ref{sec-eo}, we will show that the E-O topological recursion on this spectral curve
	emerges naturally from the Virasoro constraints for $\tau_{\text{BGW}}^{(N)}$.

	\subsection{Quantum deformation theory of the spectral curve}
	\label{sec-QDT}
	
	In this subsection,
	we discuss the quantum deformation theory of the above spectral curve.
	We will see that the Virasoro constraints (at all genera)
	can be encoded in terms of a quantization of the special deformation.

	Denote by $\hy(x)$ the following bosonic field on the spectral curve:
	\be
	\hy (x) = \sum_{n\geq 0} \alpha_{-(2n+1)}\cdot x^{2n}
	+ \sum_{n\geq 0} \alpha_{2n+1}\cdot x^{-2n-2},
	\ee
	where $\{\alpha_k\}_{k\in \bZ:\text{ odd}}$ are the free bosons:
	\be
	\alpha_k= \hbar \frac{\pd}{\pd t_k},
	\qquad
	\alpha_{-k}= \hbar^{-1} (kt_k -\delta_{k,1}),
	\qquad \forall k>0: \text{ odd},
	\ee
	which satisfy the canonical commutation relation
	\begin{equation*}
		[\alpha_i, \alpha_j] = i\delta_{i+j,0},
		\qquad \forall i,j\in \bZ: \text{ odd}.
	\end{equation*}
	This bosonic field is a quantization of the special deformation \eqref{eq-special-def}.
	The product of two such fields is:
	\begin{equation*}
		\begin{split}
			\hy(w)\hy(x) =& :\hy(w) \hy(x): + \sum_{n\geq 0} (2n+1)w^{-2n-2}x^{2n}\\
			=& :\hy(w) \hy(x): + \frac{w^2+x^2}{(w^2-x^2)^2},
		\end{split}
	\end{equation*}
	where the normal ordered product $:\alpha_i\alpha_j:$ of two free bosons is given by:
	\begin{equation*}
		:\alpha_i \alpha_j: =\begin{cases}
			\alpha_i \alpha_j, & \text{ if $i\leq j$;}\\
			\alpha_j \alpha_i, & \text{ if $i> j$.}
		\end{cases}
	\end{equation*}
	
	\be
	\label{eq-hy-product}
	\begin{split}
		\hy(x+\epsilon) \hy(x)
		=& :\hy(x+\epsilon) \hy(x): + \frac{2x^2 + 2\epsilon x
			+\epsilon^2}{(2\epsilon x + \epsilon^2)^2}\\
		=& :\hy(x+\epsilon) \hy(x): +
		\frac{1}{2\epsilon^2} + \frac{1}{8x^2} -\frac{\epsilon}{8x^3} + \frac{3\epsilon^2}{32 x^4}
		-\cdots.
	\end{split}
	\ee
	Notice that there is a singular term $\frac{1}{2\epsilon^2}$ in the right-hand side.
	Define the regularized product of two bosonic fields
	by deleting this singular term:
	\be
	\hy(x)^{\odot 2} = \hy(x)\odot \hy(x)
	= \lim_{\epsilon \to 0} \Big( \hy(x+\epsilon) \hy(x) - \frac{1}{2\epsilon^2}\Big),
	\ee
	then from \eqref{eq-hy-product} we see that:
	\be
	\hy(x)^{\odot 2} = :\hy(x)\hy(x): + \frac{1}{8} x^{-2}.
	\ee
	
	\begin{Proposition}
		The Virasoro constraints for the generalized BGW models are equivalent to:
		\be
		\big( \hy(x)^{\odot 2} + (\frac{1}{8} - \hbar^{-2}S^2)x^{-2} \big)_-
		\big(\exp\cF^{\text{gBGW}} (\bm t; S)\big)=0.
		\ee
	\end{Proposition}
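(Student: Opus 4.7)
\bigskip

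\noindent\textbf{Proof proposal.} The plan is to show that the negative part of $\hy(x)^{\odot 2} + (\tfrac18 -\hbar^{-2}S^2)x^{-2}$, expanded as a Laurent series in $x$ with operator coefficients, has coefficient of $x^{-2k-2}$ equal to (a nonzero scalar multiple of) $L_k^{\text{gBGW}}$ for every $k\geq 0$. Since the operator equation then reads $\sum_{k\geq 0} c_k L_k^{\text{gBGW}} \big(\exp \cF^{\text{gBGW}}\big)\cdot x^{-2k-2}=0$, it is equivalent to the full system of Virasoro constraints \eqref{eq-Virasoro-2}.

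First I would rewrite the Virasoro generators in the bosonic language. Using $\alpha_{2k+1}=\hbar\,\pd/\pd t_{2k+1}$ and $\alpha_{-(2m+1)}=\hbar^{-1}\big((2m+1)t_{2m+1}-\delta_{m,0}\big)$ (i.e. the dilaton shift sits entirely in the $m=0$ mode), I would check that for every $n\geq 1$ the operator $L_n^{\text{gBGW}}$ equals
\begin{equation*}
L_n^{\text{gBGW}} \;=\; \tfrac14 \sum_{\substack{a+b=2n\\a,b>0\text{ odd}}} \alpha_a\alpha_b \;+\; \tfrac12 \sum_{\substack{k\geq 1\\k\text{ odd}}} \alpha_{-k}\alpha_{k+2n},
\end{equation*}
where the $-\tfrac{1}{2}\pd/\pd t_{2n+1}$ term of $L_n^{\text{gBGW}}$ is produced precisely by the dilaton contribution of $\alpha_{-1}\alpha_{2n+1}$. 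A parallel calculation for $n=0$ gives
\begin{equation*}
L_0^{\text{gBGW}} \;=\; \tfrac12 \sum_{\substack{k\geq 1\\k\text{ odd}}} \alpha_{-k}\alpha_k \;+\; \tfrac{1}{16} \;-\; \tfrac{\hbar^{-2}S^2}{4},
\end{equation*}
again modulo the bookkeeping of the dilaton term that produces $-\tfrac12 \pd/\pd t_1$.

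Next I would expand the normal-ordered product
\begin{equation*}
:\hy(x)\hy(x): \;=\; \sum_{m,n\geq 0}\alpha_{-(2m+1)}\alpha_{-(2n+1)}x^{2m+2n} + 2\sum_{m,n\geq 0}\alpha_{-(2m+1)}\alpha_{2n+1}x^{2m-2n-2} + \sum_{m,n\geq 0}\alpha_{2m+1}\alpha_{2n+1}x^{-2m-2n-4},
\end{equation*}
collect the coefficient of $x^{-2k-2}$ for each $k\geq 0$, and add the regularization constant $\tfrac18 x^{-2}$ coming from $\hy(x)^{\odot 2}$ together with the extra $(\tfrac18-\hbar^{-2}S^2)x^{-2}$ appearing in the statement. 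A direct comparison with the bosonic expressions above shows that the coefficient of $x^{-2k-2}$ equals $4L_k^{\text{gBGW}}$: the double mode sum $2\sum_{m\geq 0}\alpha_{-(2m+1)}\alpha_{2(m+k)+1}$ yields the transport term $2\sum k t_k \pd/\pd t_{k+2k}$ plus the dilaton-generated derivative $-2\pd/\pd t_{2k+1}$, the quadratic sum $\sum_{m+n=k-1}\alpha_{2m+1}\alpha_{2n+1}$ yields $\hbar^2\sum_{a+b=2k}\pd^2/\pd t_a\pd t_b$, and for $k=0$ the two constant contributions $\tfrac18+\tfrac18-\hbar^{-2}S^2 = \tfrac14-\hbar^{-2}S^2$ are exactly $4(\tfrac{1}{16}-\hbar^{-2}S^2/4)$.

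The main obstacle is purely organizational: one has to keep careful track of the asymmetric role of the mode $\alpha_{-1}$ (carrying the dilaton shift), so that the derivative term $-\tfrac12 \pd/\pd t_{2k+1}$ in each $L_k^{\text{gBGW}}$ and the precise constants in $L_0^{\text{gBGW}}$ come out correctly. Once this bookkeeping is carried out, the equivalence is immediate: the equality
\begin{equation*}
\big(\hy(x)^{\odot 2} + (\tfrac18-\hbar^{-2}S^2)x^{-2}\big)_- \;=\; 4\sum_{k\geq 0} L_k^{\text{gBGW}}\, x^{-2k-2}
\end{equation*}
holds as an identity of operators on $\bC[[\bm t;\hbar^{\pm 1}]]$, so annihilation of $\exp \cF^{\text{gBGW}}$ by the left-hand side is equivalent to $L_k^{\text{gBGW}}\exp\cF^{\text{gBGW}}=0$ for every $k\geq 0$.
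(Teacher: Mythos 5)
Your proposal is correct and follows essentially the same route as the paper: expand $(\hy(x)^{\odot 2})_-$ in modes, identify the coefficient of $x^{-2n-2}$ (after adding the constant term $(\tfrac18-\hbar^{-2}S^2)x^{-2}$) with $4L_n^{\text{gBGW}}$, and conclude the equivalence with the Virasoro constraints. The only blemish is notational — in the transport term you reuse $k$ both as the level index and the summation index — but the bookkeeping of the dilaton shift and the constants at $n=0$ is exactly right.
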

	\begin{proof}
		First we have:
		\begin{equation*}
			(\hy(x)^{\odot 2})_-
			= \sum_{n\geq 0} \Big(
			\sum_{\substack{a+b =2n\\a,b\geq 1:\text{ odd}}} \alpha_a\alpha_b
			+ 2 \sum_{\substack{k\geq 1\\k: \text{ odd}}} \alpha_{-k}\alpha_{k+2n}
			\Big) x^{-2n-2} + \frac{1}{8}x^{-2}.
		\end{equation*}
		Now compare this with the Virasoro constraints \eqref{eq-Virasoro-2},
		then we see:
		\begin{equation*}
			\big( \hy(x)^{\odot 2} + (\frac{1}{8} - \hbar^{-2}S^2)x^{-2} \big)_-
			= \sum_{n\geq 0} 4 L_n^{\text{gBGW}} \cdot x^{-2n-2}.
		\end{equation*}
		Thus the conclusion holds.
	\end{proof}

	\section{Emergence of Eynard-Orantin Topological Recursion}
	\label{sec-eo}
	
	In this section,
	we show that the Eynard-Orantin topological recursion for the generalized BGW tau-functions
	emerges naturally from the Virasoro constraints.
	The spectral curve and Bergman kernel for this topological recursion
	are given by \eqref{eq-speccurve-eo} and \eqref{eq-Berg-general} respectively.
	At the end of this section,
	we remark that this topological recursion can be reformulated in terms of another Bergman kernel
	which is understood as the Bergman kernel of type $B$.

	\subsection{Bergman kernel and Eynard-Orantin invariants on the spectral curve}
	\label{sec-bergman-eo}
	
	Let $W_{g,n}^{\text{gBGW}}$ be the connected $n$-point functions defined in \S \ref{sec-rec-npt}.
	
	Now define $B(x_1,x_2)$ to be the following symmetric $2$-differential
	on the spectral curve \eqref{eq-speccurve-eo}:
	\be
	B(x_1,x_2) = \Big( \frac{1}{(x_1-x_2)^2}
	+ W_{0,2}^{\text{gBGW}}(x_1,x_2) \Big)dx_1dx_2.
	\ee
	Or more explicitly,
	\be
	\label{eq-def-Bergman}
	B(x_1,x_2) = \frac{1}{(x_1^2-x_2^2)^2} \Big(
	\frac{x_1^2+x_2^2+2S^2}{\sqrt{(1+\frac{S^2}{x_1^2})(1+\frac{S^2}{x_2^2})}}
	+ 2x_1x_2 \Big) dx_1 dx_2,
	\ee
	by using the explicit expression of $W_{0,2}^{\text{gBGW}}$
	computed in Example \ref{eg-w02}.
	Then $B(x_1,x_2)$ is called a fundamental bidifferential on the spectral curve
	(and is also called a Bergman kernel in the Eynard-Orantin topological recursion \cite{eo}).

	Now one can define a family of symmetric differentials $\{\omega_{g,n}\}_{g\geq 0,n\geq 1}$
	on the spectral curve using the E-O topological recursion,
	for which the input data are the spectral curve \eqref{eq-speccurve-eo},
	the two meromorphic functions $x,y$ on this curve,
	and the Bergman kernel $B(x_1,x_2)$.
	Consider the following parametrization of the spectral curve:
	\be
	\label{eq-param-eo}
	x= \sqrt{z^2-S^2},
	\qquad
	y=\frac{z}{\sqrt{z^2-S^2}},
	\ee
	then $x = x(z)$ has a non-degenerate critical point at $z=0$.
	Near this critical point,
	there is an involution $\sigma(z) = -z$ satisfying $x(z) = x(\sigma(z))$.
	The Eynard-Orantin invariants $\{\omega_{g,n}\}$ are defined by:
	\be
	\begin{split}
		&\omega_{0,1}(z) = (-1)^1 \cdot \big( W_{0,1}^{\text{gBGW}}(x(z)) - 1\big) dx(z)
		= y(z)dx(z),\\
		&\omega_{0,2}(z_1,z_2) = B(x(z_1),x(z_2)) = \frac{1}{(z_1-z_2)^2}dz_1dz_2,
	\end{split}
	\ee
	and for $2g-1+n>0$,
	$\omega_{g,n+1}$ is recursively defined by:
	\be
	\label{eq-eorec}
	\begin{split}
		\omega_{g,n+1}(z_0,z_1,\cdots,z_n)=&
		\Res_{z= 0}K(z_0,z)\bigg[
		\omega_{g-1,n+2}(z,\sigma(z),z_1,\cdots,z_n)\\
		&+\sum_{\substack{g_1+g_2=g\\I\sqcup J=[n]}}^s
		\omega_{g_1,|I|+1}(z,z_I)
		\cdot
		\omega_{g_2,|J|+1}(\sigma(z),z_J)
		\bigg],
	\end{split}
	\ee
	where the recursion kernel $K(z_0,z)$ is defined
	locally near the critical point by:
	\be
	K(z_0,z)=
	\frac{\int_{\sigma(z)}^z B(x(z_0),x(z))}
	{2\big(y(z)-y(\sigma(z))\big)dx(z)},
	\ee
	and $\sum\limits^s$ means excluding all the terms involving $\omega_{0,1}$ in this summation.
	For every $g\geq 0$ and $n\geq 1$,
	the E-O invariant $\omega_{g,n}(z_1,\cdots,z_n)$ is
	a symmetric $n$-differential on the spectral curve \eqref{eq-speccurve-eo}.

	Our main result in this section is the following:
	\begin{Theorem}
		\label{thm-eo-reconstr}
		For every pair $(g,n)$ with $2g-2+n>0$, we have:
		\be
		\begin{split}
			\omega_{g,n}(z_1,\cdots, z_n) =& (-1)^n \cdot
			W_{g,n}^{\text{gBGW}}(x_1,\cdots, x_n) dx_1\cdots dx_n\\
			=& (-1)^n\cdot
			\sum_{\mu_1,\cdots,\mu_n:\text{ odd}} \langle  p_{\mu_1}\cdots p_{\mu_n} \rangle_g^c
			\cdot
			\frac{dx_1\cdots dx_n}{ x_1^{\mu_1+1} \cdots x_n^{\mu_n+1}},
		\end{split}
		\ee
		where $\langle  p_{\mu_1}\cdots p_{\mu_n} \rangle_g^c$
		is the connected correlators of the generalized BGW tau-functions.
	\end{Theorem}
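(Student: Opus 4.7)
The natural strategy is induction on $2g-2+n$, comparing the Eynard--Orantin recursion \eqref{eq-eorec} against the Virasoro-derived recursion \eqref{eq-rec-npt} for the $n$-point functions. The base cases $(g,n)=(0,3)$ and $(1,1)$, being the minimal ones with $2g-2+n>0$, can be handled by direct calculation: one computes the residue defining $\omega_{g,n+1}$ using the explicit Bergman kernel \eqref{eq-def-Bergman} and checks it against the correlators produced by \eqref{eq-rec-corr}.

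For the inductive step I would work in the parametrization \eqref{eq-param-eo}, under which the involution is simply $\sigma(z)=-z$ and, crucially, $x\circ\sigma=x$. This identity makes any factor $\omega_{g',n'}(\sigma(z),z_I)$ reduce, via the inductive hypothesis, to the same meromorphic function in the variable $x=x(z)$ that appears in a companion factor $\omega_{\ast,\ast}(z,\ldots)$. In particular, $\omega_{g-1,n+2}(z,\sigma(z),z_1,\dots,z_n)$ restricts to the diagonal $W_{g-1,n+2}^{\text{gBGW}}(x,x,x_1,\dots,x_n)$, which is precisely what the operator $\tE_{x,u,v}$ applied to $W_{g-1,n+2}^{\text{gBGW}}(u,v,x_1,\dots,x_n)$ extracts on the right-hand side of \eqref{eq-rec-npt}.

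The heart of the proof is thus a residue computation. Writing $K(z_0,z)$ explicitly via \eqref{eq-param-eo} one finds that the denominator $2(y(z)-y(\sigma(z)))\,dx(z)$ combines with the $y\,dx$ contribution so as to produce exactly the normalization factor $(1-W_{0,1}^{\text{gBGW}}(x))^{-1}=\sqrt{1+S^2/x^2}$ that is common to both $\tD_{x,x_i}$ and $\tE_{x,u,v}$. The residue against the stable splittings that do not involve $\omega_{0,2}$ yields the $\tE$-type contributions described above. The residues against splittings of the form $\omega_{0,2}(z,z_i)\cdot\omega_{g',n'}(\sigma(z),z_J)$ (and their $\sigma$-partners) should, after a partial-fraction expansion of $B(x(z),x(z_i))$ in $z$ near $z=0$, reproduce the action of $\tD_{x,x_i}$ on the inductively-known $W_{g',n'}^{\text{gBGW}}$, so that the E-O recursion matches \eqref{eq-rec-npt} term-by-term.

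The principal obstacle is this last identification. One must verify by explicit computation that, after rewriting everything in the coordinate $x$, the polar contributions of $K(z_0,z)\,B(x(z),x(z_i))$ at $z=0$ assemble precisely to the expression $\frac{x^2+x_i^2}{(x^2-x_i^2)^2}\bigl(f(x)-f(x_i)\bigr)-\frac{x_i}{x^2-x_i^2}\partial_{x_i}f(x_i)$ defining $D_{x,x_i}$, with the correct sign $(-1)^n$ accounted for by the number of differentials $dx_i$ produced when converting the $dz_i$ on the E-O side via \eqref{eq-param-eo}. This reduces to a concrete but delicate identity of rational functions using \eqref{eq-def-Bergman} and the local expansion of $x(z)$ at $z=0$. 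Once this identity is verified, summing over $i\in\{1,\dots,n\}$ and assembling all pieces reproduces \eqref{eq-rec-npt}, closing the induction and proving the theorem.
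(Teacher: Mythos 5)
Your overall strategy --- prove the theorem by showing that the Eynard--Orantin recursion \eqref{eq-eorec} and the Virasoro-derived recursion are one and the same, by induction on $2g-2+n$ --- is exactly the paper's strategy. But your write-up stops precisely where the proof begins: you name the matching of the residue at $z=0$ with the operators $\tD_{x,x_i}$ and $\tE_{x,u,v}$ as ``the principal obstacle,'' call it ``a concrete but delicate identity of rational functions,'' and then assume it. That identity \emph{is} the content of the theorem; without it you have a plausible plan, not a proof. The paper closes this gap not by manipulating rational functions in $x$ directly, but by working at the level of coefficients: it writes $\omega_{g,n}$ as a series $\sum A_{g,n}^{k_1,\dots,k_n}\prod(2k_i+1)!!\,z_i^{-2k_i-2}dz_i$, converts to a series in the $x_i$ with coefficients $B_{g,n}^{k_1,\dots,k_n}$ via the explicit binomial transformation \eqref{eq-relation-AB} (coming from $z^{-2k}dz=\sum_m\binom{-k-1/2}{m}S^{2m}x^{-2m-2k}dx$), takes the residue in \eqref{eq-eorec} term by term using the expansion of $K(z_0,z)$ near $z=0$, and then shows that the resulting recursion for the $\tB_{g,n}^{k_1,\dots,k_n}=(-1)^n\prod(2k_i+1)!!\,B_{g,n}^{k_1,\dots,k_n}$ is literally the coefficient recursion \eqref{eq-rec-corr}, with the same initial data. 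The change of expansion point (the residue lives at $z=0$, i.e.\ $x^2=-S^2$, while the Virasoro recursion is a statement about expansions at $x=\infty$) is exactly what makes a direct generating-function comparison delicate, and it is what the $A\leftrightarrow B$ bookkeeping is designed to absorb. You would need to supply an equivalent mechanism.

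Two smaller points. First, the base cases of the induction are the unstable ones $(0,1)$ and $(0,2)$, which feed into every application of the recursion; the paper verifies these via the explicit coefficients \eqref{eq-B0102}. The stable cases $(0,3)$ and $(1,1)$ are then outputs of the recursion and need no separate treatment (though they are useful sanity checks). Second, your explanation of the sign $(-1)^n$ is wrong: converting $dz_i$ to $dx_i$ via $dx=\sqrt{1+S^2/x^2}\,dz$ introduces no sign. The $(-1)^n$ originates in the normalization $\omega_{0,1}=(-1)(W_{0,1}^{\text{gBGW}}-1)dx$ and $\omega_{0,2}$ carrying $(-1)^2$, and it propagates consistently because the recursion is quadratic in the $\omega$'s and raises $n$ by one at each step; this is why the paper introduces $\tB_{g,n}$ with the explicit factor $(-1)^n$ before matching against \eqref{eq-rec-corr}.
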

	
	The proof of this theorem will be given in \S \ref{sec-proof-mainthm}.
	\begin{Remark}
		In the special case $N=0$ (and thus $S=0$),
		the spectral curve \eqref{eq-speccurve-eo} degenerates
		and the Eynard-Orantin topological recursion will not work.
		Nevertheless,
		one can first regard $S$ as a formal variable and compute the $n$-point functions $W_{g,n}^{\text{gBGW}}$
		as formal series in $S$ using E-O recursion,
		and then the $n$-point functions for the original BGW model can be obtained by restricting to $S=0$.
	\end{Remark}

		As a straightforward application of the above theorem,
		we give a proof of the following conjecture due to Alexandrov
		(see \cite[(128), (130), and \S 3.4]{al2}) in the rest of this subsection:
		\begin{Conjecture}
			[\cite{al2}]
			\label{conj-al}
			The correlators of the generalized BGW models are given by the Eynard-Orantin topological recursion
			on the following curve:
			\be
			\label{eq-curve-al}
			x^2y^2 -x -\frac{S^2}{4} =0.
			\ee
		\end{Conjecture}
		
		\begin{Remark}
			The original conjecture of Alexandrov is in another form.
			He conjectured that the correlators of the generalized BGW models
			are meromorphic differentials on \eqref{eq-curve-al} whose coefficients are polynomials
			(see \cite[Conjecture 3.7]{al2}),
			and suggested to prove this conjecture using E-O topological recursion.
		\end{Remark}
		
		\begin{Corollary}
			Conjecture \ref{conj-al} holds.
		\end{Corollary}
		\begin{proof}
			To distinguish from the topological recursion on the spectral curve \eqref{eq-speccurve-eo},
			we add a subscript $A$ to the variables $x,y$ in Alexandrov's spectral curve \eqref{eq-curve-al}.
			The curve \eqref{eq-curve-al} admits a parametrization
			(see \cite[(130)]{al2}):
			\begin{equation*}
				x_A =\frac{S^2 \cdot (z_A-1)}{(z_A-2)^2},\qquad
				y_A =\frac{z_A(z_A-2)}{2S\cdot (z_A-1)}.
			\end{equation*}
			Near the critical point is $z_A = 0$,
			an involution $\sigma_A$ is given by:
			\begin{equation*}
				\sigma_A(z_A) = \frac{z_A}{z_A-1}.
			\end{equation*}
			Now compare this parametrization to the parametrization \eqref{eq-param-eo} of the curve \eqref{eq-speccurve-eo},
			and we will see that after taking a M\"obius transformation
			\be
			\label{eqn:z-zA}
			z=\frac{S\cdot z_A}{z_A-2},\qquad
			z_A=\frac{2\, z}{z-S},
			\ee
			the two spectral curves can be related by:
			\begin{equation*}
				x_A (z_A) = \frac{1}{4} x(z)^2,\qquad
				y_A (z_A) = \frac{2y(z)}{x(z)}.
			\end{equation*}
			Then one has:
			\begin{equation*}
				y_A (z_A) dx_A (z_A)
				= \frac{2y(z)}{x(z)} d\Big(\frac{x(z)^2}{4}\Big) = y(z)dx(z) = \omega_{0,1} (z),
			\end{equation*}
			and
			\begin{equation*}
				\begin{split}
					\frac{1}{(z_{A1} - z_{A2})^2} dz_{A1} dz_{A2}
					=& \frac{1}{(\frac{2z_2}{z_1-S} - \frac{2z_2}{z_2-S})^2}
					\cdot \frac{(-2S)dz_1}{(z_1-S)^2} \cdot\frac{(-2S)dz_2}{(z_2-S)^2}\\
					=& \frac{1}{(z_1-z_2)^2} dz_1dz_2
					=\omega_{0,2}(z_1,z_2).
				\end{split}
			\end{equation*}
			(The two equalities above merely mean that the differentials on the left-hand sides
			have the same expressions as the right-hand sides after taking the change of variable \eqref{eqn:z-zA}.
			They are actually differentials on different curves.)
			Moreover,
			one can also check that:
			\begin{equation*}
				\begin{split}
					&\frac{\int_{\sigma_A(z_A)}^{z_A} \frac{1}{(z_{A0}-z_A)^2} dz_{A0}dz_A}
					{2\big(y_A(z_A)-y_A(\sigma_A(z_A))\big)dx_A(z_A)} \\
					=& \frac{(z_A-2)^3 (z_A-1)}{2S\cdot z_A (z_A-z_{A0})(z_Az_{A0}-z_A-z_{A0})} \cdot \frac{dz_{A0}}{dz_A} \\
					=& \frac{z^2 - S^2}{2z(z_0^2 -z^2)} \cdot \frac{dz_0}{dz} = K(z_0,z).
				\end{split}
			\end{equation*}
			Now we see that the E-O topological recursions
			on the two curves \eqref{eq-speccurve-eo} and \eqref{eq-curve-al} have the same initial data
			and recursion kernel,
			and thus they lead to identical $n$-point functions.
			Therefore Conjecture \ref{conj-al} follows from Theorem \ref{thm-eo-reconstr}.
		\end{proof}

	\subsection{Examples of the Eynard-Orantin invariants}
	
	Before proving the above theorem,
	we give some examples of concrete computations of the above topological recursion in this subsection.
	First notice that by the parametrization \eqref{eq-param-eo} we have:
	\begin{equation*}
		dx = \sqrt{1+ \frac{S^2}{x^2}} dz,
	\end{equation*}
	and thus the Bergman kernel \eqref{eq-def-Bergman} can be rewritten as:
	\be
	\begin{split}
		B(x_1,x_2) =&
		\frac{1}{(z_1-z_2)^2}dz_1dz_2,
	\end{split}
	\ee
	and then the recursion kernel is:
	\be
	\label{eq-rec-kernel}
	\begin{split}
		K(z_0,z)=&
		\frac{1}
		{2\big(y(z)-y(-z)\big)dx(z)}
		\int_{-z}^z \frac{1}{(z_0-z)^2}dz_0dz\\
		=& \frac{(z^2 -S^2)}{2z(z_0^2-z^2)}\cdot \frac{dz_0}{dz}.
	\end{split}
	\ee
	
	\begin{Example}
		Now we compute $\omega_{0,3}$.
		By \eqref{eq-eorec} we have:
		\begin{equation*}
			\begin{split}
				\omega_{0,3}(z_0,z_1,z_2)=& \Res_{z= 0}
				K(z_0,z) \big( \omega_{0,2}(z,z_1)\omega_{0,2}(-z,z_2)
				+\omega_{0,2}(z,z_2)\omega_{0,2}(-z,z_1) \big)\\
				=& \frac{S^2}{z_0^2z_1^2z_2^2} dz_0dz_1dz_2,
			\end{split}
		\end{equation*}
		or equivalently,
		\begin{equation*}
			\omega_{0,3}(z_0,z_1,z_2)=
			\frac{S^2 \cdot dx_0 dx_1 dx_2}
			{x_0^2x_1^2x_2^2\cdot
				(1+\frac{S^2}{x_0^2})^{3/2} \cdot (1+\frac{S^2}{x_1^2})^{3/2} \cdot (1+\frac{S^2}{x_2^2})^{3/2}}.
		\end{equation*}
		And then by Theorem \ref{thm-eo-reconstr}:
		\begin{equation*}
			\begin{split}
				&W_{0,3}^{\text{gBGW}}(x_0,x_1,x_2)
				= -\frac{S^2}{x_0^2x_1^2x_2^2}
				+\frac{3S^4}{2 x_0^4x_1^2x_2^2}
				+\frac{3S^4}{2 x_0^2x_1^4x_2^2}
				+\frac{3S^4}{2 x_0^2x_1^2x_2^4}
				-\frac{15S^6}{8 x_0^6x_1^2x_2^2}\\
				& \qquad\qquad\quad
				- \frac{15S^6}{8 x_0^2x_1^6x_2^2}
				-\frac{15S^6}{8 x_0^2x_1^2x_2^6}
				-\frac{9S^6}{4 x_0^4x_1^4x_2^2}
				-\frac{9S^6}{4 x_0^4x_1^2x_2^4}
				-\frac{9S^6}{4 x_0^2x_1^4x_2^4}
				+\cdots.
			\end{split}
		\end{equation*}
	\end{Example}
	
	\begin{Example}
		Now consider $\omega_{0,4}$.
		Using \eqref{eq-eorec} we can compute:
		\begin{equation*}
			\begin{split}
				\omega_{0,4}(z_0,z_1,z_2,z_3)=& \Big(
				\frac{3S^4}{z_0^2 z_1^2 z_2^2 z_3^2}\big(
				\frac{1}{z_0^2}
				+\frac{1}{z_1^2 }
				+\frac{1}{z_2^2 }
				+\frac{1}{z_3^2 }
				\big)-\frac{3S^2}{z_0^2 z_1^2 z_2^2 z_3^2}
				\Big) dz_0dz_2dz_2dz_3,
			\end{split}
		\end{equation*}
		and thus by Theorem \ref{thm-eo-reconstr} we have:
		\begin{equation*}
			\begin{split}
				&W_{0,4}^{\text{gBGW}} (x_0,x_1,x_2,x_3)\\ =&
				\frac{3S^2}{\prod_{i=0}^3 \big( x_i^2 (1+\frac{S^2}{x_i^2})^{3/2} \big)}
				\Big( -1 + \sum_{i=0}^3\frac{S^2}{x_i^2 + S^2} \Big)\\
				=& -\frac{3S^2}{x_0^2x_1^2x_2^2x_3^2}
				+\frac{15S^4}{2 x_0^4x_1^2x_2^2x_3^2}
				+\frac{15S^4}{2 x_0^2x_1^4x_2^2x_3^2}
				+\frac{15S^4}{2 x_0^2x_1^2x_2^4x_3^2}
				+\frac{15S^4}{2 x_0^2x_1^2x_2^2x_3^4}
				+\cdots.
			\end{split}
		\end{equation*}
	\end{Example}
	
	\begin{Example}
		Now we compute $\omega_{1,1}$.
		By \eqref{eq-eorec} we have:
		\begin{equation*}
			\begin{split}
				\omega_{1,1}(z_0)=&
				\Res_{z=0} K(z_0,z) \omega_{0,2}(z,-z)
				= \Big(-\frac{1}{8z_0^2} + \frac{S^2}{8z_0^4}\Big)dz_0,
			\end{split}
		\end{equation*}
		and then by Theorem \ref{thm-eo-reconstr} we have:
		\begin{equation*}
			\begin{split}
				W_{1,1}^{\text{gBGW}} (x_0) =&
				\frac{x_0^3}{8(x_0^2+S^2)^{5/2}}\\
				=& \frac{1}{8 x_0^2} - \frac{5S^2}{16 x_0^4} +
				\frac{35 S^4}{64 x_0^6} - \frac{105 S^6}{128 x_0^8} + \frac{1155 S^8}{1024 x_0^{10}} +\cdots.
			\end{split}
		\end{equation*}
	\end{Example}
	
	\begin{Example}
		Now consider $\omega_{1,2}$.
		By \eqref{eq-eorec} we have:
		\begin{equation*}
			\begin{split}
				\omega_{1,2}(z_0,z_1)=&
				\Res_{z=0} K(z_0,z)\big( \omega_{0,3}(z,-z,z_1) +\\
				&\qquad \omega_{0,2}(z,z_1)\omega_{1,1}(-z)
				+ \omega_{0,2}(-z,z_1)\omega_{1,1}(z)\big)\\
				=& \frac{z_0^4z_1^4-6S^2(z_0^4z_1^2+z_0^2z_1^4)+3S^4 z_0^2z_1^2 +5S^4(z_0^4+z_1^4)}
				{8z_0^6z_1^6} dz_0dz_1.
			\end{split}
		\end{equation*}
		Then by Theorem \ref{thm-eo-reconstr} we have:
		\begin{equation*}
			\begin{split}
				&W_{1,2}^{\text{gBGW}} (x_0,x_1)\\ =&
				\frac{x_0x_1(2 S^8 - 3 S^6 (x_0^2 + x_1^2) - 17 S^4 x_0^2 x_1^2 -
					4 S^2 (x_0^4 x_1^2 + x_0^2 x_1^4) + x_0^4 x_1^4 )}
				{8(x_0^2+S^2)^{7/2}\cdot (x_1^2+S^2)^{7/2}}\\
				=& \frac{1}{8x_0^2x_1^2} -\frac{15S^2}{16x_0^4x_1^2} -\frac{15S^2}{16x_0^2x_1^4}
				+ \frac{175S^4}{64x_0^6 x_1^2} + \frac{175S^4}{64x_0^2 x_1^6}
				+ \frac{93S^4}{32 x_0^4x_1^4} -\cdots.
			\end{split}
		\end{equation*}

	\end{Example}

	\subsection{Proof of Theorem \ref{thm-eo-reconstr}}
	\label{sec-proof-mainthm}
	
	Now we prove Theorem \ref{thm-eo-reconstr}.
	We only need to check that the E-O topological recursion \eqref{eq-eorec}
	is equivalent to the recursion \eqref{eq-rec-npt} derived from the Virasoro constraints.
	
	From the topological recursion \eqref{eq-eorec} one easily sees that when $2g-2+n>0$
	the E-O invariants $\omega_{g,n}$ are of the following form:
	\be
	\omega_{g,n}(z_1,\cdots z_n) = \sum_{k_1,\cdots k_n\geq 0}
	A^{k_1,\cdots,k_n}_{g,n} \cdot \frac{\prod_{i=1}^n(2k_i+1)!!}{z_1^{2k_1+2}\cdots z_n^{2k_n+2}}
	dz_1\cdots dz_n.
	\ee
	Take $x= \sqrt{z^2-S^2}$ and $z=\sqrt{x^2+S^2}$,
	then:
	\be
	z^{-2k} dz = x^{-2k}\cdot \big(1+\frac{S^2}{x^2}\big)^{-k-\half}dx
	= \sum_{m=0}^\infty \binom{-k-\half}{m} S^{2m}x^{-2m-2k} dx,
	\ee
	and for $2g-2+n>0$ one has:
	\be
	\omega_{g,n}(z_1,\cdots z_n) = \sum_{k_1,\cdots k_n\geq 0}
	B^{k_1,\cdots,k_n}_{g,n} \cdot \frac{\prod_{i=1}^n(2k_i+1)!!}{x_1^{2k_1+2}\cdots x_n^{2k_n+2}}
	dx_1\cdots dx_n,
	\ee
	where the numbers $B^{k_1,\cdots,k_n}_{g,n}$ are related to $A^{k_1,\cdots,k_n}_{g,n}$ by:
	\be
	\label{eq-relation-AB}
	\begin{split}
		&B_{g,n}^{l_1,\cdots,l_n} =
		\sum_{k_i + m_i =l_i} \prod_{i=1}^n
		\frac{(-S^2)^{m_i}}{2^{m_i} \cdot m_i!} A_{g,n}^{k_1,\cdots,k_n};\\
		&A_{g,n}^{l_1,\cdots,l_n} =
		\sum_{k_i + m_i =l_i} \prod_{i=1}^n
		\frac{(S^2)^{m_i}}{2^{m_i} \cdot m_i!} B_{g,n}^{k_1,\cdots,k_n}.
	\end{split}
	\ee
	For $(g,n)=(0,1)$ and $(0,2)$,
	we also denote:
	\begin{equation*}
		\begin{split}
			&\omega_{0,1}(z) = dx+\sum_{k\geq 0} B_{0,1}^k\frac{(2k+1)!!}{x^{2k+2}}dx,\\
			&\omega_{0,2}(z_1,z_2) = \frac{1}{(x_1-x_2)^2}
			+ \sum_{k\geq 0} B_{0,2}^{k_1,k_2}\frac{(2k_1+1)!!(2k_2+1)!!}
			{x_1^{2k_1+2}x_2^{2k_2+2}}dx_1dx_2,
		\end{split}
	\end{equation*}
	then direct computation tells that:
	\be
	\label{eq-B0102}
	\begin{split}
		&B_{0,1}^k = - \frac{(-S^2)^{k+1}}{2^{k+1}\cdot (k+1)!\cdot (2k+1)},\\
		&B_{0,2}^{k_1,k_2} = \frac{(-S^2)^{k_1+k_2+1}}{2^{k_1+k_2+1} \cdot k_1!\cdot k_2!\cdot (k_1+k_2+1)}.
	\end{split}
	\ee
	Now Theorem \ref{thm-eo-reconstr} follows from:
	\begin{Lemma}
		For every $g\geq 0$ and $n\geq 1$, we have:
		\be
		\label{eq-lem-pf-eo}
		B_{g,n}^{k_1,\cdots,k_n} = (-1)^n
		\cdot \frac{ \langle p_{2k_1+1} \cdots p_{2k_n+1} \rangle_{g,n}^c}{\prod_{i=1}^n (2k_i+1)!!},
		\qquad \forall k_1,\cdots,k_n\geq 0.
		\ee
	\end{Lemma}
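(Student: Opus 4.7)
The proof proceeds by induction on $2g-2+n$, showing that the Eynard-Orantin recursion \eqref{eq-eorec} is equivalent to the Virasoro-derived recursion \eqref{eq-rec-npt} under the coordinate substitution $x=\sqrt{z^2-S^2}$. For the base cases $(g,n)=(0,1)$ and $(0,2)$, the identity \eqref{eq-lem-pf-eo} is checked by direct computation: using $\langle p_{2k+1}\rangle_0^c = \frac{(-1)^{k+1}}{2^{2k+1}(k+1)}\binom{2k}{k} S^{2k+2}$ together with $(2k+1)!! = (2k+1)!/(2^k k!)$, one verifies that $B_{0,1}^k = -\langle p_{2k+1}\rangle_0^c/(2k+1)!!$ directly from \eqref{eq-B0102}, while the $(0,2)$ check is analogous after expanding the explicit form of $W_{0,2}^{\text{gBGW}}$ in Example~\ref{eg-w02} in inverse powers of $x_1,x_2$.

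For the inductive step, fix $(g,n+1)$ with $2g-1+n>0$ and assume the identity for all $(g',n')$ with $2g'-2+n'<2g-1+n$. By the hypothesis, every $\omega_{g',n'}$ appearing on the right-hand side of \eqref{eq-eorec} with $2g'-2+n'>0$ equals $(-1)^{n'}W_{g',n'}^{\text{gBGW}}(x_{[n']})\prod dx_i$, while $\omega_{0,2}$ retains its full Bergman-kernel form and terms involving $\omega_{0,1}$ are excluded by $\sum^s$. Substituting these into \eqref{eq-rec-kernel}, converting differentials via $dx=(z/x)\,dz$, and using $z^2-S^2=x^2$ together with the identification $1-W_{0,1}(x)=z/x$, one rewrites the right-hand side of \eqref{eq-eorec} as a residue in $z$ acting on an expression depending on $x_0$, $x_{[n]}$, and the inductively known $W_{g',n'}^{\text{gBGW}}$'s. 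The key identification is then term-by-term: the residue picks up contributions from $\omega_{g-1,n+2}(z,-z,z_{[n]})$ and from the non-$\omega_{0,2}$ split products that reproduce the $\tfrac{1}{2}\tE_{x_0,u,v}[\cdots]$ part of \eqref{eq-rec-npt}, while split terms involving an $\omega_{0,2}(z,z_i)$-factor, after Taylor-expanding $1/(z-z_i)^2 = \sum_{m\geq 0}(m+1)z^m/z_i^{m+2}$ around $z=0$, contribute the derivation-type operators $\tD_{x_0,x_i}$ acting on $W_{g,n}^{\text{gBGW}}(x_{[n]})$. The normalization prefactor $(1-W_{0,1}(x_0))^{-1}=x_0/z_0$ present in both $\tE$ and $\tD$ arises organically from the combination of the kernel factor $(z^2-S^2)/(2z)=x^2/(2z)$ and the Jacobian $dz_0=(x_0/z_0)\,dx_0$ produced when converting the residue form back to the $x_0$-coordinate.

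The main obstacle lies in the careful bookkeeping of signs produced by the involution $\sigma(z)=-z$: under pull-back to a single parameter $q$, the bidifferential structure $dz_1\,dz_2$ evaluated at $(q,-q)$ becomes $-dq^2$, and similarly the inductive conversion at the point $\sigma(z)$ involves subtle sign choices when expressing $dx(-q)$ as a form in $q$. These signs must conspire with those produced by $\tE_{x_0,u,v}$ (which symmetrically evaluates $u=v=x_0$) to yield the overall prefactor $(-1)^n$ on the right-hand side of \eqref{eq-lem-pf-eo}. One must also verify that the explicit $\omega_{0,1}$-exclusion on the Eynard-Orantin side corresponds exactly to the $W_{0,1}$-terms absorbed into the $(1-W_{0,1})^{-1}$ prefactors of $\tE$ and $\tD$ in \eqref{eq-rec-npt}. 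Once these correspondences are aligned, the two recursions are seen to produce the same coefficient sequences, and the inductive step follows by matching Laurent expansions in $z$, completing the proof of the lemma and hence of Theorem~\ref{thm-eo-reconstr}.
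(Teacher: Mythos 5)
Your strategy is essentially the paper's: both reduce the lemma to showing that the Eynard-Orantin recursion \eqref{eq-eorec} and the Virasoro recursion \eqref{eq-rec-npt}/\eqref{eq-rec-corr} determine the same data from the same unstable initial values, and your structural identifications (the $\omega_{0,2}$-paired terms producing the $D$-operators, the stable splittings producing the $E$-term, and the prefactor $(1-W_{0,1}(x_0))^{-1}=x_0/z_0$ arising from the kernel factor and the Jacobian) all match what the paper's computation verifies. The only difference is presentational: the paper carries out the matching at the level of the coefficient arrays $A_{g,n}^{k_1,\cdots,k_n}$ and $B_{g,n}^{k_1,\cdots,k_n}$, related by the explicit binomial transformation \eqref{eq-relation-AB}, and compares with the correlator recursion \eqref{eq-rec-corr}, rather than establishing the residue/operator identity directly on generating functions.
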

	\begin{proof}
		The unstable cases $(g,n)=(0,1)$ and $(0,2)$
		can be checked directly using the above explicit expressions.
		Now consider the case $(g,n)$ with $2g-2+n>0$.
		The recursion formula \eqref{eq-eorec} is:
		\begin{equation*}
			\begin{split}
				&\omega_{g,n+1}(z_0,z_1,\cdots,z_n)\\
				=&\Res_{z=0}K(z_0,z) \Big(
				\sum_{i=1}^{n}\big(\omega_{0,2}(z,z_i)\omega_{g,n}(-z,z_{[n]\backslash\{i\}})
				+\omega_{g,n}(z,z_{[n]\backslash\{i\}})\omega_{0,2}(-z,z_i)\big)\\
				&\qquad
				+ \omega_{g-1,n+2}(z,\bar z,z_{[n]})
				+\sum_{h=0}^g\sum_{I+J=[n]}^{\text{stable}}
				\omega_{h,|I|+1}(z,z_{I})\omega_{g-h,|J|+1}(-z,z_{J}) \Big),
			\end{split}
		\end{equation*}
		where $\sum\limits^{\text{stable}}$ means we exclude all terms involving
		$\omega_{0,1}$ or $\omega_{0,2}$ in this summation.
		Using the symmetric property of E-O invariants
		(see \cite{eo}):
		\begin{equation*}
			\omega_{g,n}(z_1,\cdots,z_n)= \omega_{g,n}(z_{\sigma(1)},\cdots,z_{\sigma(n)}),
			\qquad \forall \sigma\in S_{n},
		\end{equation*}
		we know that
		$A_{g,n}^{k_1,\cdots,k_n}= A_{g,n}^{k_{\sigma(1)},\cdots,k_{\sigma(n)}}$
		for every $\sigma\in S_{n}$.
		Then we can rewrite the above recursion for $\omega_{g,n}$ as follows:
		\begin{equation*}
			\begin{split}
				&\sum_{{k_0,\cdots, k_n \geq 0}} A_{g,n+1}^{k_0,\cdots,k_n}
				\frac{\prod_{i=0}^{n}(2k_i+1)!!}{z_0^{2k_0+2}\cdots z_n^{2k_n+2}}
				=\Res_{z=0}dz \Big\{\frac{ {S^2-z^2}}{2z(z_0^2-z^2)} \times\\
				&\qquad \sum_{k_1,\cdots,k_n\geq 0}
				\frac{\prod_{i=1}^{n}(2k_i+1)!!}{z_1^{2k_1+2}\cdots z_{n}^{2k_{n}+2}}
				\Big[ 2\sum_{i=1}^{n}\sum_{ a\geq 0}
				\frac{z^{2k_i}}{(2k_i-1)!!}
				\frac{(2a+1)!!}{z^{2a+2}}A_{g,n}^{a,k_{[n]\backslash \{i\}}} +\\
				&\qquad \sum_{a,b\geq 0}\frac{(2a+1)!!(2b+1)!!}{ z^{2a+2b+4}}
				\Bigl(A_{g-1,n+2}^{a,b,k_1,\cdots,k_{n}}
				+\sum_{h=0}^{g}\sum_{I+J=[n]}^{\text{stable}}
				A_{h,|I|+1}^{a,k_{I}} A_{g-h,|J|+1}^{b,k_{J}}\Big) \Big]\Big\}.
			\end{split}
		\end{equation*}
		Notice that near $z=0$ we have:
		\begin{equation*}
			\frac{ {S^2-z^2}}{z_0^2-z^2}=S^2 \cdot\sum_{m=0}^\infty\frac{z^{2m}}{z_0^{2m+2}}
			-\sum_{m=0}^\infty\frac{z^{2m+2}}{z_0^{2m+2}},
		\end{equation*}
		thus by taking residues in the above recursion we obtain
		(after some simplification):
		\begin{equation*}
			\begin{split}
				&\sum_{k_0=0}^{m}\binom{m}{k_0}\cdot
				\Big(-\frac{S^2}{2}\Big)^{m-k_0}
				\cdot \frac{(2k_0+1)!!}{2^{k_0+1}}A_{g,n+1}^{k_0,k_1,\cdots,k_n}\\
				=&-\sum_{i=1}^{n}\sum_{k_0\geq 0} \binom{m+1}{k_0}\cdot
				\Big(-\frac{S^2}{2}\Big)^{m+1-k_0} \cdot
				\frac{(2k_i+2k_0-1)!!\cdot A_{g,n}^{k_i+k_0-1,k_{[n]\backslash\{i\}}}}{2^{k_0}\cdot (2k_i-1)!! }\\
				&-\frac{1}{2}\sum_{a,b\geq 0}\binom{m+1}{a+b+2}
				\frac{(2a+1)!!(2b+1)!!}{2^{a+b+2}}\Big(-\frac{S^2}{2}\Big)^{m-1-a-b}
				\Big(A_{g-1,n+2}^{a,b,k_{[n]}}\\
				&+\sum_{h=0}^g \sum_{I\sqcup J=[n]}^{\text{stable}}
				A_{h,|I|+1}^{a,k_{I}}\cdot A_{g-h,|J|+1}^{b,k_J}\Big).
			\end{split}
		\end{equation*}
		Then we rewrite this recursion as recursion for $B_{g,n}^{k_1,\cdots,k_n}$
		using \eqref{eq-relation-AB}.
		After some simplification,
		the result is:
		\begin{equation*}
			\begin{split}
				&-\sum_{b=0}^{m}\frac{(2m-2b-1)!! \cdot(2b+1)!! \cdot
					(-\frac{S^2}{2})^{m-b}}{2^{m+1} \cdot (m-b)! \cdot (2m-2b-1)} \cdot B_{g,n+1}^{b,k_1,\cdots,k_n}\\
				=&-\sum_{i=1}^{n} \Bigl(\frac{(2m+2k_i+1)!!}{(2k_i-1)!! \cdot 2^{m+1}}
				\cdot B_{g,n}^{m+k_i,k_{[n]\backslash\{i\}}}\\
				&\qquad +\sum_{a+b=m-1}
				\frac{(2a+1)!!\cdot (2b+1)!! \cdot (-\frac{S^2}{2})^{a+k_i+1}}
				{2^{m+1} \cdot a! \cdot  k_i! \cdot(a+k_i+1)} \cdot
				B_{g,n}^{b,k_{[n]\backslash\{i\}}}\Bigr) \\
				&-\frac{1}{2}\sum_{a'+b'=m-1}\frac{(2a'+1)!!(2b'+1)!!}{2^{m+1}}
				\Bigl(B_{g-1,n+2}^{a',b',k_{[n]}}
				+\sum_{\substack{g_1+g_2=g\\I+J=[n]}}^{\text{stable}}
				B_{g_1,|I|+1}^{a',k_{I}}B_{g_2,|J|+1}^{b',k_J}\Bigr).
			\end{split}
		\end{equation*}
		Now using \eqref{eq-B0102} we rewrite this recursion as:
		\begin{equation*}
			\begin{split}
				-\sum_{b=0}^{m}
				\tB_{0,1}^{m-b-1} &\tB_{g,n+1}^{b,k_1,\cdots,k_n}
				=\sum_{i=1}^{n} \Big( \tB_{g,n}^{m+k_i,k_{[n]\backslash\{i\}}}
				+\sum_{a+b=m-1}
				\tB_{0,2}^{a,k_i} \tB_{g,n}^{b,k_{[n]\backslash\{i\}}}\Big) \\
				&+\frac{1}{2}\sum_{a'+b'=m-1}
				\Big( \tB_{g-1,n+2}^{a',b',k_{[n]}}
				+\sum_{\substack{g_1+g_2=g\\I+J=[n]}}^{\text{stable}}
				\tB_{g_1,|I|+1}^{a',k_{I}}\tB_{g_2,|J|+1}^{b',k_J}\Big).
			\end{split}
		\end{equation*}
		where we use the notation $\tB_{0,1}^{-1} = -1$, and:
		\begin{equation*}
			\tB_{g,n}^{k_1,\cdots,k_n} = (-1)^n\cdot
			\prod_{i=1}^n (2k_i+1)!!\cdot B_{g,n}^{k_1,\cdots,k_n},
			\qquad \forall k_1,\cdots,k_n\geq 0.
		\end{equation*}
		This matches with the Virasoro constraints \eqref{eq-rec-corr} for the connected correlators
		(notice that the summation in the right-hand side of \eqref{eq-rec-corr} involves
		unstable terms $\langle p_i \rangle_0^c$ and $\langle p_i p_j \rangle_0^c$).
		Thus we conclude that
		\begin{equation*}
			\tB_{g,n}^{k_1,\cdots,k_n} =
			\langle p_{2k_1+1} \cdots p_{2k_n+1} \rangle_{g,n}^c,
			\qquad \forall k_1,\cdots,k_n\geq 0,
		\end{equation*}
		since the two sides are determined by the same recursion formula and initial values.
		This completes the proof.
	\end{proof}

	\subsection{A remark on Bergman kernel of type $B$}
	\label{sec-berg-B}
	
	In this subsection,
	we give a remark on the Bergman kernel of type $B$
	which provides a reformulation of the above topological recursion.

	Let $\cC$ be a rational spectral curve with a complex parameter $z$,
	and consider the following symmetric $2$-differential on the spectral curve:
	\be
	\label{eq-Berg-B}
	B(z_1,z_2) =
	\frac{z_1^2 + z_2^2}{(z_1^2-z_2^2)^2}dz_1dz_2
	= \half \Big(
	\frac{1}{(z_1-z_2)^2} + \frac{1}{(z_1+z_2)^2}
	\Big) dz_1dz_2 .
	\ee
	In literatures the above $2$-differential is supposed to serve as the Bergman kernel
	for the topological recursion associated to a tau-function of the BKP hierarchy.
	The first example of such a Bergman kernel in topological recursion
	might be the Witten-Kontsevich tau-function \cite{zhou2},
	see \cite[(26)]{zhou2}.
	In that work,
	Zhou proved that the Virasoro constraints \cite{dvv, fkn} for
	the Witten-Kontsevich tau-function $\tau_{\text{WK}}(\bm t)$
	are equivalent the Eynard-Orantin topological recursion on the Airy curve,
	and Bergman kernel he chose is exactly \eqref{eq-Berg-B}.
	Zhou treated $\tau_{\text{WK}}(\bm t)$ as a tau-function of the KdV hierarchy
	(due to the Witten Conjecture/Kontsevich Theorem \cite{wit, kon}),
	and now we may expect that the emergence of such a Bergman kernel in this case
	is because $\tau_{\text{WK}}(\bm t/2)$ is a tau-function of the BKP hierarchy.
	The function $(x_1^2 + x_2^2)/(x_1^2-x_2^2)^2$ also appeared in Bertola-Dubrovin-Yang \cite[Theorem 1.7]{bdy},
		as the correction term at $n=2$ in a formula for the $n$-point functions of an arbitrary tau-function of KdV hierarchy.
	Recently in \cite{as},
	Alexandrov and Shadrin proved the blobbed topological recursion
	for a class of hypergeometric $2$-BKP tau-functions
	where the Bergman kernel is \eqref{eq-Berg-B} (see \cite[(7.1)]{as}),
	and their results confirmed the conjectural topological recursion \cite[Conjecture 1.3]{gkl}
	for the spin Hurwitz numbers~\cite{eop} with completed cycles.
	It is worth mentioning that the function $(x_1^2 +x_2^2)/(x_1^2-x_2^2)^2$ in this Bergman kernel
	differs from the correction term at $n=2$
	in the right-hand side of \eqref{eq-npt-general}
	only by a factor $\frac{x_1x_2}{2}$.

	Now in this subsection,
	we make a simple observation that the above Eynard-Orantin topological recursion
	for generalized BGW models can also be reformulated using such a Bergman kernel.
	Consider the spectral curve \eqref{eq-speccurve-eo} with parametrization \eqref{eq-param-eo}.
	Similar to the construction \eqref{eq-Berg-general},
	we consider the following Bergman kernel modified by all connected correlators
	of type $(g,n) = (0,2)$:
	\be
	\tB(x_1,x_2) = \Big( \half\big(\frac{1}{(x_1-x_2)^2} +\frac{1}{(x_1+x_2)^2}\big)
	+ W_{0,2}^{\text{gBGW}}(x_1,x_2) \Big)dx_1dx_2.
	\ee
	Or more explicitly (by using the result in Example \ref{eg-w02}):
	\be
	\tB(x_1,x_2) = \frac{1}{(x_1^2-x_2^2)^2} \cdot
	\frac{x_1^2+x_2^2+2S^2}{\sqrt{(1+\frac{S^2}{x_1^2})(1+\frac{S^2}{x_2^2})}} dx_1 dx_2.
	\ee
	Using the parametrization \eqref{eq-param-eo},
	we can rewrite it as:
	\be
	\tB(x_1,x_2) =
	\half \Big( \frac{1}{(z_1-z_2)^2} + \frac{1}{(z_1+z_2)^2} \Big)dz_1dz_2,
	\ee
	where $x_i = x(z_i)$.
	Then we have:
	\begin{Theorem}
		Define a family of symmetric differentials $\{\tomega_{g,n}\}_{g\geq 0,n\geq 1}$
		on the spectral curve \eqref{eq-speccurve-eo} as follows:
		\begin{equation*}
			\begin{split}
				&\tomega_{0,1}(z) = y(z)dx(z),
				\qquad
				\tomega_{0,2}(z_1,z_2) = \tB(x(z_1),x(z_2)),\\
				&\tomega_{1,1}(z_0)
				= \Big(-\frac{1}{8z_0^2} + \frac{S^2}{8z_0^4}\Big)dz_0,
			\end{split}
		\end{equation*}
		and for $2g-1+n>0$ and $(g,n)\not = (1,0)$,
		the differential $\omega_{g,n+1}$ is recursively defined by the Eynard-Orantin topological recursion:
		\be
		\label{eq-eorec-B}
		\begin{split}
			\tomega_{g,n+1}(z_0,z_1,\cdots,z_n)=&
			\Res_{z= 0}\tK(z_0,z)\bigg[
			\tomega_{g-1,n+2}(z,-z,z_1,\cdots,z_n)\\
			&+\sum_{\substack{g_1+g_2=g\\I\sqcup J=[n]}}^s
			\tomega_{g_1,|I|+1}(z,z_I)
			\tomega_{g_2,|J|+1}(-z,z_J)
			\bigg],
		\end{split}
		\ee
		where the recursion kernel $\tK(z_0,z)$ is defined by:
		\be
		\tK(z_0,z)=
		\frac{\int_{\sigma(z)}^z \tB(x(z_0),x(z))}
		{2\big(y(z)-y(\sigma(z))\big)dx(z)}.
		\ee
		Then for every pair $(g,n)$ with $2g-2+n>0$, we have:
		\be
		\begin{split}
			\tomega_{g,n}(z_1,\cdots, z_n) =& (-1)^n \cdot
			W_{g,n}^{\text{gBGW}}(x_1,\cdots, x_n) dx_1\cdots dx_n.
		\end{split}
		\ee
	\end{Theorem}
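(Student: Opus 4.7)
The plan is to reduce the theorem to Theorem \ref{thm-eo-reconstr} by showing that the $\tB$-recursion produces the same invariants as the $B$-recursion whenever $2g-2+n>0$. First I would observe that the two Bergman kernels differ by
\[
\tB(z_1,z_2)-B(z_1,z_2) = \frac{1}{2}\Bigl(\frac{1}{(z_1+z_2)^2}-\frac{1}{(z_1-z_2)^2}\Bigr)\,dz_1\,dz_2,
\]
and a direct integration using $\int_{-z}^{z}\frac{dz'}{(z_0\pm z')^2}=\mp\frac{2z}{z_0^2-z^2}$ gives $\int_{-z}^{z}(\tB-B)(z_0,\cdot)=0$. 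Hence the recursion kernels agree, $\tK(z_0,z)=K(z_0,z)$, and the only possible source of discrepancy between $\tomega_{g,n}$ and $\omega_{g,n}$ is the replacement $\omega_{0,2}\mapsto\tomega_{0,2}$ in the unstable summands on the right-hand side of the recursion.

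I would then proceed by induction on $\chi=2g-2+n$. At $\chi=1$ there are two base cases: $(g,n)=(1,1)$, which holds by the prescribed definition (the formula in the statement matches the $\omega_{1,1}$ computed in \S \ref{sec-bergman-eo}), and $(g,n)=(0,3)$, which I would verify by an explicit residue computation of $\tomega_{0,3}(z_0,z_1,z_2)=\Res_{z=0}K(z_0,z)[\tomega_{0,2}(z,z_1)\tomega_{0,2}(-z,z_2)+(1\leftrightarrow 2)]$. For the inductive step $\chi\geq 2$, I would set
\[
\Delta_{0,2}^{\text{sc}}(z,z_i) := \tomega_{0,2}^{\text{sc}}(z,z_i)-\omega_{0,2}^{\text{sc}}(z,z_i) = -\frac{2z z_i}{(z^2-z_i^2)^2},
\]
which is odd in each variable. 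By Theorem \ref{thm-eo-reconstr} together with the fact that $x(z)=\sqrt{z^2-S^2}$ depends only on $z^2$, the scalar $\omega_{g',n'}^{\text{sc}}$ is even in each $z_j$ for every stable $(g',n')$. Consequently, for each $i\in[n]$ the two orderings $(I,J)=(\{i\},[n]\setminus\{i\})$ and $([n]\setminus\{i\},\{i\})$ in the sum $\sum^{s}$ contribute to the discrepancy the combined term
\[
\Delta_{0,2}^{\text{sc}}(z,z_i)\,\omega_{g,n}^{\text{sc}}(-z,\vec{z}_{[n]\setminus\{i\}}) + \omega_{g,n}^{\text{sc}}(z,\vec{z}_{[n]\setminus\{i\}})\,\Delta_{0,2}^{\text{sc}}(-z,z_i),
\]
which vanishes since $\omega_{g,n}^{\text{sc}}(-z,\cdot)=\omega_{g,n}^{\text{sc}}(z,\cdot)$ by evenness and $\Delta_{0,2}^{\text{sc}}(-z,z_i)=-\Delta_{0,2}^{\text{sc}}(z,z_i)$ by oddness. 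Note that for $\chi\geq 2$ every $\omega_{0,2}$ appearing in $\sum^{s}$ is paired with a stable factor, since two simultaneous $\omega_{0,2}$'s would force $(g,n+1)=(0,3)$.

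The main obstacle is the $(g,n)=(0,3)$ base case: there both factors in the recursion sum are the unstable $\omega_{0,2}$, so the parity cancellation used in the inductive step does not apply. Instead I would analyze three types of discrepancy terms, $\Delta_{0,2}\cdot\omega_{0,2}$, $\omega_{0,2}\cdot\Delta_{0,2}$, and $\Delta_{0,2}\cdot\Delta_{0,2}$, and verify by explicit computation that their combined residue at $z=0$ vanishes, giving $\tomega_{0,3}=\omega_{0,3}=\frac{S^2}{z_0^2 z_1^2 z_2^2}\,dz_0\,dz_1\,dz_2$. With this case settled, the induction above yields $\tomega_{g,n}=\omega_{g,n}=(-1)^n W_{g,n}^{\text{gBGW}}\,dx_1\cdots dx_n$ for all $2g-2+n>0$, completing the proof.
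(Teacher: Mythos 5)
Your proposal is correct and follows the same two-step strategy as the paper's proof: first the equality of recursion kernels $\tK(z_0,z)=K(z_0,z)$, then a parity argument in $z$ showing that replacing $\omega_{0,2}$ by $\tomega_{0,2}$ inside the recursion does not change the residue at $z=0$. The paper phrases the parity step globally --- the kernel is an odd Laurent series in $z$, so only the even part of the bracket contributes, and $[\omega_{0,2}]_{\text{even}}=[\tomega_{0,2}]_{\text{even}}$ while all stable invariants are even --- whereas you organize it as a pairwise cancellation of the discrepancy terms $\Delta_{0,2}$; these are equivalent whenever at most one unstable factor appears in a summand. Where your write-up is genuinely sharper is the $(0,3)$ case: there both factors are $\omega_{0,2}$, and the even parts of the two brackets actually differ by
$2\,[\omega_{0,2}]_{\text{odd}}(z,z_1)\,[\omega_{0,2}]_{\text{odd}}(z,z_2)
=\frac{8z^2z_1z_2}{(z^2-z_1^2)^2(z^2-z_2^2)^2}$,
so the paper's assertion that identity \eqref{eq-pf-B-eo} is ``clear'' does not literally hold in that case; the correct reason the residue is nonetheless unchanged is that this discrepancy vanishes to order $z^2$ while the kernel has only a simple pole at $z=0$ --- exactly the explicit check you propose. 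One slip to fix: your intermediate formula $\int_{-z}^{z}(z_0\pm z')^{-2}\,dz'=\mp\frac{2z}{z_0^2-z^2}$ has the wrong sign pattern; both integrals equal $+\frac{2z}{z_0^2-z^2}$, which is precisely why their difference vanishes and $\tK=K$, whereas the $\mp$ as written would contradict the conclusion you draw from it.
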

	\begin{Remark}
		Here we exclude the case $\tomega_{1,1}$ in the recursive definition \eqref{eq-eorec-B}
		since in this case $\tomega_{0,2}(z,-z)$ in the right-hand side is not well-defined.
	\end{Remark}
	\begin{proof}
		It suffices to check that $\tomega_{g,n} = \omega_{g,n}$ for $2g-2+n>0$.
		We prove by induction.
		Assume that $\tomega_{g',n'} = \omega_{g',n'}$ holds for all $(g',n')$ (where $2g'-2+n'>0$)
		satisfying $g'<g$, or $g'=g$ and $n'<n+1$.
		We need to prove:
		\begin{equation*}
			\begin{split}
				&\Res_{z= 0} \tK(z_0,z)\bigg[
				\tomega_{g-1,n+2}(z,-z,z_{[n]})
				+\sum_{\substack{g_1+g_2=g\\I\sqcup J=[n]}}^s
				\tomega_{g_1,|I|+1}(z,z_I)
				\tomega_{g_2,|J|+1}(-z,z_J)
				\bigg]\\
				=&
				\Res_{z= 0}K(z_0,z)\bigg[
				\omega_{g-1,n+2}(z,-z,z_{[n]})
				+\sum_{\substack{g_1+g_2=g\\I\sqcup J=[n]}}^s
				\omega_{g_1,|I|+1}(z,z_I)
				\omega_{g_2,|J|+1}(-z,z_J)
				\bigg].
			\end{split}
		\end{equation*}
		First notice that the recursion kernel
		\begin{equation*}
			\begin{split}
				\tK(z_0,z)=&
				\frac{1}
				{4\big(y(z)-y(-z)\big)dx(z)}
				\int_{-z}^z \big( \frac{1}{(z_0-z)^2} + \frac{1}{(z_0+z)^2} \big)dz_0dz\\
				=& \frac{(z^2 -S^2)}{2z(z_0^2-z^2)}\cdot \frac{dz_0}{dz}
			\end{split}
		\end{equation*}
		coincide with the recursion kernel $K(z_0,z)$ (see \eqref{eq-rec-kernel}) for $\omega_{g,n}$.
		Moreover,
		\begin{equation*}
			\tK(z_0,z)= K(z_0,z)
			= -\half
			\Big( \sum_{m\geq 0} S^2\cdot\frac{z^{2m-1}}{z_0^{2m+2}}
			-\sum_{m\geq 0} \frac{z^{2m+1}}{z_0^{2m+2}} \Big)
		\end{equation*}
		is a Laurent series in $z$ which contains only terms of odd degrees in $z$,
		thus it suffices to prove that:
		\be
		\label{eq-pf-B-eo}
		\begin{split}
			&\bigg[
			\tomega_{g-1,n+2}(z,-z,z_{[n]})
			+\sum_{\substack{g_1+g_2=g\\I\sqcup J=[n]}}^s
			\tomega_{g_1,|I|+1}(z,z_I)
			\tomega_{g_2,|J|+1}(-z,z_J)
			\bigg]_{\text{even}}\\
			=&
			\bigg[
			\omega_{g-1,n+2}(z,-z,z_{[n]})
			+\sum_{\substack{g_1+g_2=g\\I\sqcup J=[n]}}^s
			\omega_{g_1,|I|+1}(z,z_I)
			\omega_{g_2,|J|+1}(-z,z_J)
			\bigg]_{\text{even}},
		\end{split}
		\ee
		where $[\cdot]_{\text{even}}$ means taking the terms of even degrees in $z$.
		Notice that we have:
		\begin{equation*}
			[\omega_{0,2}]_{\text{even}} = [\tomega_{0,2}]_{\text{even}},
		\end{equation*}
		and by induction hypothesis and Theorem \ref{thm-eo-reconstr} we know that
		$\omega_{g',n'}= \tomega_{g',n'}$
		contains only terms of even degrees for $g'<g$, or $g'=g$ and $n'<n+1$
		(where $2g'-2+n'>0$),
		and thus the identity \eqref{eq-pf-B-eo} is clear.
	\end{proof}

	\section{Emergence of Quantum Spectral Curve of Type $B$}
	\label{sec-qsc-B}
	
	In this section,
	we discuss the emergence of the quantum spectral curve
	of the classical curve \eqref{eq-speccurve-eo}.
	We first show that there exists a Kac-Schwarz operator $P$ of type $B$
	for the generalized BGW tau-function which annihilates the BKP-wave function
	using BKP-affine coordinates.
	(This operator has been discussed by Alexandrov in \cite{al3}.)
	Then we show that the semi-classical limit of $P$ gives the classical curve \eqref{eq-speccurve-eo}.
	Since we already know that the generalized BGW models can be reconstructed from
	the Eynard-Orantin topological recursion on \eqref{eq-speccurve-eo},
	this operator $P$ provides a quantum spectral curve in the sense of Gukov-Su{\l}kowski.

	\subsection{Emergence of Kac-Schwarz operators from BKP-affine coordinates}
	
	In this subsection we recall the formulation of Kac-Schwarz operators of type $B$
	in terms of BKP-affine coordinates,
	see \cite[\S 3]{jwy} for details.
	
	Let $\tau(\bm t)$ be a tau-function of the BKP hierarchy satisfying the condition $\tau(0) =1$,
	where $\bm t = (t_1,t_3,t_5,\cdots)$ are the BKP-time variables.
	The wave function associated to $\tau(\bm t)$ is defined by Sato's formula \cite{djkm}:
	\be
	\label{eq-SatoFormula}
	w_B (\bm t;z) = \exp\Big(\sum_{k=0}^\infty t_{2k+1} z^{2k+1}\Big)\cdot
	\frac{\tau (t_1 - \frac{2}{z}, t_3 -\frac{2}{3z^3}, t_5-\frac{2}{5z^5},\cdots)}{\tau (t_1,t_3,t_5,\cdots)}.
	\ee
	In particular,
	$w_B(0;z)$ is a principal specialization of $\tau$:
	\begin{equation*}
		w_B (0;z) =
		\tau \big(- \frac{2}{z},  -\frac{2}{3z^3}, -\frac{2}{5z^5},\cdots \big).
	\end{equation*}
	One can associate a linear subspace $U_\tau$ of $\cH = \bC[z] \oplus z^{-1}\bC[[z^{-1}]]$
	to $\tau(\bm t)$ in the following way:
	\be
	U_\tau =
	\text{span} \big\{\pd_{t_1}^kw_B(0;z) \big\}_{k\geq 0}.
	\ee
	An operator $P$ (acting on formal Laurent series in $z$) is called
	a Kac-Schwarz operator of type $B$ for the tau-function $\tau(\bm t)$,
	if it satisfies:
	\be
	P(U_\tau) \subset U_\tau.
	\ee
	The subspace $U_\tau \subset \cH$ emerges naturally from the tau-function $\tau(\bm t)$ via
	its BKP-affine coordinates $\{a_{n,m}\}_{n,m\geq 0}$,
	due to the following:
	\begin{Theorem}
		[\cite{jwy}]
		\label{thm-wave-affine}
		We have:
		\be
		U_\tau =
		\text{span} \big\{ z^k + \sum_{i=1}^\infty 2(-1)^{i} (a_{k,i} - a_{k,0}a_{0,i})
		z^{-i} \big\}_{k\geq 0}.
		\ee
		In particular,
		the first basis vector coincides with $w_B(0;z)$:
		\be
		w_B(0;z) =1+ \sum_{i=1}^\infty 2(-1)^i\cdot  a_{0,i} z^{-i}.
		\ee
	\end{Theorem}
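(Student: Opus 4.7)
The plan is to reduce everything to the Schur $Q$-function expansion
\[
\tau(\bm t) = \sum_{\mu \in DP}(-1)^{\lceil l(\mu)/2\rceil}\Pf\bigl(a_{\mu_i,\mu_j}\bigr)\,Q_\mu(\bm t/2)
\]
of a BKP tau-function, combined with the principal specialization $t_k \mapsto -2/(kz^k)$ appearing in Sato's formula \eqref{eq-SatoFormula}. The second assertion of the theorem is the $k=0$ case of the first (using $a_{0,0}=0$), so the substance is the identification of a basis of $U_\tau$. I would handle this in two stages: first compute $w_B(0;z)$, then iteratively differentiate in $t_1$.

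To compute $w_B(0;z) = \tau(-\bm\epsilon(z))$ I would evaluate each $Q_\mu(\bm t/2)$ at $t_k = -1/(kz^k)$ for odd $k$. For one-row partitions, the generating series $\sum_{n\ge 0} Q_{(n)}(\bm t)u^n = \exp(2\sum_{k\ \text{odd}} t_k u^k)$ combined with $\sum_{k\ \text{odd}} x^k/k = \tfrac12\log\tfrac{1+x}{1-x}$ gives $Q_{(n)}(\bm t/2)|_{\mathrm{spec}} = 2(-1)^n z^{-n}$ for $n\ge 1$. The key vanishing lemma to establish is that $Q_\mu(\bm t/2)|_{\mathrm{spec}} = 0$ whenever $l(\mu)\ge 2$: for $l(\mu)=2$ the identity $Q_{(m,n)} = Q_m Q_n + 2\sum_{k=1}^{n}(-1)^k Q_{m+k}Q_{n-k}$ specializes to a telescoping sum that vanishes after splitting the parity cases of $n$, and for $l(\mu)\ge 3$ the Pfaffian presentation $Q_\mu = \Pf(Q_{(\mu_i,\mu_j)})$ (padded with $\mu_{l+1}=0$ when $l(\mu)$ is odd) reduces after specialization to an antisymmetric matrix of rank at most $2$ whose Pfaffian is zero in size $\ge 4$. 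Substituting back, only the one-row $\mu=(n)$ terms contribute, yielding $w_B(0;z) = 1 + \sum_{i\ge 1} 2(-1)^i a_{0,i} z^{-i}$.

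For $k\ge 1$ I would iterate $\partial_{t_1}$ on $w_B(\bm t;z) = e^{\xi(\bm t,z)}\tau(\bm t-\bm\epsilon(z))/\tau(\bm t)$ and evaluate at $\bm t=0$. Since $\partial_{t_1}^k|_{\bm t=0}$ selects the $Q_{(k)}$-component of each factor, the same vanishing lemma forces the surviving contributions in $\partial_{t_1}^k \tau(-\bm\epsilon(z))$ to come from two-row partitions $\mu=(k,n)$, which bring in precisely the affine coordinate $a_{k,n}$ with a factor $2(-1)^n z^{-n}$; this produces the leading $z^k + \sum_i 2(-1)^i a_{k,i} z^{-i}$. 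The normalization $1/\tau(\bm t)$, whose $t_1$-derivatives at $\bm t=0$ are controlled by $a_{0,m}$, supplies the gauge-correction term $-a_{k,0}a_{0,i}$, while lower-order derivatives are absorbed into linear combinations of the previously constructed basis vectors $v_0,\ldots,v_{k-1}$; triangularity in the leading $z^k$ then confirms that the span equals $U_\tau$. The main obstacle is the combinatorial bookkeeping of signs and multiplicities when iterating $\partial_{t_1}$ on the ratio $\tau(\bm t-\bm\epsilon(z))/\tau(\bm t)$ and re-expressing the cross-terms in terms of $\{a_{n,m}\}$. A cleaner conceptual fallback, which I would turn to if the direct expansion becomes unwieldy, is to work in the neutral BKP fermionic Fock space: the tau-function is $|\tau\rangle = \exp(\sum_{n,m\ge 0} a_{n,m}\phi_n\phi_m)|0\rangle$, and the basis vectors of $U_\tau$ can be read off from $\phi_k|\tau\rangle/\langle 0|\tau\rangle$ via the boson--fermion correspondence, bypassing the iterated differentiation entirely.
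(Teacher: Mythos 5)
The paper itself offers no proof of this theorem: it is imported from \cite{jwy}, where the argument runs through the neutral-fermion Fock space and the basis vectors of $U_\tau$ are read off from fermionic one-point functions of the state $|\tau\rangle$ via the type-$B$ boson-fermion correspondence. Your primary route is therefore genuinely different, and its first half is sound: the principal specialization $t_k\mapsto -2/(kz^k)$ sends $Q_{(n)}(\bm t/2)$ to $2(-1)^n z^{-n}$, kills $Q_{(m,n)}$ by the telescoping computation, and kills $Q_\mu$ for $l(\mu)\ge 3$ because the specialized matrix in $Q_\mu=\Pf\big(Q_{(\mu_i,\mu_j)}\big)$ is supported on its last row and column, so every perfect matching of size at least two contains a vanishing pair. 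This correctly yields $w_B(0;z)=1+\sum_{i\ge1}2(-1)^i a_{0,i}z^{-i}$, i.e.\ the $k=0$ case and the second displayed identity.

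The gap is in the step for $k\ge 1$. It is not true that $\pd_{t_1}^k$ ``selects the $Q_{(k)}$-component'' or that only $\mu=(k,n)$ survives in $\pd_{t_1}^k\tau(-\bm\epsilon(z))$: since $\pd_{t_1}Q_\mu(\bm t/2)$ is a positive combination of $Q_\nu$ over strict $\nu=\mu-\square$, every two-row partition $(m,n)$ with $n\le k\le m+n$ contributes a nonzero multiple of $Q_{(m+n-k)}\big|_{\mathrm{spec}}$ after $k$ derivatives, and partitions of length $\ge 3$ contribute degree-two Pfaffian minors as well. For instance $\mu=(3,1)$ contributes a term proportional to $a_{1,3}z^{-2}$ to $\pd_{t_1}^2\tau(-\bm\epsilon(z))$; this coordinate does not occur in the claimed coefficients $2(-1)^i(a_{2,i}-a_{2,0}a_{0,i})$ of $v_2$, nor can it arise from scalar multiples of $v_0,v_1$, so it must cancel against the Leibniz cross terms coming from $e^{\xi}$ and $1/\tau(\bm t)$. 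Exhibiting these cancellations and then carrying out the Gaussian elimination down to the stated reduced basis is precisely the content of the theorem, and your sketch asserts the outcome rather than deriving it. The fermionic formulation you relegate to a fallback --- in which the coefficients $a_{k,i}$ appear directly as fermionic two-point functions and no cancellation needs to be tracked --- is in fact the proof given in \cite{jwy}, and I would promote it to the main argument.
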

	\begin{Remark}
		The above linear subspace $U_\tau \subset \cH$ and
		the BKP tau-function $\tau(\bm t)$ are determined by each other,
		since the BKP-affine coordinates $\{a_{n,m}\}_{n,m\geq 0}$ satisfy
		the anti-symmetry condition $a_{n,m} = -a_{m,n}$.
	\end{Remark}
	
	\begin{Remark}
		The subspace $U_\tau \subset \cH$ can also be spanned by
		the fermionic $1$-point functions associated to $\tau(\bm t)$,
		see \cite[\S 3.2]{jwy} for details.
	\end{Remark}

	In what follows,
	we will denote by $\Phi_k^B(z)$ the $k$-th basis vector for $U_\tau$:
	\be
	\Phi_k^B(z) =
	z^k + \sum_{i=1}^\infty 2(-1)^{i} ( a_{k,i} -a_{k,0} a_{0,i} )
	z^{-i},
	\qquad k\geq 0.
	\ee
	Then $U_\tau = \text{span}\{ \Phi_k^B(z)\}_{k=0,1,2,\cdots}$.
	Similar to the case of KP hierarchy and the usual Sato Grassmannian,
	in general we are interested in those BKP tau-functions for which
	there are two Kac-Schwarz operators $(P,Q)$ of type $B$,
	such that:
	\be
	\label{eq-KS-conditionPQ}
	\begin{split}
		& P(\Phi_0^B) = 0;\\
		& Q(\Phi_k^B) - c_{k+1}\cdot \Phi_{k+1}^B \in \text{span}\{\Phi_0^B,\Phi_1^B,\cdots,\Phi_{k}^B\}.
	\end{split}
	\ee
	where $\{c_{k+1}\}$ are some nonzero constants.
	Moreover,
	we hope that they satisfy the canonical commutation relation:
	\be
	[P,Q]=1.
	\ee
	If the equation $P(\Phi_0^B)=0$ has a unique solution
	$\Phi_0^B \in 1+ z^{-1}\bC[[z^{-1}]]$,
	then $\{a_{n,m}\}$ and hence $\tau(\bm t)$ are uniquely determined,
	since $\Phi_k^B(z)$ for every $k>0$ can be recursively computed by applying $Q$
	to $\Phi_{k-1}^B(z)$.

	\subsection{Kac-Schwarz operators of type $B$ for generalized BGW models}
	
	In last subsection we have reviewed the formulation of
	Kac-Schwarz operators of type $B$ in terms of BKP-affine coordinates.
	From now on,
	we apply Theorem \ref{thm-wave-affine} and the explicit expressions of
	the BKP-affine coordinates given in \S \ref{sec-affinecoord-gbgw} to
	derive the Kac-Schwarz operators of type $B$ for the generalized BGW tau-functions.

	Recall that the BKP-affine coordinates of $\tau_{\text{BGW}}^{(N)}(\bm t/2)$
	are given by \eqref{eq-BKPaffinecor-gbgw}.
	Thus for each $N \in \bC$,
	the linear subspace $U_{\tau_{\text{BGW}}^{(N)}(\bm t/2)} \subset \cH$
	is spanned by the following basis vectors $\{\Phi_k^{B,(N)}(z)\}_{k\geq 0}$:
	\be
	\begin{split}
		\Phi_k^{(N),B}(z) =&
		z^k + \sum_{i=1}^\infty 2(-1)^{i} ( a_{k,i}^{(N)} -a_{k,0}^{(N)} a_{0,i}^{(N)} )
		z^{-i}\\
		=& z^k + \sum_{i=0}^\infty
		(-1)^i \cdot \frac{i}{i+k}\cdot \frac{\hbar^{k+i}}{2^{3k+3i}\cdot k!\cdot i!}
		\prod_{a=1}^k \theta(a)\prod_{b=1}^i \theta(b).
	\end{split}
	\ee
	where $\theta$ is the function:
	\begin{equation*}
		\theta (z) = (2z-1)^2 -4N^2,
	\end{equation*}
	and we use the convention $\prod_{a=1}^0 \theta(a) =1$.
	Then we have:
	\begin{Proposition}
		Let $P$ and $Q$ be the following operators:
		\be
		\label{eq-def-PQ}
		\begin{split}
			&P=\hbar^3 \Big( (z\pd_z+\half )^2 -N^2\Big)
			\Big( \pd_z -\frac{\hbar}{2z^2} \big(
			(z\pd_z-\frac{1}{2})^2 -N^2
			\big) \Big),\\
			&Q=\hbar^{-2} \big( (z\pd_z-\half)^2 -N^2 \big)^{-1} z.
		\end{split}
		\ee
		Then one has $[P,Q]=\hbar$, and:
		\begin{equation*}
			\begin{split}
				& P(\Phi_k^{(N),B}) =
				\frac{\hbar^3}{4} \theta(k) \cdot
				\Big( k \Phi_{k-1}^{(N),B}
				-\frac{\hbar}{8}  \theta(k-1) \Phi_{k-2}^{(N),B}
				\Big),\\
				& Q(\Phi_k^{(N),B}) = \frac{\hbar^{-2}}{4}
				\theta(k+1)^{-1}  \Phi_{k+1}^{(N),B}
				- \frac{\hbar^{k-1} \cdot \theta(1)\cdot \prod_{j=1}^k\theta(j)}
				{2^{3k+3}\cdot (k+1)!\cdot (\frac{1}{4}-N^2)} \Phi_{0}^{(N),B},
			\end{split}
		\end{equation*}
		for every $k\geq 0$,
		where we use the conventions $\Phi_{-1}^{(N),B} = \Phi_{-2}^{(N),B} =0$.
	\end{Proposition}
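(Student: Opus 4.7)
The plan is to treat $P$ and $Q$ as noncommutative polynomials in the Euler operator $z\partial_z$ and the shift generators $z,\partial_z$, and to exploit Heisenberg-type intertwining relations to establish $[P,Q]=\hbar$ purely algebraically; then I would verify the two action formulas by direct computation on the explicit series for $\Phi_k^{(N),B}$.

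First I would set $A_c := (z\partial_z + c)^2 - N^2$, so that $P = \hbar^3 A_{1/2}\bigl(\partial_z - \tfrac{\hbar}{2z^2}A_{-1/2}\bigr)$ and $Q = \hbar^{-2} A_{-1/2}^{-1} z$. From $(z\partial_z)z = z(z\partial_z+1)$ and $(z\partial_z)\partial_z = \partial_z(z\partial_z-1)$ one obtains the intertwiners $A_c\,z = z\,A_{c+1}$ and $A_c\,\partial_z = \partial_z\,A_{c-1}$. Using the first, $zA_{1/2}=A_{-1/2}z$, so $QP$ collapses to $\hbar z\partial_z - \tfrac{\hbar^2}{2z}A_{-1/2}$ after cancelling $A_{-1/2}^{-1}A_{-1/2}$; using the second together with $\partial_z z = 1 + z\partial_z$, one gets $A_{1/2}\partial_z A_{-1/2}^{-1} z = \partial_z z = 1+z\partial_z$, so $PQ = \hbar(1+z\partial_z) - \tfrac{\hbar^2}{2z}A_{-1/2}$. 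Subtracting yields $[P,Q]=\hbar$.

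For the action on $\Phi_k^{(N),B}$, I would split $\Phi_k^{(N),B}(z) = z^k + \Psi_k^{(N)}(z)$ with $\Psi_k^{(N)}\in z^{-1}\bC[[z^{-1}]]$ and verify each identity separately on the polynomial part and on the tail. Since $A_c$ acts diagonally on monomials via $A_c z^j = \tfrac{1}{4}\theta(j+c+\tfrac{1}{2})\,z^j$, direct computation yields $P(z^k) = \tfrac{\hbar^3}{4}\theta(k)\bigl(k z^{k-1} - \tfrac{\hbar}{8}\theta(k-1) z^{k-2}\bigr)$ and $Q(z^k) = 4\hbar^{-2}\theta(k+1)^{-1}z^{k+1}$, matching the leading behavior of the right-hand sides. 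For the tails, the explicit coefficients in $\Psi_k^{(N)}$ carry the factor $\tfrac{i}{i+k}\prod_{a=1}^k\theta(a)\prod_{b=1}^i\theta(b)$, so each identity reduces to a coefficient-wise check that collapses via the telescoping relation $\prod_{b=1}^{i+1}\theta(b) = \theta(i+1)\prod_{b=1}^i\theta(b)$ combined with elementary manipulations of the rational prefactors $i/(i+k)$.

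The main obstacle is the combinatorial bookkeeping for $Q(\Phi_k^{(N),B})$: applying $A_{-1/2}^{-1}$ to $z\,\Psi_k^{(N)}$ produces a constant term whose coefficient carries the denominator $\theta(1)/4 = \tfrac{1}{4}-N^2$, because $A_{-1/2}$ has eigenvalue $\theta(1)/4$ on the constant function $1$. This constant is not proportional to the tail of $\Phi_{k+1}^{(N),B}$, so it must be reabsorbed as a multiple of $\Phi_0^{(N),B}$ to keep the output inside $U_\tau$; tracking this correction precisely produces the explicit $\Phi_0^{(N),B}$ coefficient stated in the proposition. The degenerate boundary values $k=0,1$ of the $P$-formula are handled by the conventions $\Phi_{-1}^{(N),B} = \Phi_{-2}^{(N),B} = 0$.
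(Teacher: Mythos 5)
Your proposal is correct and takes essentially the same route as the paper, whose proof merely records the monomial actions $P(z^k)$ and $Q(z^k)$ and declares the remaining series check to be direct; the only genuine addition on your side is the explicit verification of $[P,Q]=\hbar$ via the intertwiners $A_c z = z A_{c+1}$ and $A_c \pd_z = \pd_z A_{c-1}$, which the paper leaves entirely implicit. One caveat: your value $Q(z^k)=4\hbar^{-2}\theta(k+1)^{-1}z^{k+1}$ is indeed what the operator as defined produces, so it differs by a factor of $16$ from the coefficient $\tfrac{\hbar^{-2}}{4}\theta(k+1)^{-1}$ of $\Phi_{k+1}^{(N),B}$ stated in the proposition (the paper's own proof writes $\tfrac{\hbar^{-2}}{4}\theta(k+1)\,z^{k+1}$ without the inverse, which is inconsistent as well) --- this mismatch is a normalization slip in the paper rather than an error in your calculation, but you should not claim your value ``matches'' the stated right-hand side without flagging the discrepancy.
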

	\begin{proof}
		Notice that:
		\begin{equation*}
			\begin{split}
				&P(z^k)= \frac{\hbar^3}{4} \theta(k)
				\Big( kz^{k-1} -\frac{\hbar}{8} \theta(k-1) z^{k-2} \Big),\\
				&Q(z^k)= \frac{\hbar^{-2}}{4}\theta(k+1) \cdot z^{k+1}.
			\end{split}
		\end{equation*}
		Then the conclusion can be checked directly using these two identities.
	\end{proof}
	
	This proposition tells that the operators $(P,Q)$ are Kac-Schwarz operators
	of type $B$ for the tau-functions $\tau_{\text{BGW}}^{(N)}(\bm t/2)$.
	
	\subsection{Quantum spectral curve of type $B$ for $\tau_{\text{BGW}}^{(N)}({\bf t}/2)$}
	
	Now we explain that the Kac-Schwarz operator $P$ defined by \eqref{eq-def-PQ} gives
	the quantum spectral curve (in the sense of Gukov-Su{\l}kowski) of
	the plane curve:
	\be
	\label{eq-speccurve-eo-qsc}
	x^2y^2 = x^2 + S^2.
	\ee
	
	We have already seen that $P$ annihilates the BKP-wave function $w_B^{(N)}(\bm t;z)$
	associated to the tau-function $\tau_{\text{BGW}}^{(N)}(\bm t)$
	evaluated at $\bm t=0$:
	\begin{equation*}
		P \big(w_B^{(N)}(0;z)\big) =
		P\big( \Phi_0^{(N),B}(z) \big)  =0,
	\end{equation*}
	where $w_B^{(N)}(0;z)$ coincides with the following principal specialization of
	the tau-function due to Sato's formula \eqref{eq-SatoFormula}:
	\begin{equation*}
		w_B^{(N)}(0;z) =
		\tau_{\text{BGW}}^{(N)}\big(
		-\frac{2}{z}, -\frac{2}{3z^3}, -\frac{2}{5z^5},\cdots \big).
	\end{equation*}
	Moreover,
	in \S \ref{sec-eo} we have shown that the free energy $\log \tau_{\text{BGW}}^{(N)}$
	can be reconstructed from the Eynard-Orantin topological recursion
	on the spectral curve \eqref{eq-speccurve-eo-qsc},
	therefore we only need to check that \eqref{eq-speccurve-eo-qsc} is the semi-classical limit of $P$.
	Denote:
	\be
	\hat x= z\cdot, \qquad\qquad \hat y=\hbar \pd_z,
	\ee
	then $[y,x] = \hbar$, and the operator $P$ can be rewritten as follows:
	\be
	\begin{split}
		P=& \Big( (\hbar z\pd_z+\frac{\hbar}{2} )^2 -S^2\Big)
		\Big( \hbar\pd_z -\frac{1}{2z^2} \big(
		(\hbar z\pd_z-\frac{\hbar }{2})^2 -S^2
		\big) \Big) \\
		=& \Big( (\hat x \hat y+\frac{\hbar}{2} )^2 -S^2\Big)
		\Big( \hat y - \half \hat x^{-2} \big(
		(\hat x \hat y-\frac{\hbar }{2})^2 -S^2
		\big) \Big),
	\end{split}
	\ee
	where $S = \hbar\cdot N$.
	By taking the semi-classical limit,
	we obtain the following function $H(x,y)$
	on the $(x,y)$-plane:
	\be
	H(x,y) = (x^2y^2-S^2) \Big(y-\frac{1}{2x^2}(x^2y^2-S^2)\Big).
	\ee
	Notice that the spectral curve defined by $H(x,y)$ is reducible,
	and thus we are not dealing with the standard case of the topological recursion.
	Nevertheless,
	here we take the second factor in this function which defines a plane curve:
	\begin{equation*}
		2x^2 y = x^2y^2-S^2.
	\end{equation*}
	This classical curve coincides with the spectral curve \eqref{eq-speccurve-eo-qsc}
	for the Eynard-Orantin topological recursion after a shift $y\mapsto y+1$,
	and thus we will regard the Schr\"odinger equation $P\big( \Phi_0^{(N),B}(z) \big) =0$
	as its quantum spectral curve in the sense of Gukov-Su{\l}kowski \cite{gs}.

	\vspace{.2in}
	
	{\em Acknowledgements}.
	We thank the anonymous referees for helpful suggestions.
	We also thank Prof. Jian Zhou for pointing out that
	the correction term in the formula \eqref{eq-npt-viaaffine-gbgw}
	coincide with the Bergman kernel used in \cite{zhou2},
	and thank Prof. Huijun Fan, Prof. Shuai Guo, Prof. Xiaobo Liu, Prof. Xiangyu Zhou for encouragement,
	and Dr. Ce Ji for helpful discussions.
	C. Yang  is supported by the NSFC grants (No. 12288201, 12401079),
the China Postdoctoral Science Foundation (No. 2023M743717),
and the China National Postdoctoral Program for Innovative Talents (No. BX20240407).

\end{document}